\newcommand{\ignore}[1]{}
\newtheorem{theorem}{Theorem}[section]
\newtheorem{example}[theorem]{Example}
\newtheorem{claim}[theorem]{Claim}
\newtheorem{corollary}[theorem]{Corollary}
\newtheorem{lemma}[theorem]{Lemma}
\newtheorem{observation}[theorem]{Observation}
\newtheorem{definition}[theorem]{Definition}
\newtheorem{proposition}[theorem]{Proposition}
\newcommand\rel[1]{\mathbb{#1}}
\newcommand\minabs{\mathrel{\lhd\kern-2pt\lhd}}
\newcommand\tuple[1]{#1}
\newcommand\gG{\mathscr G}
\DeclareMathOperator{\proj}{proj}
\renewcommand\tuple[1]{\mathbf{#1}}
\DeclareMathOperator{\scope}{S}
\DeclareMathOperator\Csp{CSP}
\DeclareMathOperator{\Pol}{Pol}
\DeclareMathOperator{\Aut}{Aut}
\DeclareMathOperator{\V}{\mathcal{V}}
\DeclareMathOperator{\constraints}{\mathcal{C}}
\DeclareMathOperator{\instance}{\mathcal{I}}
\DeclareMathOperator{\injinstances}{Inj}
\newcommand{\relstr}[1]{\mathbb{#1}}
\newcommand{\sA}{\mathbb A}
\newcommand{\sB}{\mathbb B}
\newcommand{\sC}{\mathbb C}
\newcommand{\sF}{\mathbb F}
\newcommand{\sK}{\mathbb K}
\newcommand\sX{\mathbb X}
\newcommand{\fG}{\mathscr G}
\newcommand{\cJ}{\mathcal J}
\author[T.~Nagy]
{Tom\'a\v s Nagy}
	\address{Institut f\"{u}r Diskrete Mathematik und Geometrie, FG Algebra, TU Wien}
	\email{tomas.nagy@email.com}
    \urladdr{http://dmg.tuwien.ac.at/nagy/}
\author[M.~Pinsker]
{Michael Pinsker}
	\address{Institut f\"{u}r Diskrete Mathematik und Geometrie, FG Algebra, TU Wien}
	\email{marula@gmx.at}
    \urladdr{http://dmg.tuwien.ac.at/pinsker/}
\thanks{
This research was funded in whole or in part by the Austrian Science Fund (FWF) [P 32337, I 5948]. For the purpose of Open Access, the authors have applied a CC BY public copyright licence to any Author Accepted Manuscript (AAM) version arising from this submission. This research is also funded by the European Union (ERC, POCOCOP, 101071674). Views and opinions expressed are however those of the author(s) only and do not necessarily reflect those of the  European Union or the European Research Council Executive Agency. Neither the European Union nor the granting authority can be held responsible for them.}
\begin{document}

\title[Strict width for CSPs over homogeneous strucures of finite duality]{Strict width for Constraint Satisfaction Problems over homogeneous strucures of finite duality}

\begin{abstract}
    We investigate the `local consistency implies global consistency' principle of \emph{strict width} among  structures within the scope of the Bodirsky-Pinsker dichotomy conjecture for infinite-domain Constraint Satisfaction Problems (CSPs). Our main result  implies  that for certain CSP templates within the scope of that conjecture, having bounded strict width has a concrete consequence on the expressive power of the template called \emph{implicational simplicity}. This in turn yields an explicit  bound on the \emph{relational width} of the CSP, i.e., the amount of local consistency needed to ensure the satisfiability  of any instance. Our  result applies to first-order expansions of any  homogeneous $k$-uniform hypergraph, but more generally to any  CSP template  under the assumption of finite duality and  general abstract conditions mainly on its  automorphism group. In particular, it overcomes the restriction to binary signatures in the pioneering  work~\cite{Wrona:2020b}.
\end{abstract}

\maketitle

\section{Introduction}

\subsection{CSPs, polymorphisms, and strict width}
Fixed-template \emph{Constraint Satisfaction Problems (CSPs)} are computational problems parameterized by relational structures as follows: for every  relational structure $\relstr A=(A;R_1,\ldots,R_m)$, called a \emph{template} of the CSP,  one obtains a computational problem $\Csp(\relstr A)$ where  given variables $x_1,\ldots,x_n$ and a list of constraints of the form $R_i(x_{i_1},\ldots,x_{i_\ell})$, one has to decide whether or not this list of constraints is \emph{satisfiable}, i.e., whether the variables can be assigned values in $A$ in such a way that all constraints are true statements in $\relstr A$. Examples of problems arising in this way are the problem of solving linear equations over a finite or infinite field (basically by choosing the linear reduct of the field as a template), 3-SAT (where an appropriate  template would consist of all ternary relations on a Boolean  domain), graph  $n$-coloring (where a template is the clique $K_n$), a class of scheduling problems called \emph{temporal CSPs}~\cite{temporalCSP} (where the relations of templates are given by quantifier-free formulas using the order of the rational numbers), \emph{phylogeny  CSPs}~\cite{PhylogenyCSP} (where suitable templates can be provided using quantifier-free formulas using the homogeneous C-relation), Graph-SAT problems~\cite{Schaefer-Graphs} (where templates have relations given by quantifier-free formulas over the countable random graph) or more generally  Hypergraph-SAT problems~\cite{hypergraphs}, or the model-checking problems for sentences in the logic MMSNP of Feder and Vardi~\cite{MMSNP,MMSNP-journal} or the logic  GMSNP~\cite{GMSNP,Bienvenu:2014}  (for which suitable infinite-domain templates can be provided using non-trivial model-theoretic techniques). We refer to~\cite{BodirskyBook} as a source of a huge variety of examples, in particular of the common case where $\relstr A$ has to be chosen infinite.

In the important special case of  \emph{finite-domain CSPs} (i.e., $\Csp(\relstr A)$ where $\relstr A$ has a finite domain), the set of multivariate functions on $\relstr A$ leaving the relations of $\relstr A$ invariant turns out to capture a variety of fundamental properties of the  CSP. Such functions are called \emph{polymorphisms}, and the set of all polymorphisms of a template $\relstr A$ the \emph{polymorphism clone} $\Pol(\relstr A)$ of $\relstr A$. Among the properties reflected in $\Pol(\relstr A)$ are not only the computational time-complexity of the CSP (up to log-space reductions), but also for example its expressibility in first-order logic~\cite{FO1, FO_char1,FO_char2}, its solvability by certain algorithmic principles such as local consistency checking (it follows from~\cite{Jeavsons98}) (more precisely, solvability by a Datalog program, a property called \emph{bounded width}), or a strengthening of the latter named \emph{bounded strict width}~\cite{FederVardi} where establishing local consistency of an instance yields partial solutions which are also globally consistent, i.e., extend to a global solution.

Strikingly, many of the properties of a finite-domain  CSP alluded to are already captured by an even more abstract invariant than the polymorphism clone $\Pol(\rel A)$ of a template $\relstr A$, namely the \emph{identities} satisfied therein~\cite{BJK05}. The study of the complexity of  CSPs via identities has been coined the \emph{algebraic approach}, and indeed has allowed for the application of deep methods and results from universal algebra. Most importantly, the algebraic approach culminated in  the celebrated resolution, independently due to Bulatov~\cite{Bulatov:2017} and Zhuk~\cite{Zhuk:2017,Zhuk:2020}, of the dichotomy conjecture  of Feder and Vardi~\cite{FederVardi}  characterizing the class P among finite-domain CSPs: if P\ $\neq$\ NP, then the finite templates with a polynomial-time solvable CSP are precisely those possessing some  polymorphism $w$ of some arity $n\geq 2$ which satisfies the identities $w(x,\ldots,x,y)\approx\cdots\approx w(y,x,\ldots,x)$ for all $x,y$; any such polymorphism is called a \emph{weak near-unanimity (wnu) polymorphism}. Already before that, the finite-domain templates with a CSP solvable by local consistency checking had been identified as those having wnu polymorphisms of \emph{all} arities  $n\geq 3$~\cite{BartoKozikBoundedWidth,MarotiMcKenzie,Maltsev-Cond}. As another example, central to the present article,  those finite CSP templates with bounded strict width have been shown to be precisely those with a \emph{near-unanimity polymorphism}, which is a polymorphism $f$ satisfying $x\approx f(x,\ldots,x)\approx f(x,\ldots,x,y)\approx\cdots\approx f(y,x,\ldots,x)$ for all $x,y$~\cite{FederVardi}. 

Despite the fact that many natural CSPs can only be given by an infinite template (see the examples and references above), our understanding of such CSPs lags way behind what we know for finite domains.  This is not only due to  the increasing technical difficulties brought along by infinity, but also due to the failure of identities to capture fundamental properties as neatly. On the positive side, early on Bodirsky and Ne\v{s}est\v{r}il established the standard model-theoretic assumption of \emph{$\omega$-categoricity}  on the template $\relstr A$ (stating that $\relstr A$ is ``close to finite'' in that it has only finitely many $k$-tuples for every $k\geq 1$ up to the equivalence given by the action of its automorphism group on  $k$-tuples) 
as a sufficient, natural, and well-studied  condition for $\Pol(\rel A)$ to capture $\Csp(\rel A)$  similarly as in the finite case~\cite{BN}. Moreover, this result  was lifted  in~\cite{Topo-Birk} to the abstract topological-algebraic structure of $\Pol(\rel A)$, an intermediate step towards identities. However, there are also various negative results for the algebraic structure  of $\Pol(\rel A)$ alone (given precisely by the identities satisfied therein) within the class of $\omega$-categorical structures concerning time complexity~\cite{Hrushovski-monsters,SymmetriesNotEnough, BartoBodorKozikMottetPinsker} or solvability of a CSP by local consistency checking~\cite{TopoRelevant,TopoRelevant-TAMS,Rydval:2020}, the latter even within the important class of  temporal CSPs. It is therefore  remarkable that the aforementioned characterization of bounded strict width via near-unanimity polymorphisms carries over unconditionally and almost verbatim (using \emph{quasi near-unanimity polymorphisms}) to all $\omega$-categorical templates, by a result of Bodirsky and Dalmau~\cite{BodirskyDalmau}.

The notion of strict width of a CSP was introduced by Feder and Vardi in~\cite{FederVardi}. For $k\leq \ell$, a CSP template $\relstr A$ has \emph{relational width $(k,\ell)$} if  the \emph{$(k,\ell)$-minimality algorithm} on any instance of $\Csp(\relstr A)$ detects a contradiction whenever the instance is unsatisfiable; this   algorithm roughly keeps lists of partial solutions for all $\ell$-element subsets of the variables of the instance and establishes consistency of these lists via $k$-element subsets. \emph{Bounded width} simply means relational width $(k,\ell)$ for some $(k,\ell)$.   The template $\relstr A$ has \emph{strict width} $k$ if there exists $\ell>k$ such that after running  $(k,\ell)$-minimality without deriving a contradiction, any partial solution consistent with the remaining lists of partial solutions actually expands to a solution of the entire instance; \emph{bounded strict width} means strict width $k$ for some $k$. In the Artificial Intelligence literature, this property would be paraphrased as  `strong $k$-consistency implies global consistency' and has been studied in particular in temporal and spatial reasoning~\cite{dechter}; the property not only enjoys the mentioned algebraic characterization via quasi near-unanimity polymorphisms, but can equivalently be described by a structural property of the template called \emph{decomposability} which for finite domains is due to Baker and Pixley~\cite{BakerPixley}; see~\cite[Chapter~8]{BodirskyBook} for the $\omega$-categorical case. A natural example of a CSP with strict width $(2,3)$ is the digraph acyclicity problem, which can be modelled as the CSP of the template $(\mathbb Q,<)$: the $(2,3)$-minimality algorithm will basically compute the transitive closure of the constraints imposed on the variables of an instance, and if it does not find a cycle, then any local  solution it kept for triples of variables extends to a global one.

\subsection{Strict width within the Bodirsky-Pinsker conjecture}

The Bodirsky-Pinsker conjecture (which has formulations of evolving strength~\cite{BPP-projective-homomorphisms,BartoPinskerDichotomy,Topo,BKO+17,BKO+19}) identifies a tame subclass of the class of  $\omega$-categorical templates  where polynomial-time solvability of a CSP might, for all we know,  correspond to the satisfaction of various specific identities such as a slight modification of the wnu identities. The constraint relations of templates $\relstr A$ of this class are given by quantifier-free formulas using the relations of a ``ground structure'' $\relstr B$ which is \emph{finitely bounded} and \emph{homogeneous}; the former meaning that the finite  substructures of $\relstr B$ can be described as precisely those avoiding a fixed finite set of forbidden patterns, and the latter meaning that  any first-order property of any tuple in $\relstr B$ is completely determined by the relations that hold on it (equivalently, $\relstr B$ has \emph{quantifier elimination}, i.e., all first-order information in $\relstr B$ is already encoded in its language). Standard examples of such  ground structures $\relstr B$ are the order of the rational numbers, the random graph, random hypergraphs, or the random partial order. The CSPs of templates $\relstr A$ arising in this way are problems which basically ask for the existence of a finite linear order,  graph, hypergraph, partial order, or generally  a finite substructure of $\relstr B$,  on the variables of an instance  such that the constraints, which are by definition expressed in the language of $\sB$,  are satisfied. We refer to~\cite{infinitesheep} for a recent account of this kind of computational  problems.

Wrona in pioneering work~\cite{Wrona:2020a, Wrona:2020b} set out to investigate the notion of strict width within the scope of the Bodirsky-Pinsker  conjecture with the goal of determining the amount of consistency required to ensure the existence of a solution of any instance. However, presumably for technical reasons, his achievements  came at the price of   additional rather artificial  assumptions: $\relstr B$ was required to be what is called there a \emph{liberal binary core}, meaning that the relations of $\sB$ are all binary, $\sB$ has all binary relations which are first-order definable from $\sB$ among its relations, and $\sB$ embeds every relational structure $\sC$ of size $3,\dots,6$ unless some substructure of $\sC$ of size $2$ does not embed into $\sB$.

The goal of the present work is to eliminate these shortcomings, obtaining Wrona's results in particular for Hypergraph-SAT problems, i.e., CSPs where the relations of the template are first-order definable over the $k$-uniform hypergraph, for any $k\geq 3$. In particular, we overcome the restriction to binary relations. Our general theorem firstly  assumes that the finite bounds of the ground structure $\relstr B$ are closed under homomorphisms; we say that $\relstr B$ has  \emph{finite duality}. This is an important case within the scope of the Bodirsky-Pinsker conjecture, and  includes e.g.~the templates modelling Graph-SAT and Hypergraph-SAT problems, or the model-checking problems for MMSNP and GMSNP mentioned above. Secondly, we assume that $\sB$ be \emph{$k$-neoliberal}, meaning that all its relations are $k$-ary and in a strong sense the entire structure of $\relstr B$ is completely reflected in these relations; a bit more precisely, there are no non-trivial first-order definable relations of arity smaller than $k$, all first-order definable relations or arity $k$ are present in $\sB$, all first-order definable  relations of higher arity are Boolean combinations of those of arity $k$, and that there are no algebraic dependencies between the elements of any $k$-tuple.  The CSP literature encompasses  several examples of  templates   over $k$-neoliberal ground structures, e.g. many $k$-uniform hypergraphs for  $k\geq 3$~\cite{hypergraphs} or the homogeneous C-relation (with $k=3$)~\cite{PhylogenyCSP}.

Similarly as in~\cite{Wrona:2020b}, we prove that any CSP template $\sA$ which is a first-order expansion of a ground structure $\relstr B$ as above and which has bounded strict width  has limited expressive power in the form of \emph{implicational simplicity}. We will show that this condition is equivalent to saying that $\sA$ does not primitively positively define an injective relation which entails a formula of the form $R(x_1,\dots,x_k)\Rightarrow R(x_{i_1},\dots,x_{i_k})$ for some relation $R$ primitively positively definable from $\sA$. 

\begin{restatable}{theorem}{implsimple}\label{thm:implsimple}
Let $k\geq 3$, let $\sB$ be $k$-neoliberal, and suppose that $\sB$ has finite duality. Suppose that $\relstr A$ is a CSP template which is an expansion of $\sB$ by first-order definable relations (i.e., Boolean combinations of the relations of $\sB$). If $\relstr A$ has bounded strict width, then it is implicationally simple on injective instances.
\end{restatable}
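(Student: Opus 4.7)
My approach is contrapositive. Under $\omega$-categoricity, the Bodirsky--Dalmau theorem characterizes bounded strict width of $\relstr A$ by the existence, for arbitrarily large arities $n$, of a \emph{quasi near-unanimity} polymorphism $f \in \Pol(\relstr A)$: an $n$-ary $f$ whose specializations $f(y,x,\dots,x), f(x,y,x,\dots,x),\dots,f(x,\dots,x,y)$ all lie in a single $\Aut(\relstr A)$-orbit together with $f(x,\dots,x)$; equivalently, every pp-definable relation of $\relstr A$ is $n$-decomposable, i.e.\ membership is determined by projection onto $(n-1)$-subsets of coordinates. I take $n$ large enough relative to $k$ and the arity $m$ of the alleged witness to non-implicational-simplicity, aiming for a contradiction.

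Assume for contradiction that there are pp-definable $T \subseteq A^m$ (injective) and $R$, together with indices $(i_1,\dots,i_k)\subseteq\{1,\dots,m\}$, such that $T$ entails $R(x_1,\dots,x_k)\Rightarrow R(x_{i_1},\dots,x_{i_k})$ nontrivially, i.e.\ there exists an injective ``forbidden'' tuple $\bar b$ realizing $R(b_1,\dots,b_k)\wedge \neg R(b_{i_1},\dots,b_{i_k})$, necessarily lying outside $T$. Pick also an injective $\bar a \in T$ satisfying $R(a_1,\dots,a_k)$ (and hence $R(a_{i_1},\dots,a_{i_k})$). The core step is to construct injective tuples $\bar t^{(1)},\dots,\bar t^{(n)} \in T$ whose entries on the coordinates $\{1,\dots,k\}\cup\{i_1,\dots,i_k\}$ majority-vote for the corresponding entries of $\bar b$ and whose entries elsewhere majority-vote for those of $\bar a$. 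Applying the qnu polymorphism $f$, the output $\bar t = f(\bar t^{(1)},\dots,\bar t^{(n)})$ lies in $T$ (since $f$ preserves $T$), but its $\sB$-type on $\{1,\dots,k\}\cup\{i_1,\dots,i_k\}$ is, up to an $\Aut(\relstr A)$-action, the $\sB$-type of $\bar b$. Since $R$ is preserved by both $f$ and $\Aut(\relstr A)$, this forces $R(t_1,\dots,t_k)\wedge \neg R(t_{i_1},\dots,t_{i_k})$, contradicting $\bar t \in T$.

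The technical heart, and the step that distinguishes this proof from Wrona's binary-signature argument in~\cite{Wrona:2020b}, is the simultaneous realization of the $\bar t^{(j)}$ as injective members of $T$ that implement the desired voting pattern on the relevant coordinate block. Here both structural hypotheses on $\relstr B$ enter crucially. $k$-neoliberality lets me prescribe the $\sB$-type freely and independently on each block of $k$ distinct coordinates: all nontrivial pp-definable $\sB$-relations are $k$-ary and present in the signature, those of smaller arity are trivial, those of larger arity are Boolean combinations of the former, and there are no algebraic dependencies between the entries of a $k$-tuple. Finite duality then reduces global realizability of such locally prescribed injective patterns to verifying the absence of a fixed finite set of homomorphic images of the bounds. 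Together these two features replace the liberal-binary-core machinery of~\cite{Wrona:2020b} and extend the argument to $k$-uniform hypergraph templates and, more generally, to all $k$-neoliberal ground structures of finite duality. The main obstacle I anticipate is exactly this gluing step: identifying which local $\sB$-types are mutually compatible with the pp-formula defining $T$ and with the finite-duality bounds, and tracking that the qnu identities, applied coordinate-wise and modulo $\Aut(\relstr A)$, correctly transport the $\sB$-type of $\bar b$ onto the output $\bar t$ on the chosen coordinate block.
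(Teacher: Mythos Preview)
Your core step cannot be carried out. For the quasi near-unanimity $f$ to produce an output whose restriction to the block $B=\{1,\dots,k\}\cup\{i_1,\dots,i_k\}$ lies in the $\Aut(\relstr B)$-orbit of $\bar b|_B$, each row $i\in B$ of the input matrix must look like $(b_i,\dots,b_i,*,b_i,\dots,b_i)$ with at most one deviation --- that is the only shape the qnu identities control. Summing over rows, at most $|B|\le 2k$ of your $n$ columns can deviate from $\bar b$ anywhere on $B$; since you take $n$ large, some column $\bar t^{(j)}$ must agree with $\bar b$ on all of $B$. That column lies in $T$ yet witnesses $R(b_1,\dots,b_k)\wedge\neg R(b_{i_1},\dots,b_{i_k})$, contradicting the very entailment that defines $T$. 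So the tuples $\bar t^{(1)},\dots,\bar t^{(n)}\in T$ you need do not exist, and taking $n$ large makes this worse, not better. The ``gluing via $k$-neoliberality and finite duality'' you anticipate as the main obstacle is not the issue; the obstruction is purely combinatorial and arises before any structural hypothesis on $\relstr B$ is invoked.

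The paper's argument is structurally different and never attempts a one-shot majority vote. From a cycle in the injective implication graph it composes to a single $(C_1,C_1)$-implication, passes to a \emph{complete} injective $(C,C)$-implication $\psi$ (\Cref{lemma:kcomplete}), and then uses finite duality --- not at the qnu stage, but to show that in the power $\rho=\psi^{\circ d}$ every $CC$- and every $DD$-mapping is realized (\Cref{claim:finduality}). After identifying the shared coordinates one obtains a \emph{critical} formula on exactly $k{+}1$ variables (\Cref{def:kcritical}). The contradiction with bounded strict width is then \Cref{lemma:kcritical}, whose proof is an induction on the number of $D$-columns: one column at a time is converted from $D$ to $C$ by a single application of the qnu identity on the $k{-}1$ \emph{constant} middle rows of the $(k{+}1)$-ary critical relation. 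This incremental use of the qnu, enabled by the very special shape of the critical relation, is precisely what circumvents the pigeonhole obstruction that kills your direct approach.
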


As a corollary of~\Cref{thm:implsimple} and results from~\cite{SmoothApproximations, hypergraphs}, we obtain a bound on the amount of local consistency needed to solve CSPs of the templates under consideration.

\begin{restatable}{corollary}{kmain}\label{thm:kmain}
Let $k\geq 3$, let $\sB$ be $k$-neoliberal, and suppose that $\sB$ has finite duality. Suppose that $\relstr A$ is a CSP template which is an expansion of $\sB$ by first-order definable relations (i.e., Boolean combinations of the relations of $\sB$). If $\relstr A$ has bounded strict width, then it has relational width $(k,\max(k+1,b_{\sB}))$.
\end{restatable}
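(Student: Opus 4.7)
The plan is to derive the corollary as a relatively direct consequence of \Cref{thm:implsimple} combined with the quantitative transfer from implicational simplicity to relational width already established in \cite{SmoothApproximations, hypergraphs}. The first step is simply to apply \Cref{thm:implsimple} to $\sA$ to conclude that $\sA$ is implicationally simple on injective instances. The real work is then to translate this structural property into the concrete numerical bound $(k, \max(k+1, b_\sB))$ on the relational width.

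To this end, I would follow the strategy developed in \cite{SmoothApproximations, hypergraphs}, whose outcome can be phrased as follows: a first-order expansion of a finitely bounded homogeneous structure which is implicationally simple on injective instances has relational width controlled by the maximum arity of the template and by the size of the forbidden homomorphism patterns of the ground structure. Concretely, given an instance $\instance$ of $\Csp(\sA)$ on which the $(k, \max(k+1, b_\sB))$-minimality algorithm has run without detecting a contradiction, one first reduces to the injective case by collapsing variables forced equal by the propagated lists; here $k$-neoliberality of $\sB$ is essential, since it guarantees that the equality type of any $k$-tuple of variables is already recorded by the maintained $k$-ary constraints and that no shorter non-trivial identifications are hidden in the instance. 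On the resulting injective quotient, implicational simplicity forbids non-trivial implications between pp-definable constraints on a single tuple, which in turn makes the locally consistent lists agree on a single, canonical choice of constraint for each $k$-tuple of variables. Finally, finite duality of $\sB$ ensures that any potential global obstruction to extending this information to a homomorphism into $\sA$ would manifest as one of the forbidden patterns of size at most $b_\sB$, and such a pattern would already be witnessed by the $\max(k+1, b_\sB)$-element lists maintained by the algorithm — a contradiction.

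The main obstacle will be verifying that the results of \cite{SmoothApproximations, hypergraphs} actually apply under our abstract hypotheses rather than only under the more restrictive ones in which they were originally formulated, and in particular that implicational simplicity on injective instances (as opposed to a stronger everywhere-version) suffices to drive the argument. For this one must check that the reduction to the injective case preserves both satisfiability and the local consistency of the surviving lists, and that no new implications between constraints are introduced in the quotient; the $k$-neoliberality of $\sB$ together with the closure of the bounds under homomorphisms (finite duality) should make each of these verifications routine, so that the width bound $(k, \max(k+1, b_\sB))$ drops out once the transfer theorem is correctly invoked.
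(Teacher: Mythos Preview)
Your proposal is correct and follows essentially the same route as the paper: apply \Cref{thm:implsimple} to obtain implicational simplicity on injective instances, reduce general instances to injective ones, and then use implicational simplicity together with finite duality to push $(k,\max(k+1,b_\sB))$-minimality to a solution. Two small clarifications: the transfer from implicational simplicity on injective instances to the width bound is not literally stated in \cite{SmoothApproximations,hypergraphs} but is proved in the present paper as \Cref{prop:implsimple} (using \Cref{lemma:core-bwidth}, which is the part close to \cite{SymmetriesEnough}); and the reduction to injective instances requires a binary injection in $\Pol(\sA)$, which the paper obtains from bounded strict width via \Cref{cor:libcores-purelyinj} rather than from $k$-neoliberality alone.
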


\subsection{Related work}

In~\cite{SmoothApproximations}, bounded width was characterized for CSP templates over several ground structures (in a binary signature, such as the random graph) using identities satisfied by \emph{canonical polymorphisms}; this amounts to assuming weaker identities than quasi near-unanimity identities under the assumption of their satisfaction by special polymorphisms (which cannot be used in this way over other ground structures such as the order of the rationals). The conditions given there were applied  in~\cite{SymmetriesEnough} to obtain a general  upper bound on the relational width of CSP templates satisfying them, and the bound was shown to be optimal for many templates. The bound on the relational width in~\Cref{thm:kmain} coincides with the bound proven in~\cite{SymmetriesEnough} for CSP templates which posses canonical \emph{pseudo-totally symmetric polymorphisms} of all arities $n\geq 3$. The results in~\cite{SymmetriesEnough} have been extended to CSP templates over finitely bounded homogeneous $k$-uniform hypergraphs for  $k\geq 3$ in~\cite{hypergraphs}.

Bounded strict width for $\omega$-categorical CSP templates was first studied in~\cite{BodirskyDalmau}  where the above-mentioned algebraic characterization was obtained. In~\cite{Wrona:2020a,Wrona:2020b}, an upper bound on the relational width of first-order expansions of certain binary structures with bounded strict width was given.

For finite-domain CSP templates, different algebraic conditions which are stronger than the ones characterizing bounded width and weaker than the ones characterizing bounded strict width have been  considered~\cite{Jonssonterms}. These conditions can also lifted to the $\omega$-categorical case similarly as in the case of near-unanimity polymorphisms -- it is therefore natural to ask if a similar bound as in~\Cref{thm:kmain} can be obtained  for templates satisfying them.

\section{Preliminaries}\label{sect:Prelims}

\subsection{Relational structures and permutation groups}

Let $\sB$ be a relational structure, and let $\phi$ be a first-order formula over the signature of $\sB$. We identify the interpretation $\phi^{\sB}$ of $\phi$ in $\sB$ with the set of satisfying assignments for $\phi^{\sB}$. Let $V$ be the set of free variables of $\phi$, and let $\tuple u$ be a tuple of elements of $V$. We define $\proj_{\tuple u}(\phi^{\sB}):=\{f(\tuple u)\mid f\in \phi^{\sB}\}$. 
A \emph{first-order expansion} of a structure $\sB$ is an expansion of  $\sB$ by relations which are first-order definable in $\sB$, i.e., of the form $\phi^\sB$.

A first-order formula  is called  \emph{primitive positive} (pp) if it is built exclusively from  atomic formulae, existential quantifiers, and conjunctions. A  relation is \emph{pp-definable} in a relational structure  if it is first-order definable by a pp-formula. 

\begin{definition}\label{defn:fb}
Let $\sB$ be a structure over a finite relational signature $\tau$. We say that $\sB$ is \emph{finitely bounded} if there exists a finite set $\mathcal{F}$ of finite $\tau$-structures such that for every finite $\tau$-structure $\sC$, $\sC$ embeds to $\sB$ if no $\sF\in\mathcal{F}$ embeds to $\sC$.
Let $\mathcal{F}_{\sB}$ be a set witnessing the finite boudnedness of $\sB$ such that the size of the biggest structure contained in $\mathcal{F}_{\sB}$ is the smallest possible among all choices of $\mathcal F$; we write $b_{\sB}$ for this size.

We say that $\sB$ has a \emph{finite duality} if it is finitely bounded with $\mathcal F$ being closed under homomorphic images.
\end{definition}

Let $\gG$ be a permutation group acting on a set $A$, let $k\geq 1$,  and let $\tuple a\in A^k$. The \emph{orbit} of $\tuple a$ under $\gG$ is the set $\{g(\tuple a)\mid g\in \gG\}$. We say that $\gG$ is \emph{oligomorphic} if for every $k\geq 1$, $\gG$ has only finitely many orbits of $k$-tuples in its action on $A$. We say that a relational structure $\sA$ is \emph{$\omega$-categorical} if its automorphism group is oligomorphic.
Let $k\geq 1$. We say that $\fG$ is \emph{$k$-transitive} if it has only one orbit in its action on injective $k$-tuple of elements of $A$. 
$\fG$ is \emph{$k$-homogeneous} if for every $\ell\geq k$, the orbit of every $\ell$-tuple under $\fG$ is uniquely determined by the orbits of its $k$-subtuples.
$\fG$ has \emph{no $k$-algebraicity} if the only fixed points of any stabilizer of $\fG$ by $k-1$ elements are these elements themselves, and we say that $\gG$ has \emph{no algebraicity} if it has no $k$-algebraicity, for every $k\geq 1$.
The \emph{canonical $k$-ary structure of $\fG$} is the relational structure on $A$ that has a relation for every orbit of $k$-tuples under $\fG$.

\begin{definition}
    Let $k\geq 2$, and let $\fG$ be a permutation group acting on a set $A$. 
    We say that $\fG$ is \emph{$k$-neoliberal} if it is oligomorphic, $(k-1)$-transitive, $k$-homogeneous, and has no $k$-algebraicity.

    A relational structure $\sB$ is \emph{$k$-neoliberal} if it is the canonical $k$-ary structure of a $k$-neoliberal permutation group.
\end{definition}

We remark that if a relational structure $\sB$ is $k$-neoliberal, then: by $(k-1)$-transitivity for every $\ell<k$, the only relations of arity $\ell$ which are first-order definable from $\sB$ are Boolean combinations of equalities and non-equalities; by definition, all $k$-ary first-order definable relations are the relations of $\sB$ itself; and it follows from  $k$-homogeneity and oligomorphicity that any relation of arity $\ell> k$ first-order definable from $\sB$ is a Boolean combination of the ($k$-ary) relations of $\sB$.

The notion of $k$-neoliberality is inspired by the notion of liberal binary cores introduced by Wrona in~\cite{Wrona:2020b} -- every liberal binary core is $2$-neoliberal. However, the opposite is not true -- the automorphism group of the universal homogeneous $\sK_3$-free graph (i.e., the unique homogeneous graph having the bound $\mathcal F=\{\sK_3\}$ in Definition~\ref{defn:fb}) is easily seen to be $2$-neoliberal, whence its expansion by the equality relation and by the relation containing all pairs of distinct  elements which are not connected by an edge is $2$-neoliberal, but it is a binary core which is not liberal. This is because a liberal binary core is supposed to be finitely bounded and the set of forbidden bounds should not contain any structure of size $k$ whenever $3\leq k\leq 6$. However, $\sK_3$ is a $3$-element graph which does not embed into $\sB$ but all its subgraphs of size at most $2$ do, and hence $\sK_3$ has to be contained in any set of forbidden bounds for the universal homogeneous $\sK_3$-free graph.

\begin{example}
For every $k\geq 2$, the automorphism group of the universal homogeneous $k$-uniform hypergraph is $k$-neoliberal.

Let $\sC_\omega^2$ be the countably infinite equivalence relation where every equivalence class contains precisely $2$ elements. Then $\Aut(\sC_\omega^2)$ is oligomorphic, $1$-transitive, and $2$-homogeneous, but not $2$-neoliberal. Indeed, for any element $a$ of $\sC_\omega^2$, the stabilizer of $\Aut(\sC_\omega^2)$ by $a$ fixes also the unique element of $\sC_\omega^2$ which is in the same equivalence class as $a$.

On the other hand, the automorphism group of the countably infinite equivalence relation with equivalence classes of  fixed size $m>2$ is easily seen to be $2$-neoliberal.
\end{example}

Note that if $\gG$ is a permutation group acting on a set $A$ which is $k$-neoliberal for some $k\geq 2$ and which is not equal to the group of all permutations on $A$, then the number $k$ is uniquely determined. Indeed, $k=\min \{i\geq 1\mid \gG\text{ is not }i\text{-transitive}\}$.

\subsection{Constraint satisfaction problems and bounded width}

For $k\geq 1$, we write $[k]$ for the set $\{1,\ldots,k\}$. 
Let $k,\ell\geq 1$, let $A$ be a non-empty set, let $i_1,\ldots,i_{\ell}\in[k]$, and let $R\subseteq A^k$ be a relation. We write $\proj_{(i_1,\ldots,i_{\ell})}(R)$ for the $\ell$-ary relation $\{(a_{i_1},\ldots,a_{i_{\ell}})\mid (a_1,\ldots,a_k)\in R\}$.
For a tuple $\tuple t\in A^k$, we write $\scope(\tuple t)$ for its \emph{scope}, i.e., for the set of all entries of $\tuple t$.
We write $I_k^A$ for the relation containing all injective $k$-tuples of elements of $A$. 

Let $\sA$ be a relational structure. An \emph{instance of $\Csp(\sA)$} is a pair $\mathcal \instance=(\V,\constraints)$, where $\V$ is a finite set of variables and $\mathcal C$ is a finite set of \emph{constraints}; for every constraint $C\in \mathcal C$, there exists a non-empty set $U\subseteq \V$ called the \emph{scope} of $C$ such that $C\subseteq A^U$, and such that $C$ can be viewed as a relation of $\sA$ by totally ordering $U$; i.e., there exists an enumeration $u_1,\ldots,u_k$ of the elements of $U$ and a $k$-ary relation $R$ of $\sA$ such that for all $f\colon U\to A$, it holds that  
$f\in C$ if, and only if, $(f(u_1),\dots,f(u_k))\in R$. The relational structure $\sA$ is called the~\emph{template} of the CSP.
A~\emph{solution} of a CSP instance~$\instance=(\V,\constraints)$ is a mapping $f\colon \V\rightarrow A$ such that for every $C\in\constraints$ with scope $U$, $f|_U\in C$.

An instance $\instance=(\V,\constraints)$ of $\Csp(\sA)$ is \emph{non-trivial} if it does not contain any empty constraint; otherwise, it is \emph{trivial}.
Given a constraint $C\subseteq A^U$ and a tuple $\tuple{v}\in U^k$ for some $k\geq 1$, the \emph{projection of $C$ onto $\tuple{v}$} is defined by $\proj_{\tuple{v}}(C):=\{f(\tuple{v})\colon f\in C\}$.
Let $U\subseteq \V$.

We denote by $\Csp_{\injinstances}(\sA)$ the restriction of $\Csp(\sA)$ to those instances where for every constraint $C$ and for every pair of distinct variables $u,v$ in its scope, $\proj_{(u,v)}(C)\subseteq I_2^A$.

\begin{definition}\label{def:minimality}
Let $1\leq k\leq \ell$. We say that an instance $\instance=(\V,\constraints)$ of $\Csp(\sA)$ is \emph{$(k,\ell)$-minimal} if both of the following hold:
\begin{itemize}
\item the scope of every tuple of elements of $\V$ of length at most $\ell$ is contained in the scope of some constraint in $\constraints$;
\item for every $m\leq k$, for every tuple $\tuple u\in \V^m$, and for all constraints $C_1, C_2 \in \constraints$ whose scopes contain the scope of $\tuple u$, the projections of $C_1$ and $C_2$ onto $\tuple u$ coincide.
\end{itemize}
We say that an instance $\instance$ is \emph{$k$-minimal} if it is $(k,k)$-minimal.
\end{definition}

Let $1\leq k$. If $\instance=(\V,\constraints)$ is a $k$-minimal instance and $\tuple u$ is a tuple of variables of length at most $k$, then there exists a constraint in $\constraints$ whose scope contains the scope of $\tuple u$, and all the constraints who do have the same projection onto $\tuple u$.
We write $\proj_{\tuple u}(\instance)$ for this projection, and call it the \emph{projection of $\instance$ onto $\tuple u$}.

Let $1\leq k\leq \ell$, let $\sA$ be an $\omega$-categorical relational structure, and let $p$ denote the maximum of $\ell$ and the maximal arity of the relations of $\sA$. 
Clearly not every instance $\instance=(\V,\mathcal{C})$ of $\Csp(\sA)$ is $(k,\ell)$-minimal.
However, every instance $\instance$ is \emph{equivalent} to an $(k,\ell)$-minimal instance $\instance'$ of $\Csp(\sA')$ where $\sA'$ is the expansion of $\sA$ by all at most $p$-ary relations pp-definable in $\sA$ in the sense that $\instance$ and $\instance'$ have the same solution set.
In particular we have that if $\instance'$ is trivial, then $\instance$ has no solutions. Moreover, $\Csp(\sA')$ has the same complexity as $\Csp(\sA)$ and the instance $\instance'$ can be computed from $\instance$ in polynomial time (see e.g., Section~2.3 in~\cite{SymmetriesEnough} for the description of the algorithm).

\begin{definition}
Let $1\leq k\leq \ell$. 
A relational structure $\sA$ has \emph{relational width $(k,\ell)$} if every non-trivial $(k,\ell)$-minimal instance equivalent to an instance of $\Csp(\sA)$ has a solution. $\sA$ has \emph{bounded width} if it has relational width $(k,\ell)$ for some $k,\ell$.
\end{definition}

If $\sA$ has relational width $(k,\ell)$, then we will also say that $\Csp(\sA)$ has relational width $(k,\ell)$. We say that $\Csp_{\injinstances}(\sA)$ has relational width $(k,\ell)$ if every non-trivial $(k,\ell)$-minimal instance equivalent to an instance of $\Csp_{\injinstances}(\sA)$ has a solution.

\begin{definition}
    Let $\sA$ be a relational structure, and let $k\geq 1$. We say that $\sA$ has \emph{strict width $k$} if there exists $\ell>k$ such that for every $(k,\ell)$-minimal instance $\instance=(\V,\constraints)$ of $\Csp(\sA)$, for every $U\subseteq \V$, and for every mapping $g\colon U\rightarrow A$ such that $g\in C|_U$ for every $C\in\constraints$, there exists a solution $f\colon \V\rightarrow A$ of $\instance$ such that $f|_U=g$. 
    We say that $\sA$ has \emph{bounded strict width} if it has strict width $k$ for some $k\geq 1$.
\end{definition}

\subsection{Polymorphisms}

Let $A$ be a set, let $k,n\geq 1$, and let $R\subseteq A^k$. A function $f\colon A^n\rightarrow A$ \emph{preserves} the relation $R$ if for all tuples $(a^1_1,\dots,a^1_k),\dots,(a^n_1,\dots,a^n_k)\in R$, it holds that the tuple $(f(a^1_1,\dots,a^n_1),\dots,f(a^1_k,\dots,a^n_k))$ is contained in $R$ as well. 
The function is a \emph{polymorphism} of a relational structure $\sA$ if it preserves all relations of $\sA$. The set of all polymorphisms of $\sA$ is denoted by $\Pol(\sA)$. The importance of polymorphisms is based on the fact that for $\omega$-categorical $\sA$, the pp-definable relations are precisely those preserved by all polymorphisms of $\sA$~\cite{BN}.

Strict width can be characterized algebraically  for CSPs over $\omega$-categorical templates $\sA$  by the existence of certain polymorphisms as follows. We say that $f\in\Pol(\sA)$ is \emph{oligopotent} if for every finite subset $B\subseteq A$, there exists $\alpha\in\Aut(\sA)$ such that $f(b,\ldots,b)=\alpha(b)$ for every $b\in B$. We say that $f$ is a \emph{quasi near-unanimity operation} if $f(a,\dots,a)=f(a,b,\dots,b)=f(b,a,b,\dots,b)=\dots=f(b,\dots,b,a)$ for every $a,b\in A$; $f$ is a local near-unanimity operation on a set $B\subseteq A$ if it satisfies $a=f(a,\dots,a)=f(a,b,\dots,b)=f(b,a,b,\dots,b)=\dots=f(b,\dots,b,a)$ for every $a,b\in B$.

\begin{theorem}[\cite{BodirskyDalmau}]\label{thm:characterization-strictwidth}
    Let $\sA$ be an $\omega$-categorical relational structure, and let $k\geq 2$. Then the following are equivalent.

    \begin{itemize}
        \item $\sA$ has strict width $k$,
        \item $\Pol(\sA)$ contains an oligopotent quasi near-unanimity operation of arity $k+1$,
        \item for every finite subset $B\subseteq A$, $\Pol(\sA)$ contains a local near-unanimity operation on $B$ of arity $k+1$.
    \end{itemize}
\end{theorem}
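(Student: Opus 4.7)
The statement is the infinite-domain analogue of Baker--Pixley's characterisation of strict width due to Bodirsky and Dalmau, and my plan is to prove the cycle $(1)\Rightarrow(3)\Rightarrow(2)\Rightarrow(3)$. The equivalence $(2)\Leftrightarrow(3)$ is handled by routine compactness: for $(2)\Rightarrow(3)$, given an oligopotent quasi near-unanimity $f$ of arity $k+1$ and a finite $B\subseteq A$, pick $\alpha\in\Aut(\sA)$ with $f(b,\ldots,b)=\alpha(b)$ for every $b\in B$; then $\alpha^{-1}\circ f\in\Pol(\sA)$ is a local near-unanimity on $B$, since post-composition with the bijection $\alpha^{-1}$ preserves the quasi-nu identities. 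Conversely, enumerating $A=\{a_1,a_2,\ldots\}$ and letting $B_n=\{a_1,\ldots,a_n\}$, take $f_n\in\Pol(\sA)$ of arity $k+1$ which is a local near-unanimity on $B_n$; since $\Aut(\sA)$ has only finitely many orbits on each $A^m$, a K\"onig's lemma-style diagonal argument modulo $\Aut(\sA)$ produces a subsequence converging pointwise to a limit $f\in\Pol(\sA)$ that is a global oligopotent quasi near-unanimity.

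For $(1)\Rightarrow(3)$ I would use an ``indicator instance'' construction. Given a finite $B\subseteq A$ and a witness $\ell>k$ of strict width $k$, build the instance $\instance_B$ of $\Csp(\sA)$ whose variables are the elements of $B^{k+1}$: for every $n$-ary relation $R$ of $\sA$ and every choice of tuples $\tuple t_1,\ldots,\tuple t_n\in B^{k+1}$ such that for each index $j\in[k+1]$ the tuple $(\tuple t_1(j),\ldots,\tuple t_n(j))$ lies in $R$, impose the constraint $R(x_{\tuple t_1},\ldots,x_{\tuple t_n})$. Solutions of $\instance_B$ are precisely the restrictions to $B^{k+1}$ of $(k+1)$-ary polymorphisms of $\sA$. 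The partial assignment $g$ sending each diagonal variable $(a,\ldots,a)$ and each single-perturbation variable of the form $(b,a,\ldots,a),\ldots,(a,\ldots,a,b)$ to the value dictated by the local near-unanimity identities is consistent with every constraint of $\instance_B$, as each column of such a matrix is already a tuple known to lie in the appropriate relation. Strict width then extends $g$ to a full solution, which is exactly a polymorphism that is a local near-unanimity on $B$.

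The substantive direction is $(3)\Rightarrow(1)$, the Baker--Pixley argument adapted to the $\omega$-categorical setting. Take $\ell$ sufficiently large relative to $k$ and the maximal arity of the relations of $\sA$. Given a $(k,\ell)$-minimal instance $\instance=(\V,\constraints)$ and a partial solution $g\colon U\to A$ consistent with every constraint, extend $g$ one variable at a time: for a new variable $v$, gather the candidate values for $v$ dictated by $(k,\ell)$-minimality on each $k$-element window containing $v$, and apply a local near-unanimity on the finite set $B\subseteq A$ of all values thus involved to glue $k+1$ partial extensions of $g$ into a single globally consistent extension assigning $v$. The main obstacle lies here: one must carefully choose $\ell$ and the witness partial extensions so that this combinatorial gluing really produces an assignment respecting \emph{all} constraints simultaneously, not just one constraint at a time; this is where $(k,\ell)$-minimality and $\omega$-categoricity combine, at each step, to reduce to a finite-domain Baker--Pixley argument on an appropriate substructure determined by the orbits of the relevant tuples.
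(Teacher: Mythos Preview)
The paper does not prove this theorem: it is quoted from~\cite{BodirskyDalmau} and used as a black box, so there is no proof in the paper to compare your proposal against.

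That said, a few remarks on your sketch. Your stated ``cycle'' $(1)\Rightarrow(3)\Rightarrow(2)\Rightarrow(3)$ is presumably a typo; the body of your proposal actually argues $(2)\Leftrightarrow(3)$, $(1)\Rightarrow(3)$, and $(3)\Rightarrow(1)$, which is fine. In the indicator-instance argument for $(1)\Rightarrow(3)$ you skip a step: strict width only applies to $(k,\ell)$-minimal instances, so you must first replace $\instance_B$ by an equivalent $(k,\ell)$-minimal instance and then check that your partial assignment $g$ on the near-unanimity locus is still consistent with the tightened constraints. The reason this works is that for any set $S$ of at most $k$ tuples in the domain of $g$ (each having at most one ``off'' coordinate), some coordinate $j\in[k+1]$ is ``good'' for all of them, so $g|_S$ agrees with the projection $\pi_j$, which is a genuine solution; hence $g$ is consistent on every $k$-window of the minimalized instance. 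You should make this explicit. For $(3)\Rightarrow(1)$ your outline is the correct Baker--Pixley strategy, but the hand-waving at the end (``this is where $(k,\ell)$-minimality and $\omega$-categoricity combine'') is exactly where the content of~\cite{BodirskyDalmau} lies; in particular one must specify $\ell$ (the maximum of $k+1$ and the maximal arity suffices) and argue that at each one-variable extension the finitely many candidate values offered by the $k$-windows can be merged by a single local near-unanimity polymorphism, using that by $\omega$-categoricity only finitely many constraint relations occur.
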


\section{Proof of the main result}

\subsection{Implicationally simple structures}

We introduce the notion of an implication and several related concepts  that will play a key role in the proof of~\Cref{thm:implsimple}. It is not hard to see that, unlike in the case for structures from~\cite{Wrona:2020b}, the reduction to $\Csp_{\injinstances}(\sA)$ is necessary since every structure in the scope of \Cref{thm:implsimple} is implicationally hard without restricting to injective instances.

\begin{definition}\label{def:implication}
Let $\sA$ be a relational structure. Let $V$ be a set of variables, let $\tuple u,\tuple v$ be injective tuples of variables in $V$ of length $k<|V|$ and $m<|V|$, respectively, such that $\scope(\tuple u)\cup \scope(\tuple v)=V$. Let $C\subseteq A^k$ and $D\subseteq A^m$ be pp-definable from $\sA$ and non-empty. We say that a pp-formula $\phi$ over the signature of $\sA$ with free variables from $V$ is a \emph{$(C,\tuple u,D,\tuple v)$-implication in $\sA$} if all of the following hold:
\begin{enumerate}
    \item for all distinct $x,y\in V$, $\proj_{(x,y)}(\phi^{\sA})\not\subseteq\{(a,a)\mid a\in A\}$,
    \item $C\subsetneq\proj_{\tuple u}(\phi^{\sA})$,
    \item $D\subsetneq\proj_{\tuple v}(\phi^{\sA})$,
    \item for every $f\in \phi^{\sA}$, it holds that $f(\tuple u)\in C$ implies $f(\tuple v)\in D$,
    \item for every $\tuple a\in D$, there exists $f\in\phi^{\sA}$ such that $f(\tuple u)\in C$ and $f(\tuple v)=\tuple a$.
\end{enumerate}

We say that $\phi$ is a \emph{$(C,\tuple u,D,\tuple v)$-pre-implication} if it satisfies items (2)-(5).
We will call $\phi$ a \emph{$(C,D)$-implication} if it is a $(C,\tuple u,D,\tuple v)$-implication for some $\tuple u\in I^V_k, \tuple v\in I^V_m$. We say that an implication $\phi$ is \emph{injective} if $\phi^{\sA}$ contains only injective mappings.

Let $\gG$ be a permutation group acting on $A$, and let $f\in\phi^{\sA}$. If $O,P$ are orbits under $\fG$ such that $f(\tuple u)\in O$, $f(\tuple v)\in P$, then we say that $f$ is an \emph{$OP$-mapping}.
\end{definition}

\begin{example}
Let $\sA$ be a relational structure, let $k\geq 1$, and let $O$ be an orbit of $k$-tuples under $\Aut(\sA)$. Suppose that $\sA$ pp-defines the equivalence of orbits of $k$-tuples under $\Aut(\sA)$. Then the formula defining this equivalence is an $(O,O)$-pre-implication in $\sA$. If $\sA$ is such that $\Aut(\sA)$ does not have any fixed point in its action on $A$, this pre-implication is an implication. For all orbits $P,Q$ of $k$-tuples under $\Aut(\sA)$, $\phi^{\sA}$ contains an $PQ$-mapping if, and only if, $P=Q$.
\end{example}

\begin{definition}
Let $\sA$ be a relational structure, and let $k\geq 1$.

The \emph{$k$-ary implication graph of $\sA$}, to be denoted by $\mathcal{G}_{\sA}$, is a directed graph defined as follows.

\begin{itemize}
    \item The set of vertices is the set of pairs $(C_1,C)$ where $\emptyset\neq C\subsetneq C_1\subseteq A^k$ and $C,C_1$ are pp-definable from $\sA$.
    \item There is an arc from $(C_1,C)$ to $(D_1,D)$ if there exists a $(C,\tuple u,D,\tuple v)$-implication $\phi$ in $\sA$ such that $\proj_{\tuple u}(\phi^{\sA})=C_1$, $\proj_{\tuple v}(\phi^{\sA})=D_1$.
\end{itemize}

The \emph{$k$-ary injective implication graph of $\sA$}, denoted by $\mathcal{G}_{\sA}^{\injinstances}$, is the (non-induced) subgraph of $\mathcal{G}_{\sA}$ which contains precisely the vertices $(C_1,C)$ where $C$ is injective and which contains an arc from $(C_1,C)$ to $(D_1,D)$ if $(C_1,C)\neq (D_1,D)$ and if there exists an injective $(C,\tuple u,D,\tuple v)$-implication $\phi$ in $\sA$ with $\proj_{\tuple u}(\phi^{\sA})=C_1$, $\proj_{\tuple v}(\phi^{\sA})=D_1$.

We say that $\sA$ is \emph{implicationally simple (on injective instances)} if the (injective) implication graph $\mathcal{G}_{\sA}$ ($\mathcal{G}_{\sA}^{\injinstances}$) is acyclic. Otherwise, $\sA$ is \emph{implicationally hard (on injective instances)}.
\end{definition}

Note that by item (1) in~\Cref{def:implication}, the implication graph does not necessarily contain all loops -- e.g., the formula over variables $\{x_1,\ldots,x_{2k}\}$ defined by $\bigwedge\limits_{i\in[k]}(x_i=x_{i+k})$ is not an implication.

The following is essentially subsumed by Lemma 3.3 in~\cite{SymmetriesEnough} but we provide the reformulation to our setting for the convenience of the reader.

\begin{lemma}\label{lemma:core-bwidth}
Let $k\geq 2$, let $\fG$ be a permutation group, let $\sB$ be its canonical $k$-ary structure, and suppose that $\sB$ is finitely bounded. Let $\instance=(\V,\constraints)$ be a non-trivial, $(k,\max(k+1,b_{\sB}))$-minimal instance of $\Csp(\sB)$ such that for every $\tuple v\in\V^k$, $\proj_{\tuple v}(\instance)$ contains precisely one orbit under $\fG$. Then $\instance$ has a solution.
\end{lemma}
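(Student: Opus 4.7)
The plan is to build a $\tau$-structure $\sX$ on $\V$ from the orbit data and embed it into $\sB$ via finite boundedness. For each $\tuple v\in\V^k$, let $O_{\tuple v}$ denote the unique orbit of $\fG$ on $k$-tuples with $\proj_{\tuple v}(\instance)=O_{\tuple v}$, and define $\sX$ on domain $\V$ in the signature of $\sB$ by $R_O^{\sX}:=\{\tuple v\in\V^k\mid O_{\tuple v}=O\}$ for every orbit $O$. Any constraint of $\instance$ is a $k$-ary relation $O$ of $\sB$ applied to a scope enumerated as $\tuple u$, and by hypothesis $O=\proj_{\tuple u}(\instance)=O_{\tuple u}$; hence any map $f\colon\V\to A$ that respects the orbit relations of $\sX$ is automatically a solution of $\instance$, and it suffices to find an embedding of $\sX$ into $\sB$.

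To produce such an embedding I invoke the finite boundedness of $\sB$: since every forbidden bound has size at most $b_{\sB}$, it suffices to show that every induced substructure $\sX|_W$ with $|W|\leq b_{\sB}$ embeds into $\sB$. Fix such a $W$. Because $|W|\leq b_{\sB}\leq\max(k+1,b_{\sB})$, the $(k,\max(k+1,b_{\sB}))$-minimality of $\instance$ (applied after the preprocessing described in \Cref{sect:Prelims}, which expands $\sB$ by its pp-definable relations of arity up to $\max(k+1,b_{\sB})$) yields a constraint whose scope contains $W$; non-triviality of $\instance$ forces its projection onto $W$ to be non-empty. Any $g\colon W\to A$ chosen from this projection satisfies $g(\tuple v)\in\proj_{\tuple v}(\instance)=O_{\tuple v}$ for every $k$-subtuple $\tuple v$ of $W$ by the projection-agreement clause of minimality, so $g$ preserves all relations of $\sX|_W$; since distinct orbit relations of $\sB$ are disjoint, preservation automatically forces reflection, so $g$ is the required embedding.

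The main obstacle I foresee is the tension between the purely $k$-ary signature of $\sB$ and the $(k,\max(k+1,b_{\sB}))$-minimality condition, which in the literal reading of \Cref{def:minimality} would demand constraints with scopes strictly larger than $k$; this is resolved by the preprocessing convention of \Cref{sect:Prelims}, which furnishes the higher-arity virtual constraints on scopes of size up to $b_{\sB}$ that the argument above relies on. A minor further subtlety is the injectivity of the local map $g$: if the orbit data on $W$ forces two distinct variables to be identified, then $g$ is only a strong homomorphism rather than a genuine embedding, but this is harmless for producing a solution and can be accommodated by first quotienting $\V$ by the induced equivalence of forced identifications before applying finite boundedness.
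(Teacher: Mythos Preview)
Your approach is essentially the same as the paper's: build a $\tau$-structure on the variable set from the single-orbit data, use $(k,\max(k+1,b_{\sB}))$-minimality to find a constraint covering any small subset, and invoke finite boundedness to get an embedding into $\sB$ that serves as a solution. The only real difference is the order of operations: the paper quotients $\V$ by the forced-equality relation $\sim$ \emph{before} defining the structure and checking finite boundedness, whereas you define $\sX$ on $\V$ first and only mention the quotient at the end as a patch for the injectivity problem. The paper's ordering is cleaner, since without the quotient your intermediate claim ``it suffices to find an embedding of $\sX$ into $\sB$'' is literally false whenever two variables are forced equal---$\sX$ then has no embedding at all, and you must retreat to the quotient anyway. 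You correctly identified this, but the exposition would be tighter if you performed the quotient at the outset rather than flagging it as a subtlety.

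One small inconsistency to clean up: you first assert that every constraint of $\instance$ is a $k$-ary orbit relation, and then later invoke higher-arity pp-definable constraints coming from the preprocessing. The second reading is the correct one (and is what the paper uses implicitly), so the sentence claiming that a map respecting the $k$-ary orbit data is ``automatically a solution'' should be justified for the higher-arity constraints too---either by appealing to equivalence with the original $k$-ary instance, or (as in the paper's application) by using $k$-homogeneity so that agreement on all $k$-subtuples with some element of a nonempty constraint forces membership.
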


\begin{proof}
Let $\sim$ be a binary relation defined on $\V$ such that $u\sim v$ if, and only if, $\proj_{(u,v)}(\instance)\subseteq \{(b,b)\mid b\in B\}$. Since $k\geq 2$, $\sim$ is an equivalence relation.

Let $\tau$ be the signature of $\sB$, and let us define a $\tau$-structure $\sA$ on $\V/\sim$ as follows. Let $R\in\tau$; then $R$ is of arity $k$. We set $([v_1]_{\sim},\ldots,[v_k]_{\sim})\in R$ if $\proj_{(v_1,\ldots,v_k)}(\instance)= R^{\sB}$. Note that by our assumption, for every $(v_1,\ldots,v_k)\in\V^k$, there is precisely one relation of $\sA$ containing the tuple $([v_1]_{\sim},\ldots,[v_k]_{\sim})$.

Let us show that the definition of $\sim$ does not depend on the choice of the representatives $v_1,\ldots,v_k\in\V$. We will show that it does not depend on the choice of $v_1$, the rest can be shown similarly. Let $u_1\sim v_1$, and let $C\in\constraints$ be such that $u_1,v_1,\ldots,v_k$ are contained in its scope. Then $C|_{\{u_1,v_1\}}$ consists of constant maps and it follows that $\proj_{(u_1,v_2,\ldots,v_k)}(\instance)=\proj_{(u_1,v_2,\ldots,v_k)}(C)=\proj_{(v_1,\ldots,v_k)}(C)=\proj_{(v_1,\ldots,v_k)}(\instance)$.

We claim that $\sA$ embeds into $\sB$. Suppose for contradiction that this is not the case. Then there exists a bound $\sF\in\mathcal{F}_{\sB}$ of size $b$ with $b\leq b_{\sB}$ such that $\sF$ embeds into $\sA$. Let $[v_1]_{\sim},\ldots,[v_b]_{\sim}$ be all elements in the image of this embedding. Find a constraint $C\in\constraints$ such that $v_1,\ldots,v_b$ are contained in its scope. Since $C$ is nonempty, there exists $f\in C$. Since all relations in $\tau$ are of arity $k$, since $\instance$ is $k$-minimal and since for every $\tuple v\in \V^k$ such that all variables from $\tuple v$ are contained in the scope of $C$, $\proj_{\tuple v}(C)$ contains precisely one orbit under $\gG$, it follows that $\sF$ embeds into the structure that is induced by the image of $f$ in $\sB$ which is a contradiction.

It follows that there exists en embedding $e\colon \sA\hookrightarrow \sB$ and it is easy to see that $f\colon \V\rightarrow B$ defined by $f(v):=e([v]_{\sim})$ is a solution of $\instance$.
\end{proof}

\begin{proposition}\label{prop:implsimple}
Let $k\geq 3$, let $\fG$ be a $(k-1)$-transitive oligomorphic permutation group, let $\sB$ be its canonical $k$-ary structure, and suppose that $\sB$ is finitely bounded. Let $\sA$ be a first-order expansion of $\sB$ which is implicationally simple on injective instances and such that $\Pol(\sA)$ contains a binary injection. Then $\sA$ has relational width $(k,\max(k+1,b_{\sB}))$.
\end{proposition}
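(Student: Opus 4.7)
The goal is to prove that an arbitrary non-trivial $(k,\max(k+1,b_{\sB}))$-minimal instance $\instance=(\V,\constraints)$ of $\Csp(\sA)$ has a solution. The overall strategy is a reduction to~\Cref{lemma:core-bwidth}: I would show that every projection of $\instance$ onto a $k$-tuple of variables is a single orbit of $\fG$, then view the resulting data as an instance of $\Csp(\sB)$ and invoke the lemma.

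First I would handle equalities and enforce injectivity using the binary injection $f\in\Pol(\sA)$. Define $u\sim v$ on $\V$ by $\proj_{(u,v)}(\instance)\subseteq\{(a,a)\mid a\in A\}$; by $k$-minimality with $k\geq 3$ this is an equivalence relation, and after passing to the quotient we may assume no two distinct variables are forced equal. Applying $f$ (together with suitable $\fG$-translates) to pairs of partial solutions of each constraint, I would then further restrict each constraint so that every pair-projection $\proj_{(u,v)}(\instance)$ with $u\neq v$ consists entirely of injective tuples, i.e., $\instance$ becomes equivalent to an instance of $\Csp_{\injinstances}(\sA)$ over the pp-expansion of $\sA$.

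Next I would argue that every $k$-tuple projection consists of a single $\fG$-orbit. Suppose for contradiction that some $\tuple u\in \V^k$ is such that $C_1:=\proj_{\tuple u}(\instance)$ contains at least two orbits, and pick a single orbit $C\subsetneq C_1$. Let $\instance_{C}$ be the instance obtained by restricting $\proj_{\tuple u}(\instance)$ to $C$ and propagating $(k,\max(k+1,b_{\sB}))$-minimality. If $\instance_{C}$ is still non-trivial, replace $\instance$ by $\instance_{C}$ and iterate on another problematic $k$-tuple. Otherwise, triviality is detected on some other $k$-tuple $\tuple v$: its projection $D_1:=\proj_{\tuple v}(\instance)$ strictly shrinks to some non-empty $D\subsetneq D_1$, which is pp-definable from $\sA$ as a union of $\fG$-orbits. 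Writing out the pp-formula $\phi$ that defines $\instance$ restricted to $\scope(\tuple u)\cup\scope(\tuple v)$, I would verify that $\phi$ witnesses an arc from $(C_1,C)$ to $(D_1,D)$ in the injective implication graph $\mathcal{G}_{\sA}^{\injinstances}$: clauses (2)--(5) of~\Cref{def:implication} follow from the choice of $\tuple u,\tuple v,C,D$ and from $(k,\max(k+1,b_\sB))$-minimality (with clause (5) relying on the fact that $D$ is a union of full projections), while clause (1) and the injectivity of $\phi$ come from the preparatory step. Iterating this construction and using oligomorphicity of $\fG$ to ensure that the pairs $(C_1,C)$ range over a finite set up to $\fG$-action, a pigeonhole argument produces a cycle in $\mathcal{G}_{\sA}^{\injinstances}$, contradicting implicational simplicity on injective instances. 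Hence every $k$-tuple projection is a single $\fG$-orbit, and~\Cref{lemma:core-bwidth} delivers the desired solution.

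The main obstacle will be the careful construction of the injective implication $\phi$: one must simultaneously guarantee the strict inclusions in clauses (2) and (3), the realizability condition in clause (5), and the injectivity of every satisfying assignment. The binary injection is crucial here for producing witnesses that remain injective on the full variable set, and for symmetrizing partial solutions so that the projection on $\tuple v$ of $\phi^{\sA}$ coincides with $D_1$. A secondary subtlety is verifying that the reduction of Step~1 commutes with the sub-instance extraction of Step~2, so that the iteration remains within $\mathcal{G}_{\sA}^{\injinstances}$ rather than the full graph $\mathcal{G}_{\sA}$.
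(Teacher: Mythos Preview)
Your high-level strategy matches the paper's: reduce to injective instances via the binary injection, shrink every $k$-tuple projection to a single $\fG$-orbit, then apply \Cref{lemma:core-bwidth}. The divergence is in how you carry out the shrinking step, and this is where your proposal has a genuine gap.

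The paper does \emph{not} pick an arbitrary orbit $C\subsetneq C_1$ and then try to build a cycle when propagation fails. Instead, at each stage it forms the subgraph $\mathcal G_i$ of $\mathcal G_{\sA}^{\injinstances}$ on the vertices $(\proj_{\tuple v}(\instance_{i-1}),F)$, notes that this subgraph is acyclic (being a subgraph of an acyclic graph), and picks a \emph{sink} $(\proj_{\tuple v_i}(\instance_{i-1}),F_i)$. The point of choosing a sink is that restricting the constraints at $\tuple v_i$ to $F_i$ cannot shrink any other $k$-tuple projection: if $\proj_{\tuple v}(C_i)\subsetneq\proj_{\tuple v}(C_{i-1})$ for some other $\tuple v$, then the projection of $C_{i-1}$ onto $\scope(\tuple v_i)\cup\scope(\tuple v)$ is an injective $(F_i,\tuple v_i,\proj_{\tuple v}(C_i),\tuple v)$-implication, producing an outgoing arc from the sink. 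Thus no propagation is needed at all; the restricted instance is immediately $(k,\max(k+1,b_\sB))$-minimal and non-trivial.

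Your contradiction route runs into two problems. First, if restricting to $C$ and propagating makes the instance trivial, some projection becomes \emph{empty}, not merely a nonempty $D\subsetneq D_1$; to get a nonempty $D$ you must stop after the first propagation step through a single constraint whose scope contains $\scope(\tuple u)\cup\scope(\tuple v)$, and take $D$ to be exactly $\{f(\tuple v):f\in\phi^{\sA},\,f(\tuple u)\in C\}$ for that constraint --- otherwise clause~(4) of \Cref{def:implication} fails. Second, and more seriously, your pigeonhole argument does not close. Having one arc $(C_1,C)\to(D_1,D)$ does not let you continue: if you next restrict the \emph{original} instance at $\tuple v$ to $D$ and this happens to be non-trivial, you replace the instance and the chain of arcs is broken (the first coordinates $C_1,D_1,\dots$ are projections of different instances); if you only follow arcs when restriction is trivial, you have not shown that \emph{every} vertex of the relevant finite set has an outgoing arc, which is what pigeonhole needs to produce a cycle. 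The clean way to close this is precisely to observe that acyclicity guarantees a sink and restrict there --- which is the paper's argument.
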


\begin{proof}
By~\Cref{cor:libcores-purelyinj}, it is enough to show that $\Csp_{\injinstances}(\sA)$ has relational width $(k,\max(k+1,b_{\sB}))$. To this end, let $\instance=(\V,\constraints)$ be a non-trivial $(k,\max(k+1,b_{\sB}))$-minimal instance of $\Csp_{\injinstances}(\sA)$; we will show that there exists a satisfying assignment for $\instance$.

For every $i\geq 0$, we define inductively a $(k,\max(k+1,b_{\sB}))$-minimal instance $\instance_i=(\V,\constraints_i)$ of $\Csp_{\injinstances}(\sA)$ with the same variable set as $\instance$ such that $\instance_0=\instance$ and such that for every $i\geq 1$, $\constraints_i$ contains for every constraint $C_{i-1}\in\constraints_{i-1}$ a constraint $C_i$ such that $C_i\subseteq C_{i-1}$, and such that moreover for some $\tuple v\in\V^k$, it holds that $\proj_{\tuple v}(\instance_i)\subsetneq \proj_{\tuple v}(\instance_{i-1})$, or for every $\tuple v\in\V^k$, it holds that $\proj_{\tuple v}(\instance_i)$ contains only one orbit under $\fG$.

Let $\instance_0:=\instance$. Let $i\geq 1$. We define $\mathcal{G}_i$ to be the graph that originates from $\mathcal{G}_{\sA}^{\injinstances}$ by removing all vertices that are not of the form $(\proj_{\tuple v}(\instance_{i-1}),F)$ for some injective $\tuple v\in \V^k$, and some $F\subseteq A^k$. \Cref{claim:projs} implies that $\instance_{i-1}$ is $k$-minimal, and hence $\proj_{\tuple v}(\instance_{i-1})$ is well-defined. If $\mathcal G_i$ does not contain any vertices, let $\instance_i:=\instance_{i-1}$. Suppose now that $\mathcal G_i$ contains at least one vertex. In this case, since $\mathcal{G}_{\sA}^{\injinstances}$ and hence also $\mathcal G_i$ is acyclic, we can find a sink $(\proj_{\tuple v_i}(\instance_{i-1}),F_i)$ in $\mathcal{G}_i$ for some injective $\tuple v_i\in\V^k$. Let us define for every $C_{i-1}\in\constraints_{i-1}$ containing all variables from $\tuple v_i$ in its scope $C_{i}:=\{f\in C_{i-1}\mid f(\tuple v_i)\in F_i\}$, and let $C_i:=C_{i-1}$ for every $C_{i-1}\in\constraints_{i-1}$ that does not contain all variables from $\tuple v_i$ in its scope. Note that in both cases, $C_i\subseteq C_{i-1}$.
Finally, we define $\constraints_{i}=\{C_{i}\mid C_{i-1}\in\constraints_{i-1}\}$.

\begin{claim}\label{claim:projs}
For every $i\geq 1$, $\instance_i$ is non-trivial and $(k,\max(k+1,b_{\sB}))$-minimal. Moreover, for every $\tuple v\in \V^k\backslash\{\tuple v_i\}$, $\proj_{\tuple v}(\instance_{i})=\proj_{\tuple v}(\instance_{i-1})$ and $\proj_{\tuple v_i}(\instance_{i})=F_i$.
\end{claim}

Let $i\geq 1$ and if $i>1$, suppose that the claim holds for $i-1$. Note that if $\instance_i=\instance_{i-1}$, then there is nothing to prove so we may suppose that this is not the case. Observe that for every $C_i\in\constraints_i$ containing all variables from $\tuple v_i$ in its scope, $\proj_{\tuple v_i}(C_i)=F_i$ by the definition of $C_i$.
We will now show that for every $\tuple v\in \V^k\backslash\{\tuple v_i\}$ and for every $C_i\in\constraints_i$ containing all variables from $\tuple v$ in its scope, $\proj_{\tuple v}(C_i)=\proj_{\tuple v}(C_{i-1})$. Observe that if $C_i$ does not contain all variables from $\tuple v_i$ in its scope, then the conclusion follows immediately since $C_i=C_{i-1}$; we can therefore assume that this is not the case.
Assume first that $\tuple v$ is not injective, let $m$ be the number of pairwise distinct entries of $\tuple v$, and let $\tuple u$ be an injective $m$-tuple containing all variables of $\tuple v$. Hence, $\proj_{\tuple u}(C_i)=I_m^A=\proj_{\tuple u}(C_{i-1})$ by the $(k-1)$-transitivity of $\gG$ and it follows that $\proj_{\tuple v}(C_i)=\proj_{\tuple v}(C_{i-1})$. Now assume that $\tuple v$ is injective  and, striving for a contradiction, that $\proj_{\tuple v}(C_i)\subsetneq \proj_{\tuple v}(C_{i-1})$. It follows that $(\proj_{\tuple v}(C_{i-1}),\proj_{\tuple v}(C_i))$ is a vertex in $\mathcal G_{\sA}^{\injinstances}$ and hence also in $\mathcal G_i$.
Let $\tuple w=(w_1,\ldots,w_\ell)\in\V^\ell$ be an enumeration of all variables of $\tuple v$ and $\tuple v_i$. It follows that the pp-formula $\phi(\tuple w)$ defining $\proj_{\tuple w}(C_{i-1})$ is an injective $(F_i,\tuple v_i,\proj_{\tuple v}(C_i), \tuple v )$-implication. Hence, there is an arc from $(\proj_{\tuple v_i}(\instance_{i-1}),F_i)$ to $(\proj_{\tuple v}(C_{i-1}),\proj_{\tuple v}(C_i))$ in $\mathcal{G}_i$ and in particular, $(\proj_{\tuple v_i}(\instance_{i-1}),F_i)$ is not a sink in $\mathcal G_i$, a contradiction.

Now, it is easy to see that $\instance_i$ is $(k,\max(k+1,b_{\sB}))$-minimal. Indeed, since $\instance$ is $(k,\max(k+1,b_{\sB}))$-minimal, every subset of $\V$ of size at most $\max(k+1,b_{\sB})$ is contained in the scope of some constraint of $\instance$ and by construction also of $\instance_i$. Moreover, by the previous paragraph, any two constraint of $\instance_i$ agree on all $k$-element subsets of $\V$ within their scopes.
\medskip

Since for every $i\geq 0$, if $\mathcal{G}_i$ is not empty, we remove at least one orbit of $k$-tuples under $\fG$ from some constraint. By the oligomorphicity of $\gG$, there exists $i_0\geq 0$ such that $\mathcal G_{i_0}$ is empty. 
We claim that for every injective $\tuple v\in \V^k$, $\proj_{\tuple v}(\instance_{i_0})$ contains precisely one orbit of $k$-tuples under $\fG$: If $\proj_{\tuple v}(\instance_{i_0})$ contained more than one orbit, then $(\proj_{\tuple v}(\instance_{i_0}),O)$ would be a vertex of $\mathcal{G}_i$ for an arbitrary orbit $O\subseteq\proj_{\tuple v}(\instance_{i_0})$; $O$ being a relation of $\sA$ since $\sA$ is a first-order expansion of $\sB$.

It follows that $\instance_{i_0}=(\V,\constraints_{i_0})$ is a non-trivial, $(k,\max(k+1,b_{\sB}))$-minimal instance of $\Csp_{\injinstances}(\sB)$ that satisfies the assumptions of~\Cref{lemma:core-bwidth}. Hence, there exists a satisfying assignment for $\instance_{i_0}$ and whence also for $\instance$.
\end{proof}

\subsection{Binary injections and bounded width}

Here, we restate some results about binary injections from~\cite{SmoothApproximations} that will enable us to use Lemma 14 from~\cite{hypergraphs} in order to prove~\Cref{cor:libcores-purelyinj}.

We will use the following results from~\cite{SmoothApproximations}. The orbit $O$ with the property stated in~\Cref{lemma:bininj1} is called \emph{free} in~\cite{SmoothApproximations}.

\begin{lemma}[Proposition 21 in~\cite{SmoothApproximations}]\label{lemma:bininj1}
Let $\sB$ be a homogeneous structure such that there exists an orbit $O$ of pairs under $\Aut(\sB)$ with the property that for all $a,b\in B$, there exists $c\in B$ such that $(a,c),(b,c)\in O$, and let $\sA$ be first-order reduct of $\sB$. If $\Pol(\sA)$ contains an essential function, then it contains a binary essential function.
\end{lemma}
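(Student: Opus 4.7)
The plan is to promote an arbitrary essential polymorphism $f\in\Pol(\sA)$ of arity $n\geq 3$ to a binary essential one by iteratively collapsing arguments via the free orbit $O$ and extracting a limit polymorphism by compactness.

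First, after permuting arguments if necessary, I would assume that $f$ depends essentially on its first two coordinates, and fix witnesses: tuples $\bar u,\bar u'\in B^n$ agreeing outside the first coordinate with $f(\bar u)\neq f(\bar u')$, and analogously $\bar v,\bar v'$ for the second coordinate. For each finite set $F\subseteq B$ containing the entries of these witnesses, I would apply the free-orbit property iteratively, combined with the homogeneity of $\sB$ to amalgamate the partial demands $(x,y)\in O$, in order to produce a single $c_F\in B$ such that $(x,c_F)\in O$ for every $x\in F$. Then define $g_F\colon F^2\to B$ by $g_F(x,y):=f(x,y,c_F,\ldots,c_F)$.

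Next, I would extract a binary function $g\colon B^2\to B$ as a limit of the $g_F$ along an ultrafilter on the directed set of finite subsets of $B$. Oligomorphicity of $\Aut(\sB)$ ensures that, for each cardinality of $F$, the restrictions $g_F$ fall into finitely many $\Aut(\sB)$-orbits, so the standard König-style closure argument for polymorphism clones of $\omega$-categorical structures yields a well-defined $g\in\Pol(\sA)$; preservation of the relations of $\sA$ is inherited from $f$ in the limit.

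The key and most delicate step will be verifying that $g$ is essential rather than essentially unary. Since each large enough $F$ contains the witnessing tuples $\bar u,\bar u',\bar v,\bar v'$, the restriction $g_F$ inherits essential dependence on both arguments provided the auxiliary constant $c_F$ does not accidentally collapse the values $f(\bar u)$ and $f(\bar u')$ (and likewise on the second coordinate). This is precisely what the free-orbit condition on $O$ guarantees, by ensuring that $c_F$ stands in a uniform, generic relation to every element of $F$ and therefore cannot absorb the essential behavior of $f$ on its first two coordinates. Controlling this interaction between $c_F$ and the remaining $n-2$ coordinates of $f$ is where the full force of the free-orbit assumption is needed, and it is the principal obstacle of the argument.
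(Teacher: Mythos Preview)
First, note that the present paper does not prove this lemma at all: it is imported verbatim from~\cite{SmoothApproximations} and used as a black box, so there is no in-paper proof to compare against.

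That said, your proposal has a genuine gap precisely at the point you yourself label the ``principal obstacle''. You define $g_F(x,y):=f(x,y,c_F,\ldots,c_F)$ and assert that the free-orbit property ``guarantees'' that $g_F$ depends essentially on both arguments. But your witnesses for essential dependence of $f$ on its first coordinate are tuples $\bar u,\bar u'$ that carry \emph{specific} entries $u_3,\ldots,u_n$ in the last $n-2$ slots, not $c_F$. Nothing you have written rules out $f(u_1,u_2,c_F,\ldots,c_F)=f(u_1',u_2,c_F,\ldots,c_F)$ once those entries are overwritten. The free-orbit hypothesis only controls the \emph{binary} type of $c_F$ against each element of $F$; it says nothing about how the values of $f$ change when its higher coordinates are replaced by $c_F$, and ``$c_F$ stands in a uniform, generic relation to every element of $F$'' is not, by itself, a mechanism that forces a polymorphism to remain essential after substitution. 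So the implication from ``$(x,c_F)\in O$ for all $x\in F$'' to ``$g_F$ is essential'' is simply missing.

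The standard argument for this kind of statement does not substitute constants. One takes an essential $f\in\Pol(\sA)$ of \emph{minimal} arity $n\geq 3$ and considers the $(n-1)$-ary minors obtained by identifying two variables; by minimality each such minor is essentially unary, and a short case analysis pins down on which single coordinate each minor depends. The free orbit together with homogeneity of $\sB$ is then used to manufacture concrete tuples on which these dependency constraints are mutually inconsistent, yielding a contradiction to $n\geq 3$. If you want to rescue a constant-substitution approach, you would have to argue via types over $\sB$ that some choice of $c$ places $(u_1,u_2,c,\ldots,c)$ and $(u_1',u_2,c,\ldots,c)$ in the same $\Aut(\sB)$-orbits as the original witnesses; the free-orbit hypothesis, being purely binary, is too weak to deliver that in general.
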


\begin{lemma}[Proposition 24 in~\cite{SmoothApproximations}]\label{lemma:bininj2}
Let $\sA$ be a first-order reduct of an $\omega$-categorical structure $\sB$ such that $\Aut(\sB)$ is $1$-transitive and such that its canonical binary structure has finite duality. If $\Pol(\sA)$ contains a binary essential function preserving $I_2^B$, then it contains a binary injection.
\end{lemma}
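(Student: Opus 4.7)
The plan is to construct a binary injection in $\Pol(\sA)$ by iterating the given essential function $f$ and using $\omega$-categoricity together with finite duality of the canonical binary structure $\sB'$ of $\sB$ to control the termination of the process. Throughout, I would use that $\Pol(\sA)$ is closed under composition and contains $\Aut(\sB)$, since $\sA$ is a first-order reduct of $\sB$.

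First I would analyze the failure modes of $f$. Since $f$ preserves $I_2^B$, we have $f(a,b) \neq f(c,d)$ whenever both $(a,c)$ and $(b,d)$ are injective. Hence the only obstruction to $f$ being a binary injection is a ``collision'' of the form $f(a,b) = f(a,d)$ with $b \neq d$, or symmetrically $f(a,b) = f(c,b)$ with $a \neq c$. Call the $\Aut(\sB)$-invariant set of pairs $(b,d) \in I_2^B$ admitting such a second-coordinate collision the \emph{bad set} of $f$; by $\omega$-categoricity it is a union of finitely many orbits of injective pairs, and by $1$-transitivity of $\Aut(\sB)$ and the essentiality of $f$ at least one orbit is \emph{not} in the bad set (witnessing a pair on which $f$ separates in the second coordinate).

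The second step is to compose $f$ with itself so as to strictly shrink the bad set. One natural candidate is $f_1(x,y) := f(f(x,y), y)$; more generally, define iteratively $f_{n+1}(x,y) := f(f_n(x,y), y)$. Each $f_n$ preserves $I_2^B$ since $f$ does, and each remains essential. Using the separating witnesses produced by essentiality and translating them via $1$-transitivity, one shows that the bad set of $f_{n+1}$ is contained in the bad set of $f_n$. By $\omega$-categoricity this descending chain of $\Aut(\sB)$-invariant unions of orbits must stabilize.

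The third step is the crux: to argue that the stable bad set is empty, which is where the finite duality of $\sB'$ enters. Finite duality provides a finite set of forbidden homomorphic images characterizing $\sB'$; this rigidity rules out the existence of a nonempty $\Aut(\sB)$-invariant stable ``collision pattern'' by producing, from any hypothetical orbit in the stable bad set, a forbidden finite configuration inside $\sB'$ witnessed by the compositional structure of $f_n$ together with the separating witnesses. Hence some $f_n$ is injective in its second argument; a symmetric iteration yields injectivity in the first argument, and a final composition produces the desired binary injection in $\Pol(\sA)$.

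I expect the main obstacle to be the third step, namely extracting a concrete termination argument from finite duality of $\sB'$. The difficulty is that essentiality alone only guarantees \emph{some} separation of $f$ in each coordinate and does not obviously force the bad set to shrink at each step; finite duality supplies precisely the combinatorial rigidity needed to preclude a stable nonempty obstruction, but pinning down the right ``forbidden pattern'' argument that converts a hypothetical persistent bad orbit into a violation of one of the finitely many bounds of $\sB'$ is the delicate technical part of the proof.
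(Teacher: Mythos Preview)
The paper does not prove this lemma at all: it is quoted verbatim as Proposition~24 of~\cite{SmoothApproximations} and used as a black box. So there is no in-paper proof to compare your proposal against; any assessment must be of your argument on its own merits.

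Your steps~(1) and~(2) are essentially sound, though the justification you give for the monotonicity in step~(2) is not quite right. The inclusion of bad sets does not come from ``translating via $1$-transitivity''; it follows directly from the fact that $f$ preserves $I_2^B$: if $(b,d)$ is not bad for $f_n$, then for every $a$ the pair $(f_n(a,b),f_n(a,d))$ lies in $I_2^B$, and since $(b,d)\in I_2^B$ as well, preservation of $I_2^B$ by $f$ gives $f(f_n(a,b),b)\neq f(f_n(a,d),d)$, so $(b,d)$ is not bad for $f_{n+1}$.

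The genuine gap is step~(3). What you have written there is not an argument but a statement of hope: you assert that finite duality ``rules out'' a nonempty stable bad set by ``producing a forbidden finite configuration'', but you give no indication of \emph{which} configuration, \emph{how} it arises from the iteration, or \emph{why} the finitely many bounds of the canonical binary structure are violated. Nothing in your setup connects the orbits in the stable bad set to substructures of $\sB'$ that fail to embed. Without such a mechanism, the chain in step~(2) could in principle stabilize at a nonempty set and your argument stalls. You yourself flag this as ``the delicate technical part''; as written it is simply absent, and supplying it is the entire content of the result. If you want to pursue this route, you will need to exhibit concretely how a persistent bad orbit forces a homomorphic copy of one of the finitely many forbidden structures of $\sB'$, or else consult the original proof in~\cite{SmoothApproximations} for the intended mechanism.
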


\begin{proposition}\label{prop:libcores-purelyinj}
Let $\ell\geq k\geq 2$, let $\sA$ be an $\omega$-categorical relational structure, and suppose that $\Pol(\sA)$ contains a binary injection. Then $\sA$ has relational width $(k,\ell)$ if, and only if, $\Csp_{\injinstances}(\sA)$ has relational width $(k,\ell)$.
\end{proposition}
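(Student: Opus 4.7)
The forward implication is immediate: since $\Csp_{\injinstances}(\sA)$ is the restriction of $\Csp(\sA)$ to instances whose constraints have injective pair-projections, every non-trivial $(k,\ell)$-minimal instance of $\Csp_{\injinstances}(\sA)$ is \emph{a fortiori} a non-trivial $(k,\ell)$-minimal instance of $\Csp(\sA)$, and thus relational width $(k,\ell)$ for $\sA$ transfers directly.

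For the reverse implication, the plan is to associate to every non-trivial $(k,\ell)$-minimal instance $\instance = (\V, \constraints)$ of $\Csp(\sA)$ an injective ``companion'' instance $\injinstance$ which is itself non-trivial and $(k,\ell)$-minimal over $\Csp_{\injinstances}(\sA)$ and whose solutions solve $\instance$; granted such a construction, the hypothesis applied to $\injinstance$ yields a solution of $\instance$. As a preprocessing step I would identify any pair of variables $u,v \in \V$ for which the projection $\proj_{(u,v)}(\instance)$ is contained in the equality relation. Since $\instance$ is $(k,\ell)$-minimal, this identification preserves non-triviality and $(k,\ell)$-minimality, and after it no two distinct variables are forced to be equal.

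The construction of $\injinstance$ then rests on the binary injection $f \in \Pol(\sA)$. Its essential feature is that $f(a,b)=f(a',b')$ iff $(a,b)=(a',b')$, so $f$ preserves $I_m^A$ for every $m$ and, applied componentwise, turns any two tuples that \emph{jointly} distinguish all pairs of positions into an injective tuple. Using this, one shows that after preprocessing, every constraint $C$ of arity $m$ contains an injective tuple: for every pair $(i,j)$ of distinct positions in its scope, $C$ contains a tuple distinguishing $i$ and $j$ (otherwise those variables would have been identified), and iterated applications of $f$ to such witnesses (which stay in $C$ by the polymorphism property) produce a tuple in $C \cap I_m^A$. Setting $C^{(\inj)} := C \cap I_m^A$ and letting $\injinstance$ be the $(k,\ell)$-minimality closure of $(\V, \{C^{(\inj)} : C \in \constraints\})$ over the expansion of $\sA$ by the pp-definable relations that arise, one obtains an instance of $\Csp_{\injinstances}(\sA)$ whose solutions, since $C^{(\inj)}\subseteq C$ for each $C$, solve $\instance$ as well.

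The main obstacle I anticipate is verifying that $\injinstance$ is non-trivial, i.e.\ that the minimality closure does not empty any constraint. This is precisely where the binary injection is used in an essential, propagating way: $f$ is a polymorphism of every pp-definable relation that appears at each stage of the closure and of every $I_m^A$, so the iterated-$f$ argument above applies inductively at each refinement step and guarantees that every successive sub-constraint still admits an injective witness tuple. Once this invariant is maintained throughout the closure, $\injinstance$ is a non-trivial $(k,\ell)$-minimal injective instance, the hypothesis supplies a solution, and the proof concludes.
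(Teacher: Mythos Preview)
Your overall strategy matches the paper's exactly: identify variables that are forced equal, add inequality constraints (equivalently, replace each constraint $C$ by $C\cap I_m^A$), pass to the $(k,\ell)$-minimality closure, and show it remains non-trivial so that the hypothesis on $\Csp_{\injinstances}(\sA)$ applies. You have also correctly isolated non-triviality of the closure as the one real obstacle.

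Where the sketch falls short is in the proposed resolution of that obstacle. The iterated-$f$ argument you give establishes that any constraint which, for each pair of positions, contains a distinguishing tuple must contain an injective tuple; it does \emph{not} by itself show that a refinement step of the minimality algorithm keeps constraints non-empty. After the very first round every surviving tuple is already injective, so the premise ``for each pair there is a distinguishing tuple'' holds automatically for any \emph{non-empty} refined constraint --- but non-emptiness is precisely what is in question, so invoking the iterated-$f$ argument at that stage is circular. More concretely, from $\proj_{\tuple v}(C_1)=\proj_{\tuple v}(C_2)$ one cannot immediately deduce $\proj_{\tuple v}(C_1\cap I_{m_1})=\proj_{\tuple v}(C_2\cap I_{m_2})$ by applying $f$: the tuple $f(s,r)$ (with $s\in C_i$ extending a prescribed injective $k$-tuple and $r\in C_i\cap I_{m_i}$) is injective and lies in $C_i$, but its $\tuple v$-projection is $f(\proj_{\tuple v}(s),\proj_{\tuple v}(r))$, not the prescribed one, so you lose control of the projections during refinement.

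The paper does not argue this step from scratch either: it invokes Lemma~17 of~\cite{hypergraphs}, which is exactly the statement that under the binary-injection hypothesis the $(k,\ell)$-minimal instance equivalent to the injectivised instance $\instance'$ is non-trivial. So your plan is correct and coincides with the paper's, but the invariant you propose to maintain through the closure does not close the gap on its own; you need either to reproduce the argument behind that external lemma or to cite it.
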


\begin{proof}
    If $\sA$ has relational width $(k,\ell)$, then so does $\Csp_{\injinstances}(\sA)$ since every instance of $\Csp_{\injinstances}(\sA)$ is an instance of $\Csp(\sA)$.

    Suppose now that $\Csp_{\injinstances}(\sA)$ has relational width $(k,\ell)$.
    Let $f$ be an injective binary polymorphism of $\sA$. It follows that for every $\tuple s\in A^2$, $\tuple t\in I^2_A$, it holds that $f(\tuple s,\tuple t), f(\tuple t,\tuple s)\in I^2_A$. Let $\instance=(\V,\constraints)$ be a $(k,\ell)$-minimal non-trivial instance equivalent to an instance of $\Csp(\sA)$; we will show that $\instance$ has a solution. To this end, let $\instance'$ be an instance obtained from $\instance$ by identifying all variables $u,v\in \V$ with $\proj_{(u,v)}(\instance)\subseteq \{(a,a)\mid a\in A\}$, and by adding a constraint $C_{(u,v)}:=\{f\in A^{\{u,v\}}\mid f(u)\neq f(v)\}$ for every $u,v\in \V$ with $\proj_{(u,v)}(\instance)\not\subseteq \{(a,a)\mid a\in A\}$. It follows that $\instance'$ is a non-trivial instance of $\Csp_{\injinstances}(\sA)$ and moreover, every solution of $\instance'$ translates into a solution of $\instance$. Lemma 17 in~\cite{hypergraphs} yields that the $(k,\ell)$-minimal instance $\cJ$ equivalent to $\instance'$ is non-trivial and since $\instance'$ is an instance of $\Csp_{\injinstances}(\sA)$ which has relational width $(k,\ell)$, $\cJ$ has a solution which translates into a solution of $\instance$ as desired.
\end{proof}

\Cref{lemma:bininj1,lemma:bininj2,prop:libcores-purelyinj} immediately yield the following corollary, which will enable us to reduce $\Csp(\sA)$ for any structure $\sA$ in the scope of~\Cref{thm:implsimple} to $\Csp_{\injinstances}(\sA)$.

\begin{corollary}\label{cor:libcores-purelyinj}
Let $k\geq 2$, let $\gG$ be a $2$-transitive oligomorphic permutation group, let $\sB$ be its canonical $k$-ary structure, and suppose that $\sB$ is finitely bounded. Let $\sA$ be a first-order expansion of $\sB$, and suppose that $\sA$ has bounded strict width. Then $\sA$ has relational width $(k,\max(k+1,b_{\sB}))$ if, and only if, $\Csp_{\injinstances}(\sA)$ has relational width $(k,\max(k+1,b_{\sB}))$.
\end{corollary}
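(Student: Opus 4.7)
The plan is to chain together the three preceding results in a straightforward sequence. Starting from the hypothesis that $\sA$ has bounded strict width, I invoke \Cref{thm:characterization-strictwidth} to produce an oligopotent quasi near-unanimity polymorphism $f\in\Pol(\sA)$ of some arity $n+1\geq 3$. I first check that $f$ is essential (depends on more than one variable). If $f$ depended only on, say, its first coordinate — i.e. $f(x_1,\dots,x_{n+1})=g(x_1)$ for some unary $g$ — then the qnu identities would force $g(a)=f(a,b,\dots,b)=f(b,a,b,\dots,b)=g(b)$ for all $a,b\in B$, making $g$ constant. But oligopotence gives $g(a)=f(a,\dots,a)=\alpha(a)$ for some $\alpha\in\Aut(\sA)$, which is a bijection, a contradiction. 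By symmetry of the qnu identities across coordinates, the same argument rules out dependence on any single coordinate, and $f$ is essential.

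Next, I apply \Cref{lemma:bininj1} to $f$ to get a binary essential member of $\Pol(\sA)$. Its hypotheses are verified as follows: $\sB$ is homogeneous because it is by construction the canonical $k$-ary structure of the permutation group $\gG$; and the orbit $I_2^B$ of non-diagonal pairs is free in the required sense, since $2$-transitivity of $\gG$ collapses all such pairs into one orbit, and for any $a,b\in B$ one can (using oligomorphicity, which makes $B$ infinite, and $2$-transitivity) pick $c\in B\setminus\{a,b\}$ witnessing $(a,c),(b,c)\in I_2^B$.

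I then apply \Cref{lemma:bininj2} to promote the binary essential polymorphism to a binary injection. The group $\Aut(\sB)$ is $1$-transitive as an immediate weakening of the $2$-transitivity of $\gG\subseteq\Aut(\sB)$. The canonical binary structure of $\sB$ has, again by $2$-transitivity, only the two relations $\{(a,a)\mid a\in B\}$ and $I_2^B$; thus every finite $\tau$-structure in which these two relations behave consistently embeds into it, and so this canonical binary structure has finite duality trivially (the empty set of bounds is closed under homomorphic images). The only point requiring care — and the one I expect to be the main (minor) obstacle — is verifying that the binary essential function supplied by \Cref{lemma:bininj1} preserves $I_2^B$; I expect this to come either directly from the construction in that lemma or from composing it with $f$ and invoking oligopotence so that the resulting binary operation is forced to separate non-diagonal pairs.

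With a binary injection in $\Pol(\sA)$ at hand, \Cref{prop:libcores-purelyinj} applies and yields the claimed equivalence: $\sA$ has relational width $(k,\max(k+1,b_{\sB}))$ if and only if $\Csp_{\injinstances}(\sA)$ does.
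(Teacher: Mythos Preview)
Your overall strategy---chain \Cref{lemma:bininj1}, \Cref{lemma:bininj2}, and \Cref{prop:libcores-purelyinj}---is exactly the paper's. Two points deserve comment.

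First, your route to an essential polymorphism is different and more elementary than the paper's. The paper argues that bounded strict width implies bounded width, then cites external results (\cite{LaroseZadori,wonderland}) to conclude that $\Pol(\sA)$ has no uniformly continuous clone homomorphism to the projection clone and hence contains an essential operation. You instead take the oligopotent quasi near-unanimity operation supplied by \Cref{thm:characterization-strictwidth} and argue directly that it is essential. That works and avoids the external references.

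Second, the step you flag as the ``main (minor) obstacle''---showing that the binary essential polymorphism preserves $I_2^B$---is indeed a gap in your write-up, and your two suggested fixes (inspecting the black-box construction inside \Cref{lemma:bininj1}, or composing with $f$ and invoking oligopotence) are neither necessary nor obviously workable. The paper's argument is immediate: since $\sA$ is a first-order expansion of $\sB$, every polymorphism of $\sA$ preserves every relation of $\sB$, and $I_2^B$ is pp-definable from $\sB$ (project any orbit of injective $k$-tuples onto two coordinates; $2$-transitivity ensures the projection is all of $I_2^B$). Hence \emph{every} polymorphism of $\sA$ preserves $I_2^B$, and there is nothing special to check about the particular binary operation produced by \Cref{lemma:bininj1}.

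A minor slip: your claim that the canonical binary structure of $\Aut(\sB)$ has finite duality ``with the empty set of bounds'' is false---e.g., a one-point structure with a loop in the $\neq$-relation does not embed. Finite duality does hold for this structure, but with a nonempty set of bounds of size at most~$2$; the paper simply takes this for granted.
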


\begin{proof}
Since $\sA$ has bounded strict width, it has in particular bounded width, and hence $\Pol(\sA)$ does not have a uniformly continuous homomorphism to an affine clone by~\cite{LaroseZadori,wonderland}. In particular, it does not have a uniformly continuous clone homomorphism to the clone of projections. It is easy to see and well-known that $\Pol(\sA)$ then contains an essential function. Since $\gG$ is $2$-transitive, $I_2^B$ is an orbit under $\gG$, and it clearly satisfies the condition from~\Cref{lemma:bininj1}. It follows that $\Pol(\sA)$ contains a binary essential operation, and since $\sA$ is a first-order expansion of $\sB$, every polymorphism of $\sA$ preserves $I_2^B$. \Cref{lemma:bininj2} yields that $\Pol(\sA)$ contains a binary injection. Now, the statement follows directly from~\Cref{prop:libcores-purelyinj}.
\end{proof}

\subsection{Some implications with no bounded strict width}

In this section, we first prove that if a structure $\sA$ pp-defines certain implications, then it does not have bounded strict width (\Cref{lemma:l+1=rel,lemma:l+2=rel}). This will enable us to prove that if $\sA$ has bounded strict width, and if a relation pp-definable in $\sA$ contains a tuple with certain properties, then this relation contains an injective tuple with the same properties (\Cref{cor:noequality}).

\begin{lemma}\label{lemma:l+1=rel}
Let $k\geq 3$, an let $\sB$ be $k$-neoliberal. Let $\sA$ be a first-order expansion of $\sB$, let $\ell\in\{2,\ldots,k\}$, let $T\subseteq I_{\ell}^B$, and let $\phi$ be an $(T,=)$-implication in $\sA$ with $\ell+1$ variables. Then $\sA$ does not have bounded strict width.
\end{lemma}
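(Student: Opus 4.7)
I would proceed by contradiction: suppose $\sA$ has bounded strict width. By~\Cref{thm:characterization-strictwidth}, for every finite $B'\subseteq B$ the clone $\Pol(\sA)$ contains a local near-unanimity operation on $B'$ of arity one more than the strict width. My goal is to exhibit, on a well-chosen finite $B'$, $\ell+1$ tuples $g_1,\dots,g_{\ell+1}\in\phi^\sA$ whose coordinatewise image under such a local near-unanimity operation $w$ of arity $\ell+1$ equals a prescribed assignment $\sigma\colon V\to B'$ chosen to violate clause~(4) of~\Cref{def:implication}. Since $w$ preserves $\phi^\sA$, this would force $\sigma\in\phi^\sA$, a contradiction.

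\textbf{Choosing $\sigma$.} I would apply property~(5) to the pair $(a,a)$ to obtain $f\in\phi^\sA$ with $f(\tuple v)=(a,a)$ and $f(\tuple u)=(c_1,\dots,c_{\ell-1},a)\in T$; because $T\subseteq I_\ell^B$, the entries $c_1,\dots,c_{\ell-1},a$ are pairwise distinct. Pick $a'\in B\setminus\{a,c_1,\dots,c_{\ell-1}\}$ and define $\sigma$ by $\sigma(x_i):=c_i$ for $i<\ell$, $\sigma(x_\ell):=a$, and $\sigma(y):=a'$. Then $\sigma(\tuple u)=f(\tuple u)\in T$, whereas $\sigma(\tuple v)=(a,a')\notin\{(b,b)\mid b\in B\}$, so clause~(4) forces $\sigma\notin\phi^\sA$. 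The plan is to arrange each $g_j$ to agree with $\sigma$ on all variables except one designated variable $z_j$ (with $z_i:=x_i$ for $i\leq\ell$ and $z_{\ell+1}:=y$); the near-unanimity identities then immediately give $w(g_1,\dots,g_{\ell+1})=\sigma$.

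\textbf{Constructing the $g_j$'s and the main obstacle.} I would take $g_{\ell+1}:=f$, which agrees with $\sigma$ on $\tuple u$ and has $f(y)=a\neq a'=\sigma(y)$. For $g_\ell$, I would use~(5) applied to $(a',a')$ to produce some $f'\in\phi^\sA$ with $f'(\tuple v)=(a',a')$ and $f'(\tuple u)\in T$, and translate $f'$ via an element of $\Aut(\sA)=\Aut(\sB)$ so that its first $\ell-1$ entries become $c_1,\dots,c_{\ell-1}$; this is justified by $(k-1)$-transitivity of $\Aut(\sB)$ on injective $\ell$-tuples when $\ell\leq k-1$, and by $k$-homogeneity combined with an orbit analysis inside $T$ when $\ell=k$. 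For $j\in[\ell-1]$, clause~(4) forces $g_j(\tuple u)\notin T$; I would start from a tuple in $\phi^\sA$ provided by~(2) whose $\tuple u$-projection lies outside $T$, and translate it via $\Aut(\sB)$ to agree with $\sigma$ on all coordinates except $x_j$. This last translation step is where I expect the main technical difficulty to lie: the combination of $(k-1)$-transitivity, $k$-homogeneity, and absence of $k$-algebraicity supplied by $k$-neoliberality is needed to simultaneously align $\ell$ prescribed values while realising the correct equality pattern among them. Once all the $g_j$'s are in place, applying $w$ produces $\sigma\in\phi^\sA$, the desired contradiction; the same strategy, carried out on the higher-arity pp-definable gadget $\bigwedge_{i=1}^m\phi(x_1,\dots,x_\ell,y_i)$ together with a local NU of arity $\ell+m$, rules out the case of strict widths exceeding $\ell$.
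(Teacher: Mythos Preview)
Your overall strategy---produce a tuple $\sigma\notin\phi^{\sA}$ with $\sigma(\tuple u)\in T$ and $\sigma(\tuple v)\in I_2^B$, and then exhibit it as $w(g_1,\dots,g_{\ell+1})$ for a local near-unanimity $w$---is natural, but the construction of the $g_j$'s does not go through with the tools you have. The paper proceeds quite differently, and the difference is essential rather than cosmetic.

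\medskip
\textbf{Where the gap lies.} For $j\in[\ell-1]$ you need $g_j\in\phi^{\sA}$ agreeing with $\sigma$ on all of $V\setminus\{x_j\}$; this amounts to prescribing the values of $g_j$ on an injective $\ell$-tuple of variables, namely $(c_1,\dots,\widehat{c_j},\dots,c_{\ell-1},a,a')$. You propose to start from some $h\in\phi^{\sA}$ with $h(\tuple u)\notin T$ and translate by an automorphism. Two obstructions arise. First, item~(2) of \Cref{def:implication} only supplies \emph{some} $h$ with $h(\tuple u)\notin T$; nothing guarantees that $h$ is injective on $V\setminus\{x_j\}$, and the device you would need here (essentially \Cref{cor:noequality}) is proved \emph{from} the present lemma, so invoking it is circular. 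Second, and more seriously, translation by $\Aut(\sB)$ requires the source and target $\ell$-tuples to lie in the same orbit. When $\ell=k$ this orbit is a genuine invariant: $(k-1)$-transitivity says nothing about $k$-tuples, and no $k$-algebraicity only tells you that orbits of the pointwise stabiliser of $k-1$ elements are infinite, not that a prescribed $k$-orbit is realised inside $\proj_{[k+1]\setminus\{j\}}(\phi^{\sA}\cap\{f:f(x_\ell)\neq f(y)\})$. Since the single element $a'$ must serve simultaneously for all $j\in[\ell-1]$ and for $g_\ell$, you are imposing several independent $k$-orbit constraints on one choice of $a'$, and there is no reason they are compatible. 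Your gadget $\bigwedge_i\phi(x_1,\dots,x_\ell,y_i)$ for larger strict widths inherits the same problem (and in addition may fail item~(1) of \Cref{def:implication} on pairs $(y_i,y_j)$).

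\medskip
\textbf{What the paper does instead.} The paper never tries to hit a full target $\sigma$ in one shot. It fixes a single local near-unanimity $f$ of \emph{arbitrary} arity $m$ and proves, by induction on $n$, that $f(b_\ell,\dots,b_\ell,a_\ell,\dots,a_\ell)=a_\ell$ whenever $b_\ell$ occurs $n$ times; taking $n=m$ yields $a_\ell=b_\ell$. The inductive step uses only \emph{three} tuples $\tuple t_a,\tuple t_b,\tuple t_{ab}\in R$, whose existence follows directly from no $k$-algebraicity (to obtain $b_\ell\neq a_\ell$ with $(a_1,\dots,a_{\ell-1},b_\ell)\in T$) and $2$-transitivity (to realise $(a_\ell,b_\ell)$ in $\proj_{\tuple v}(R)$). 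No simultaneous alignment of $\ell$ values is ever required, and the argument is uniform in $m$, so no separate gadget is needed for large strict width.
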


\begin{proof}
Enumerate the variables of $\phi$ by $x_1,\ldots,x_{\ell+1}$. Without loss of generality, $\tuple u=(x_1,\ldots,x_{\ell})$ and $\tuple v=(x_{\ell},x_{\ell+1})$ are such that $\phi$ is an $(T,\tuple u,=,\tuple v)$-implication in $\sA$. The set $\phi^{\sA}$ can then be viewed as an $(\ell+1)$-ary relation $R(x_1,\ldots,x_{\ell+1})$.

Using the $k$-neoliberality of $\sB$, we can find $a_1\ldots,a_\ell,b_1,\ldots,b_\ell\in B$ with $a_\ell\neq b_\ell$ such that all of the following hold:
\begin{itemize}
    \item $(a_1,\ldots,a_\ell)\in T$,
    \item $(a_1,\ldots,a_{\ell-1},b_\ell)\in T$,
    \item $(b_1,\ldots,b_{\ell-1},a_\ell,b_\ell)\in R$.
\end{itemize}
To see this, let $(a_1,\ldots,a_\ell)\in T$ be arbitrary.
The fact that the automorphism group of $\sB$ has no $k$-algebraicity implies that there exists $b_\ell\in B$ which is distinct from $a_\ell$ but which lies in the same orbit under the stabilizer of $\Aut(\sB)$ by $a_1,\ldots,a_{\ell-1}$. In particular, $(a_1,\ldots,a_\ell)$ and $(a_1,\ldots,a_{\ell-1},b_\ell)$ lie in the same orbit under $\Aut(\sB)$, and hence $(a_1,\ldots,a_{\ell-1},b_\ell)\in T$. Finally, since $\Aut(\sB)$ is $2$-transitive and $\proj_{(\ell,\ell+1)}(R)\not\subseteq \{(a,a)\mid a\in A\}$, we have $I_2^B\subseteq\proj_{(\ell,\ell+1)}(R)$, and hence we can find $b_1,\ldots,b_{\ell-1}$ such that $(b_1,\ldots,b_{\ell-1},a_\ell,b_\ell)\in R$.

Suppose for contradiction that $\sA$ has bounded strict width. Then by~\Cref{thm:characterization-strictwidth}, there exist $m\geq 3$ and an $m$-ary $f\in \Pol(\sA)$ which is a local near-unanimity function on $\{a_1\ldots,a_\ell,b_1,\ldots,b_\ell\}$. Striving for contradiction, we will show by induction that for every $n$ with $0\leq n\leq m$, it holds that $a_\ell=f(b_\ell,\dots,b_\ell,a_\ell,\dots,a_\ell)$, where $b_\ell$ appears $n$-times. Applying this for $m=n$, we get $a_\ell=f(b_\ell,\dots,b_\ell)=b_\ell$, a contradiction. For $n=0$, our statement follows directly from the fact that $f$ is a local near-unanimity function on $\{a_1\ldots,a_\ell,b_1,\ldots,b_\ell\}$. Let therefore $n\geq 1$ and let us assume that the statement holds for $n-1$.

Since $\phi$ is an $(T,=)$-implication, $(b_1,\ldots,b_{\ell-1},a_\ell)\notin T$, and $(a_1,\ldots,a_\ell,a_\ell), (a_1,\ldots,a_{\ell-1},b_\ell,b_\ell)\in R$. Put
\begin{align*}
    \tuple t_b := \begin{pmatrix}
       a_1 \\
       \ldots \\
       a_{\ell-1} \\
       b_\ell \\
       b_\ell
    \end{pmatrix},
    \tuple t_{ab} := \begin{pmatrix}
       b_1 \\
       \ldots \\
       b_{\ell-1} \\
       a_\ell \\
       b_\ell
    \end{pmatrix},
    \tuple t_a := \begin{pmatrix}
       a_1 \\
       \ldots \\
       a_{\ell-1} \\
       a_\ell \\
       a_\ell
    \end{pmatrix}.
\end{align*}
By the discussion above, $\tuple t_x \in R$ for every $x\in\{a,b,ab\}$. 
Since $f$ preserves $R$, it follows that $$f(\tuple t_b,\dots, \tuple t_b,\tuple t_{ab},\tuple t_a,\ldots,\tuple t_a)\in R,$$ where $\tuple t_b$ appears exactly $(n-1)$-many times. This means that 
\begin{align*}
    \begin{pmatrix}
       a_1 &=& f(a_1,\dots,a_1,b_1,a_1,\ldots,a_1) \\
       \ldots && \ldots \\
       a_{\ell-1} &=& f(a_{\ell-1},\dots,a_{\ell-1},b_{\ell-1},a_{\ell-1},\ldots,a_{\ell-1}) \\
       a_\ell &=& f(b_\ell,\dots,b_\ell,a_\ell,a_\ell,\ldots,a_\ell) \\
       &&f(b_\ell,\dots,b_\ell,b_\ell,a_\ell,\ldots,a_\ell)
    \end{pmatrix} \in R.
\end{align*}

In the penultimate row, we use the inductive assumption since $b_\ell$ appears exactly $(n-1)$-times as an argument in $f$. In the last row, $b_\ell$ appears exactly $n$-many times. Since $(a_1,\ldots,a_\ell)\in T$ and $\phi$ is an $(T,=)$-implication, we get $f(b_\ell,\dots, b_\ell,a_\ell,\ldots,a_\ell)=a_\ell$, where $b_\ell$ appears $n$-many times, as desired. 
\end{proof}

\begin{lemma}\label{lemma:l+2=rel}
Let $k\geq 3$, and let $\sB$ be $k$-neoliberal. Let $\sA$ be a first-order expansion of $\sB$, let $\ell\in\{2,\ldots,k\}$, let $T\subseteq I_{\ell}^B$, and let $\phi$ be an $(T,=)$-implication in $\sA$ with $\ell+2$ variables. Then $\sA$ does not have bounded strict width.
\end{lemma}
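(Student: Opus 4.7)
The plan is to assume for contradiction that $\sA$ has bounded strict width, label the variables of $\phi$ as $x_1,\ldots,x_{\ell+2}$ with $\tuple u=(x_1,\ldots,x_\ell)$, $\tuple v=(x_{\ell+1},x_{\ell+2})$, and reach a contradiction by a strategy modelled on the proof of \Cref{lemma:l+1=rel}. The new difficulty is that $\tuple u$ and $\tuple v$ are now disjoint, so feeding a near-unanimity $f$ with two diagonal witnesses $(\tuple a,c,c),(\tuple a',d,d)\in R$ whose first $\ell$ coordinates differ, plus one off-diagonal ``bad'' tuple, produces inputs having three distinct values at each of the first $\ell$ positions, which the near-unanimity property cannot collapse. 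I first record that, by conditions (4)--(5) of \Cref{def:implication}, equality on $A$ is pp-definable in $\sA$ through $\exists\tuple x\,(T(\tuple x)\wedge\phi(\tuple x,y_1,y_2))$, and that $R$ and $T$, being pp-definable in the first-order expansion $\sA$ of $\sB$, are $\Aut(\sB)$-invariant. For each $\tuple a=(a_1,\ldots,a_\ell)\in T$ I set $S_{\tuple a}=\{y\in A:(\tuple a,y,y)\in R\}$; condition (5) forces $\bigcup_{\tuple a\in T}S_{\tuple a}=A$. The argument splits into two cases, depending on whether some single $\tuple a\in T$ admits $|S_{\tuple a}|\geq 2$.

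In the principal case such an $\tuple a$ and distinct $c,d\in S_{\tuple a}$ exist, so $(\tuple a,c,c),(\tuple a,d,d)\in R$ with the first $\ell$ coordinates aligned. Condition (3) of $\phi$ gives $(\tuple b',c',d')\in R$ with $c'\neq d'$, forcing $\tuple b'\notin T$ by condition (4); $(k-1)$-transitivity of $\Aut(\sB)$ then supplies $\sigma\in\Aut(\sB)$ with $\sigma(c',d')=(c,d)$, so $(\tuple b,c,d):=\sigma(\tuple b',c',d')$ lies in $R$ with $\tuple b\notin T$. I now take a local near-unanimity $f\in\Pol(\sA)$ of arity $m\geq 3$ on $\{a_1,\ldots,a_\ell,b_1,\ldots,b_\ell,c,d\}$ and, for each $n\in\{1,\ldots,m\}$, apply $f$ to the $m$-sequence of $R$-tuples consisting of $n-1$ copies of $(\tuple a,c,c)$, one copy of $(\tuple b,c,d)$, and $m-n$ copies of $(\tuple a,d,d)$. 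Near-unanimity pins the first $\ell$ coordinates of the output to $\tuple a\in T$; coordinates $\ell+1$ and $\ell+2$ become $g(n)$ and $g(n-1)$ respectively, with $g(n):=f(c,\ldots,c,d,\ldots,d)$ where $c$ occupies positions $1,\ldots,n$. The $(T,=)$-implication applied to the output forces $g(n)=g(n-1)$ for every $n\in\{1,\ldots,m\}$, and the chain $d=g(0)=g(1)=\cdots=g(m)=c$ contradicts $c\neq d$.

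If instead $|S_{\tuple a}|\leq 1$ for every $\tuple a\in T$, there is a partial function $h\colon T\to A$ with $\{h(\tuple a)\}=S_{\tuple a}$, surjective by (5) and $\Aut(\sB)$-equivariant by the invariance of $R$ and $T$. The no-$k$-algebraicity of $\sB$ forces the fixed-point set of the stabiliser of $\tuple a$ in $\Aut(\sB)$ to be $\{a_1,\ldots,a_\ell\}$ --- directly for $\ell\leq k-1$, and by combining this information with the $k$-homogeneity of $\sB$ for the boundary case $\ell=k$ --- so $h(\tuple a)\in\{a_1,\ldots,a_\ell\}$, and by equivariance there is an index $j$ constant on each $\Aut(\sB)$-orbit $O\subseteq T$ with $h(\tuple a)=a_j$. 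Any such $O$ is pp-definable in $\sA$ (either $O=T=I_\ell^B$ by $(k-1)$-transitivity when $\ell\leq k-1$, or $O$ is one of the $k$-ary relations of $\sB$ when $\ell=k$), and I verify, using conditions (1)--(5) on $\phi$ together with the definition of $h$, that the pp-formula $\phi^*(x_1,\ldots,x_\ell,y_1):=\exists y_2\,\phi(x_1,\ldots,x_\ell,y_1,y_2)$ is an $(O,=)$-implication in $\sA$ with $\tuple u^*=(x_1,\ldots,x_\ell)$ and $\tuple v^*=(x_j,y_1)$. This is an implication on $\ell+1$ variables fitting the hypotheses of \Cref{lemma:l+1=rel}, so that lemma supplies the contradiction.

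The hard part is precisely the alignment of the first $\ell$ coordinates of the two diagonal witnesses feeding the near-unanimity: Case 1 is where this alignment is available for free, while Case 2 recovers from its absence by extracting the rigid functional dependency on $T$ encoded by $R$ and peeling off one variable to invoke the previous lemma. The subtle technical point within Case 2 is establishing that $h$ must pick a coordinate of its input rather than some element in the algebraic closure of $\{a_1,\ldots,a_\ell\}$; this is immediate from no-$k$-algebraicity when $\ell\leq k-1$, and in the boundary case $\ell=k$ requires combining the fixed-point information from the stabilisers of the $(k-1)$-subsets of $\tuple a$ with $k$-homogeneity of $\sB$.
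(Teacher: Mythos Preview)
Your Case~1 is correct and in fact cleaner than the paper's main argument when it applies: aligning all $\ell$ coordinates of the two diagonal witnesses $(\tuple a,c,c)$ and $(\tuple a,d,d)$ lets the near-unanimity property pin the first $\ell$ output coordinates to $\tuple a$ directly, and the chain $d=g(0)=\cdots=g(m)=c$ drops out immediately. Case~2 for $\ell\leq k-1$ is also fine: there the stabiliser of $\ell\leq k-1$ points embeds into a stabiliser of $k-1$ points, so no-$k$-algebraicity really does force $h(\tuple a)\in\{a_1,\ldots,a_\ell\}$, and your $\phi^*$ is a valid $(T,=)$-implication on $\ell+1$ variables.

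The gap is Case~2 with $\ell=k$. You assert that the fixed-point set of $\text{Stab}_{\Aut(\sB)}(a_1,\ldots,a_k)$ is exactly $\{a_1,\ldots,a_k\}$, ``by combining no-$k$-algebraicity with $k$-homogeneity''. But no-$k$-algebraicity only controls stabilisers of $k-1$ points; what you need here is no-$(k{+}1)$-algebraicity, which is \emph{not} part of $k$-neoliberality. Concretely, $k$-homogeneity reduces the task of moving $p\notin\{a_1,\ldots,a_k\}$ inside $\text{Stab}(a_1,\ldots,a_k)$ to finding $p'\neq p$ such that $(a_1,\ldots,\hat a_j,\ldots,a_k,p')$ lies in the same orbit as $(a_1,\ldots,\hat a_j,\ldots,a_k,p)$ simultaneously for every $j\in[k]$. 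Each of these $k$ conditions has a solution by no-$k$-algebraicity, but nothing in the hypotheses guarantees a common solution; the intersection of the $k$ relevant orbits of $p$ could be $\{p\}$. So the deduction $h(\tuple a)\in\{a_1,\ldots,a_k\}$ is unjustified, and with it the reduction to \Cref{lemma:l+1=rel} in this subcase.

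The paper avoids this issue by never demanding a single $\tuple a\in T$ with $|S_{\tuple a}|\geq 2$. Instead it builds two diagonal witnesses $(a_1,\ldots,a_\ell,a_{\ell+1},a_{\ell+1})$ and $(c_1,a_2,\ldots,a_\ell,b_{\ell+1},b_{\ell+1})$ that agree only on coordinates $2,\ldots,\ell$: it uses no-$k$-algebraicity on the stabiliser of the $\ell-1\leq k-1$ points $a_2,\ldots,a_\ell$ to move $a_{\ell+1}$ to some $b_{\ell+1}\neq a_{\ell+1}$, and lets the first coordinate change to $c_1$ in the process. The price is that the first output coordinate $f(c_1,\ldots,c_1,b_1,a_1,\ldots,a_1)$ is no longer visibly $a_1$; the paper therefore runs a parallel application of $f$ to auxiliary tuples $\tuple s_a,\tuple s_{ab},\tuple s_b\in T$ to certify that the first $\ell$ output coordinates still land in $T$. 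This two-layer argument is exactly what lets the paper stay within stabilisers of at most $k-1$ points throughout, so that no-$k$-algebraicity suffices.
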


\begin{proof}
Let us enumerate the variables of $\phi$ by $x_1,\ldots,x_{\ell+2}$. Without loss of generality, $\tuple u=(x_1,\ldots,x_{\ell})$ and $\tuple v=(x_{\ell+1},x_{\ell+2})$ are such that $\phi$ is an $(T,\tuple u,=,\tuple v)$-implication in $\sA$. Note that $S(\tuple u)\cap S(\tuple v)=\emptyset$ by the definition of an implication and since $\phi$ has $\ell+2$ variables. The set $\phi^{\sA}$ can then be viewed as an $(\ell+2)$-ary relation $R(x_1,\ldots,x_{\ell+2})$.
Moreover, we can assume that for every $i\in[\ell]$, and every $j\in\{\ell+1,\ell+2\}$, $\phi$ does not entail $T(\tuple u)\Rightarrow x_i=x_j$ in $\sA$;
otherwise, the result follows immediately from~\Cref{lemma:l+1=rel}. 

Suppose for contradiction that $\sA$ has bounded strict width. Using the $k$-neoliberality of $\sB$, we can find $a_1,\ldots,a_{\ell+1},b_1,\ldots,b_{\ell+1},c_1,d_2,\ldots,d_\ell\in B$ with $a_{\ell+1}\neq b_{\ell+1}$ such that all of the following hold:
\begin{itemize}
    \item $(a_1,\ldots ,a_\ell)\in T$,
    \item $(a_1,\ldots ,a_\ell,a_{\ell+1})\in I_{\ell+1}^B$,
    \item $(a_1,\ldots ,a_\ell,a_{\ell+1},a_{\ell+1})\in R$,
    \item $(c_1,a_2,\ldots,a_\ell)\in T$,
    \item $(c_1,a_2,\ldots,a_\ell,b_{\ell+1})\in I_{\ell+1}^B$,
    \item $(c_1,a_2,\ldots,a_\ell,b_{\ell+1},b_{\ell+1})\in R$,
    \item $(b_1,\ldots ,b_\ell,a_{\ell+1},b_{\ell+1})\in R$,
    \item $(b_1,d_2,\ldots,d_\ell)\in T$.
\end{itemize}
To find these elements, let first $(a_1,\ldots ,a_\ell)\in T$ be arbitrary. By our assumption above, $\phi$ does not entail $T(\tuple u)\Rightarrow x_i=x_j$ in $\sA$ for any $i\in[\ell]$, $j\in\{\ell+1,\ell+2\}$, and we can show that there exists $a_{\ell+1}\in B$ such that $(a_1,\ldots ,a_\ell,a_{\ell+1})\in I_{\ell+1}^B\cap\proj_{(1,\ldots,\ell+1)}(R)$. Indeed, for every $i\in[\ell]$, there exists $f_i\in \phi^{\sA}$ with $f_i(\tuple u)\in T$ and $f_i(x_i)\neq f_i(x_{\ell+1})$. By the same argument as in the proof of~\Cref{cor:libcores-purelyinj}, $\sA$ has a binary injective polymorphism $f$. Setting $g(x):=f(f\dots(f(f_1(x),f_2(x)),f_3(x))\dots,f_\ell(x))$, we get that $g\in\phi^{\sA}$, $g(\tuple u)\in T$, and $g(x_i)\neq g(x_{\ell+1})$ for every $i\in [\ell]$, and we can therefore set $a_{\ell+1}:=g(x_{\ell+1})$.
Since $\phi$ is an $(T,=)$-implication, it follows that $(a_1,\ldots ,a_\ell,a_{\ell+1},a_{\ell+1})\in R$.

By the fact that $\sB$ has no $k$-algebraicity, we can find $b_{\ell+1}$ such that $a_{\ell+1}\neq b_{\ell+1}$ and both these elements lie in the same orbit under the stabilizer of $\Aut(\sB)$ by $a_2,\ldots,a_{\ell}$. In particular, it follows that $(a_2,\ldots,a_\ell,b_{\ell+1})$ and $(a_2,\ldots,a_{\ell+1})$ lie in the same orbit under $\Aut(\sB)$, and hence there exists $c_1\in B$ such that $(c_1,a_2,\ldots,a_\ell,b_{\ell+1})$ and $(a_1,\ldots,a_{\ell+1})$ lie in the same orbit under $\Aut(\sB)$. In particular, $(c_1,a_2,\ldots,a_\ell)\in T$ and it follows that $(c_1,a_2,\ldots,a_\ell,b_{\ell+1},b_{\ell+1})\in R$. Since $(a_{\ell+1},b_{\ell+1})\in I^B_2\subseteq \proj_{(\ell+1,\ell+2)}(R)$, there exist $b_1,\ldots,b_\ell\in B$ such that $(b_1,\ldots ,b_\ell,a_{\ell+1},b_{\ell+1})\in R$. Finally, since $\Aut(\sB)$ is $1$-transitive, $\proj_{(1)}(T)=B$, and hence there exist $d_2,\ldots,d_\ell\in B$ such that  $(b_1,d_2,\ldots,d_\ell)\in T$.

Since $\sA$ has bounded strict width, \Cref{thm:characterization-strictwidth} yields that there exist $f\in \Pol(\sA)$ of arity $m\geq 3$ which is a local near-unanimity operation on $\{a_1,\ldots,a_{\ell+1},b_1,\ldots,b_{\ell+1},c_1,d_2,\ldots,d_\ell\}$. Similarly as in the proof of~\Cref{lemma:l+1=rel}, we will show by induction that for every $n$ with $0\leq n\leq m$, it holds that $a_{\ell+1}=f(b_{\ell+1},\dots,b_{\ell+1},a_{\ell+1},\dots,a_{\ell+1})$, where $b_{\ell+1}$ appears $n$-times. Applying this for $m=n$, we get $a_{\ell+1}=f(b_{\ell+1},\dots,b_{\ell+1})=b_{\ell+1}$, a contradiction. For $n=0$, our statement follows directly from the fact that $f$ is a local near-unanimity function on $\{a_1\ldots,a_{\ell+1},b_1,\ldots,b_{\ell+1}\}$. Let therefore $n\geq 1$ and let us assume that the statement holds for $n-1$.
Put
\begin{align*}
    \tuple t_b:=\begin{pmatrix}
       c_1 \\
       a_2 \\
       \ldots \\
       a_\ell \\
       b_{\ell+1} \\
       b_{\ell+1}
    \end{pmatrix},
    \tuple t_{ab}:=\begin{pmatrix}
       b_1 \\
       b_2 \\
       \ldots \\
       b_\ell \\
       a_{\ell+1} \\
       b_{\ell+1}
    \end{pmatrix},
    \tuple t_a :=\begin{pmatrix}
       a_1 \\
       a_2 \\
       \ldots \\
       a_\ell \\
       a_{\ell+1} \\
       a_{\ell+1}
    \end{pmatrix}.
\end{align*}
By the discussion above, $\tuple t_x \in R$ for every $x\in\{a,b,ab\}$.
Since $f$ preserves $R$, it follows that $\tuple t:= f(\tuple t_b,\dots,\tuple t_b,\tuple t_{ab},\tuple t_a,\dots,\tuple t_a)\in R$, where $\tuple t_b$ appears exactly $(n-1)$-many times, i.e.,
\begin{align*}
    \tuple t=
    \begin{pmatrix}
       &&f(c_1,\dots,c_1,b_1,a_1,\ldots,a_1) \\
       a_2 &=& f(a_2,\dots,a_2,b_2,a_2,\ldots,a_2) \\
       \ldots && \ldots \\
       a_\ell &=& f(a_\ell,\dots,a_\ell,b_\ell,a_\ell,\ldots,a_\ell) \\
       a_{\ell+1} &=& f(b_{\ell+1},\dots,b_{\ell+1},a_{\ell+1},a_{\ell+1},\ldots,a_{\ell+1}) \\
       &&f(b_{\ell+1},\dots,b_{\ell+1},b_{\ell+1},a_{\ell+1},\ldots,a_{\ell+1})
    \end{pmatrix} \in R.
\end{align*}
Note that we are using the inductive assumption in the penultimate row. Put
\begin{align*}
    \tuple s_b:=\begin{pmatrix}
       c_1 \\
       a_2 \\
       \ldots \\
       a_\ell
    \end{pmatrix},
    \tuple s_{ab}:=\begin{pmatrix}
       b_1 \\
       d_2 \\
       \ldots \\
       d_\ell
    \end{pmatrix},
    \tuple s_a :=\begin{pmatrix}
       a_1 \\
       a_2 \\
       \ldots \\
       a_\ell
    \end{pmatrix}.
\end{align*}
By construction, $\tuple s_x \in T$ for every $x\in\{a,b,ab\}$.
Since $f\in\Pol(\sA)$ and $T$ is pp-definable from $\sA$, $f$ preserves $T$ and it follows that $\tuple s:= f(\tuple s_b,\dots,\tuple s_b,\tuple s_{ab},\tuple s_a,\dots,\tuple s_a)\in T$, where $\tuple s_b$ appears exactly $(n-1)$-many times, i.e.,
\begin{align*}
    \tuple s=
    \begin{pmatrix}
       &&f(c_1,\dots,c_1,b_1,a_1,\ldots,a_1) \\
       a_2 &=& f(a_2,\dots,a_2,d_2,a_2,\ldots,a_2) \\
       \ldots && \ldots \\
       a_\ell &=& f(a_\ell,\dots,a_\ell,d_\ell,a_\ell,\ldots,a_\ell)
    \end{pmatrix} \in T.
\end{align*}
Hence, $\tuple s$ is precisely the tuple containing the first $\ell$ entries of $\tuple t$. Since $\phi$ entails $T(\tuple u)\Rightarrow x_{\ell+1}=x_{\ell+2}$ in $\sA$, it follows that the last two entries of $\tuple t$ are equal, i.e., $$a_{\ell+1} = f(b_{\ell+1},\dots,b_{\ell+1},a_{\ell+1},a_{\ell+1},\ldots,a_{\ell+1}) = f(b_{\ell+1},\dots,b_{\ell+1},b_{\ell+1},a_{\ell+1},\ldots,a_{\ell+1}),$$ where $b_{\ell+1}$ appears exactly $n$-times on the right side.
\end{proof}

The following corollary follows from~\Cref{lemma:l+1=rel,lemma:l+2=rel}.

\begin{corollary}\label{cor:noequality}
Let $k\geq 3$, let $\sA$ be a first-order expansion of a $k$-neoliberal $\sB$, and suppose that $\sA$ has bounded strict width. Let $\phi$ be a pp-formula over the signature of $\sA$ with variables from a set $V$ such that for all distinct $x,y\in V$, $\proj_{(x,y)}(\phi^{\sA})\not\subseteq\{(a,a)\mid a\in A\}$, and let $g\in\phi^{\sA}$. Then there exists an injective $h\in\phi^{\sA}$ with the property that for every $r\geq 1$ and for every $\tuple v\in I^V_r$, if $g(\tuple v)$ is injective, then $g(\tuple v)$ and $h(\tuple v)$ belong to the same orbit under $\Aut(\sB)$.
\end{corollary}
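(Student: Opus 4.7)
We argue by contradiction: suppose no injective $h \in \phi^{\sA}$ agrees with $g$'s $\Aut(\sB)$-orbits on all tuples $\tuple v$ with $g(\tuple v)$ injective, and derive a contradiction with the bounded strict width of $\sA$ via~\Cref{lemma:l+1=rel} or~\Cref{lemma:l+2=rel}. To encode the orbit-matching requirement, set $\psi := \phi \wedge \bigwedge_{\tuple w} O_{\tuple w}(\tuple w)$, where $\tuple w$ ranges over $k$-tuples of distinct variables on which $g$ is injective and $O_{\tuple w}$ denotes the $\Aut(\sB)$-orbit of $g(\tuple w)$ (a $\sB$-relation by $k$-neoliberality, hence pp-definable in $\sA$). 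By $k$-homogeneity, orbits of longer injective sub-tuples follow from $k$-sub-orbit constraints, while shorter injective sub-tuples are absorbed by extension to a $k$-supertuple whenever $|g(V)| \geq k$; the case $|g(V)| < k$ forces $\psi = \phi$, and the argument then reduces to the binary-injection step below together with the hypothesis $\proj_{(x,y)}(\phi^{\sA}) \not\subseteq \{(a,a) : a \in A\}$. In either regime, the task becomes finding an injective element of $\psi^{\sA}$, which contains $g$.

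Since $\sA$ has bounded strict width (hence bounded width), the argument in the proof of~\Cref{cor:libcores-purelyinj} supplies a binary injection $f \in \Pol(\sA)$. If $\psi^{\sA}$ separated every pair of distinct variables---i.e., if for each $u \neq v$ there were $h_{uv} \in \psi^{\sA}$ with $h_{uv}(u) \neq h_{uv}(v)$---then iterating $f$ over all pairs would yield an injective element of $\psi^{\sA}$, contradicting the assumption. So some pair $(x, y)$ satisfies $\psi \models x = y$, and necessarily $g(x) = g(y)$. Choose such a pair minimizing $m_{xy}$, the least number of $k$-orbit conjuncts of $\psi$ needed (together with $\phi$) to entail $x = y$; fix a witnessing set $S$ of size $m_{xy}$, pick $\tuple w \in S$, and set $\phi_0 := \phi \wedge \bigwedge_{\tuple w' \in S \setminus \{\tuple w\}} O_{\tuple w'}(\tuple w')$. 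Minimality of $S$ yields $\phi_0 \not\models x = y$ while $\phi_0 \wedge O_{\tuple w}(\tuple w) \models x = y$. Let $V' := \scope(\tuple w) \cup \{x, y\}$ and $\chi' := \exists (V \setminus V').\,\phi_0$; since $g$ is injective on $\scope(\tuple w)$ yet $g(x) = g(y)$, one has $|V'| \in \{k+1, k+2\}$.

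We then show that $\chi'$ is an $(O_{\tuple w}, =)$-implication in $\sA$ with $\tuple u := \tuple w$ and $\tuple v := (x, y)$, in the sense of~\Cref{def:implication}. The implication property~(4) is immediate from $\phi_0 \wedge O_{\tuple w}(\tuple w) \models x = y$. For condition~(1), the minimality of $m_{xy}$ is crucial: any other pair $(x', y') \in V'$ with $\chi' \models x' = y'$ would lie in $\{(u,v) : g(u) = g(v)\}$ (witnessed by $g$) and would satisfy $m_{x'y'} \leq m_{xy} - 1$, contradicting the choice of $(x,y)$; for non-coincident pairs, $g$ itself witnesses. Conditions~(2), (3), and~(5) rely on the fact that since every relation first-order definable in $\sB$ is $\Aut(\sB)$-invariant, we have $\Aut(\sA) = \Aut(\sB)$; applying $\alpha \in \Aut(\sA)$ to $g$ then yields elements of $\chi'^{\sA}$ realizing every element of $O_{\tuple w}$ at $\tuple w$ and every value $a \in A$ at $(x,y)$, while the strict containments in~(2) and~(3) follow from the existence of $h \in \chi'^{\sA}$ with $h(x) \neq h(y)$ (whence $h(\tuple w) \notin O_{\tuple w}$), guaranteed by $\phi_0 \not\models x = y$. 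Finally we apply~\Cref{lemma:l+1=rel} when $|V'| = k+1$, or~\Cref{lemma:l+2=rel} when $|V'| = k+2$, with $\ell := k$, to contradict the bounded strict width hypothesis.

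The principal obstacle is verifying condition~(1) of~\Cref{def:implication} for $\chi'$; this is precisely why $(x, y)$ must be chosen to minimize $m_{xy}$, so that no variable of $\scope(\tuple w)$ whose $g$-value coincides with $g(x) = g(y)$ produces a parasitic forced equality inside $V'$. Once this minimality is fixed, the match between $V'$ and the hypotheses of the previous two lemmas is clean, and the contradiction follows directly.
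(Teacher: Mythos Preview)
Your argument is correct and follows the same overall strategy as the paper's: encode the orbit-matching requirement as a conjunction $\psi$ of $\phi$ with orbit constraints, show that if $\psi$ forces an equality then one can extract a $(T,=)$-implication with at most $k+2$ free variables contradicting \Cref{lemma:l+1=rel} or \Cref{lemma:l+2=rel}, and conclude that $\psi^{\sA}$ has an injective element.

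The technical setup differs in two places. First, the paper picks a \emph{maximal} $W'\subseteq W$ for which $\psi'$ forces no equality, so that condition~(1) of \Cref{def:implication} is immediate for the existentially quantified formula; you instead pick a \emph{minimal} witnessing set $S$ for a pair $(x,y)$ that itself minimises $m_{xy}$, and recover condition~(1) from this double minimality. These are dual constructions, and your verification of condition~(1) is sound: if $\chi'$ forced some other equality on $V'$, the offending pair would be coincident under $g$ and witnessed by $S\setminus\{\tuple w\}$, giving $m_{x'y'}\le m_{xy}-1$. Second, to pass from ``$\psi$ forces no equality'' to ``$\psi^{\sA}$ has an injective element'', the paper reruns the implication argument on $\psi\wedge I_m$, whereas you combine the pairwise separating witnesses directly via the binary injection from the proof of \Cref{cor:libcores-purelyinj}. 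Your route is slightly more economical. One minor remark: the phrase ``absorbed by extension to a $k$-supertuple'' is not quite the right justification for sub-$k$ tuples; what you actually use is that $(k-1)$-transitivity makes every injective orbit of length $<k$ the full $I^A_r$, so any injective $h$ automatically matches $g$ there.
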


\begin{proof}
    Let $W$ be the set of all tuples $\tuple v\in I^V_r$ with $r\leq k$ and such that $g(\tuple v)$ is injective; we denote the orbit of $g(\tuple v)$ under $\Aut(\sB)$ by $O_{\tuple v}$. Note that all these orbits are pp-definable from $\sA$ -- indeed, for $r=k$ this is the fact that $\sA$ is a first-order expansion of $\sB$, and for $r<k$ this follows by the $r$-transitivity of $\Aut(\sB)$. Hence, the formula $$\psi\equiv \phi \wedge\bigwedge\limits_{\tuple v\in W}O_{\tuple v}(\tuple v)$$
    is equivalent to a pp-formula over $\sA$. Moreover, it does not entail in $\sA$ any equality among any two of its variables: Otherwise, take a subset $W' \subseteq W$ which is maximal with respect to inclusion with the property that if we replace $W$ by $W'$ in the above definition, then the resulting formula $\psi'$ does not entail in $\sA$ any equality among any two of its variables. Then taking any $\tuple v\in W\backslash W'$, $\psi'$ entails $O_{\tuple v}(\tuple v)\Rightarrow x=y$ in $\sA$ for some distinct $x,y\in V$ and by existentially quantifying all variables of $\psi'$ except for the variables from the set $\scope(\tuple v)\cup\{x,y\}$, we obtain an $(O_{\tuple v},\tuple v,=,(x,y))$-implication in $\sA$, in contradiction with~\Cref{lemma:l+1=rel,lemma:l+2=rel}.

    Since $\Aut(\sB)$ is $(k-1)$-transitive and $k\geq 3$, it is in particular $2$-transitive, and hence the relation $I_2^B$ is pp-definable from $\sB$ and hence also from $\sA$. It follows that $I_m^B$, where $m$ is the number of variables of $\psi$, is pp-definable from $\sA$ and hence, so is $\psi\wedge I_m$. Moreover, $\psi\wedge I_m$ is non-empty since otherwise, we would obtain an $(I_2^B,=)$-implication in contradiction with~\Cref{lemma:l+1=rel,lemma:l+2=rel} as in the previous paragraph with $I_2(x,y)$ for every $(x,y)\in I^V_2$ in the role of $O_{\tuple v}(\tuple v)$.

    Finally, observe that any $h\in (\psi\wedge I_m)^{\sA}$ has the desired property by the $k$-homogeneity of $\Aut(\sB)$.
\end{proof}

\subsection{Critical relations}

We adapt the notion of a critical relation from~\cite{Wrona:2020b} to our situation and prove that no structure which satisfies the assumptions of~\Cref{thm:implsimple} can pp-define such relation.

\begin{definition}\label{def:kcritical}
Let $k\geq 2$, and let $\sA$ be a relational structure. Let $C,D\subseteq I_k^A$ be disjoint and pp-definable from $\sA$, let $V$ be a set of $k+1$ variables, and let $\tuple u,\tuple v\in I^V_k$ be such that $\scope(\tuple u)\cup\scope(\tuple v)=V$ and such that $u_1,v_1\notin \scope(\tuple u)\cap \scope(\tuple v)$.
We say that a pp-formula $\phi$ over the signature of $\sA$ with variables from $V$ is \emph{critical in $\sA$ over $(C,D,\tuple u,\tuple v)$} if all of the following hold:
\begin{itemize}
    \item $\phi$ is a $(C,\tuple u,C,\tuple v)$-implication in $\sA$,
    \item $D \subsetneq \proj_{\tuple u}(\phi^{\sA})\subseteq I_k^A$,
    \item $D \subsetneq \proj_{\tuple v}(\phi^{\sA})\subseteq I_k^A$,
    \item for every $\tuple a\in D$, there exists $f\in\phi^{\sA}$ such that $f(\tuple u)\in D$ and $f(\tuple v)=\tuple a$.
\end{itemize}
\end{definition}

\begin{lemma}\label{lemma:kcritical}
Let $k\geq 3$, and let $\sA$ be a first-order expansion of a $k$-neoliberal relational structure $\sB$. Suppose that there exists a pp-formula $\phi$ which is critical in $\sA$ over $(C,D,\tuple u,\tuple v)$ for some $k$-ary $C,D$, and some $\tuple u,\tuple v$. Then $\sA$ does not have bounded strict width.
\end{lemma}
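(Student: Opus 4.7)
The plan is to assume for contradiction that $\sA$ has bounded strict width and, invoking \Cref{thm:characterization-strictwidth}, to fix a local near-unanimity polymorphism $f$ of arity $m\ge k+1$ on a sufficiently large finite subset of $A$. Then, in the spirit of \Cref{lemma:l+1=rel,lemma:l+2=rel}, I would construct specific tuples in $R:=\phi^{\sA}$ and apply $f$ position-wise to derive, by induction, an equation between two elements that have been chosen to be distinct.

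First I would set up variables: after renaming, assume that $\phi$ has variables $x_1,\ldots,x_{k+1}$ with $\tuple u=(x_1,x_2,\ldots,x_k)$ and $\tuple v=(x_{k+1},x_2,\ldots,x_k)$, and view $\phi^{\sA}$ as a $(k+1)$-ary relation $R$. Using $(k-1)$-transitivity of $\Aut(\sB)$, both $C$ and $D$ project onto every injective $(k-1)$-tuple at positions $(2,\ldots,k)$, so I can pick $a_2,\ldots,a_k$ with extensions $a_1,b_1$ making $(a_1,a_2,\ldots,a_k)\in C$ and $(b_1,a_2,\ldots,a_k)\in D$, and all $k+1$ values distinct. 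By no $k$-algebraicity I obtain further elements $a'_1, b'_1$ in the same $\Aut(\sB)$-stabilizer orbits as $a_1$ and $b_1$ over $(a_2,\ldots,a_k)$. Item~(5) of \Cref{def:implication} and bullet~(4) of \Cref{def:kcritical}, together with \Cref{cor:noequality} to make the witness mappings injective, then deliver a $CC$-tuple $\tuple t_a\in R$ whose $\tuple v$-entry is $(a_1,a_2,\ldots,a_k)$ and a $DD$-tuple $\tuple t_b\in R$ whose $\tuple v$-entry is $(b_1,a_2,\ldots,a_k)$; moreover, on positions $2,\ldots,k$ both tuples read $a_2,\ldots,a_k$.

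I would then prove by induction on $n\in\{0,1,\ldots,m\}$ that
\[
a_1\;=\;f(\underbrace{b_1,\ldots,b_1}_{n},a_1,\ldots,a_1),
\]
so that $n=m$ yields $a_1=b_1$, contradicting $C\cap D=\emptyset$. The base case $n=0$ follows from local near-unanimity. For the step, following the paradigm of \Cref{lemma:l+2=rel}, one constructs a bridging tuple $\tuple t_{ab}\in R$ whose $\tuple u$-projection is engineered to lie outside $C$ (so that the $(C,C)$-implication imposes no constraint on its $\tuple v$-projection), and applies $f$ to $(n-1)$ copies of $\tuple t_b$, one copy of $\tuple t_{ab}$, and $(m-n)$ copies of $\tuple t_a$. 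Positions $2,\ldots,k$ of the resulting tuple in $R$ are $a_2,\ldots,a_k$ by near-unanimity.

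The hard part is the derivation of the inductive step: unlike the $(T,=)$-implications used in \Cref{lemma:l+1=rel,lemma:l+2=rel}, the $(C,C)$-implication does not directly force equality of any two variables, so I must substitute the disjointness $C\cap D=\emptyset$. On the one hand, applying $f$ in parallel to the $k$-ary tuples $(a_1,a_2,\ldots,a_k),(a'_1,a_2,\ldots,a_k)\in C$ and using near-unanimity on the shared positions, the $\tuple u$-projection of the resulting tuple in $R$ lies in $C$, whence by the $(C,C)$-implication its $\tuple v$-projection lies in $C$ as well; on the other hand, a parallel computation using the $D$-tuples $(b_1,a_2,\ldots,a_k),(b'_1,a_2,\ldots,a_k)\in D$ and a careful choice of $\tuple t_{ab}(x_{k+1})$ forces the $x_{k+1}$-entry of that $\tuple v$-projection into a $D$-orbit over $(a_2,\ldots,a_k)$; reconciling these via $C\cap D=\emptyset$ together with the inductive hypothesis yields the equation at level $n$. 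The delicate point, and the place where the no $k$-algebraicity of $\sB$ is exploited, is to engineer $\tuple t_{ab}$ so that both preservation arguments produce the same value of $f$ by supplying enough auxiliary elements in both $C$- and $D$-extensions of the common tail $(a_2,\ldots,a_k)$.
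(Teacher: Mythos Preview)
Your overall strategy---assume bounded strict width, fix a (quasi) near-unanimity polymorphism $f$, manufacture tuples in $R=\phi^{\sA}$, and extract a contradiction from $C\cap D=\emptyset$---matches the paper. But the inductive step you sketch does not close, and the target of your induction is the wrong invariant. You aim to prove the \emph{equality} $a_1=f(b_1^{(n)},a_1^{(m-n)})$; however, the $(C,\tuple u,C,\tuple v)$-implication only yields \emph{membership} of the $\tuple v$-projection in $C$, never an equality, so there is no way to recover ``$=a_1$'' from ``$\in C$'' at step $n$. Your proposed remedy---run one computation on $C$-extensions of the tail $(a_2,\dots,a_k)$ and a ``parallel'' one on $D$-extensions, then reconcile---cannot produce the same value of $f$ at position $x_{k+1}$ unless the inputs there coincide, and they do not (the $C$-side feeds $a_1$'s, the $D$-side feeds $b_1$'s). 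More concretely, to apply the implication you must show that the $\tuple u$-projection of your $f$-image lies in $C$; but its first coordinate is $f$ evaluated on the \emph{position-$x_1$} entries $(d_b,\dots,d_b,\,\tuple t_{ab}(x_1),\,c_a,\dots,c_a)$, about which your inductive hypothesis (formulated for the position-$x_{k+1}$ entries $a_1,b_1$) says nothing.

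The paper's proof replaces your equality target by a membership target and, crucially, formulates the induction so that it can be applied on the $\tuple u$-side as well as the $\tuple v$-side. Concretely, it proves by induction on $|J|$ that for any $J\subseteq[\ell]$ and any column tuples $\tuple w^{2},\dots,\tuple w^{k+1}$ with $\tuple w^{2},\dots,\tuple w^{k}$ \emph{constant} and the $i$-th row in $C$ for $i\notin J$ and in $D$ for $i\in J$, one has $\proj_{\tuple v}(f(\tuple w^{2}),\dots,f(\tuple w^{k+1}))\in C$; the final contradiction is obtained at $J=[\ell]$ since preservation of $D$ then also places this tuple in $D$. In the inductive step one builds an auxiliary column $\tuple w^{1}$ (using items (5) of \Cref{def:implication} and (4) of \Cref{def:kcritical}) so that the rows $(w^1_i,\dots,w^k_i)$ satisfy the hypothesis for $J'=J\setminus\{m\}$; the hypothesis applied to $\tuple w^{1},\dots,\tuple w^{k}$ gives $(f(\tuple w^{1}),\dots,f(\tuple w^{k}))\in C$. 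A separate application of $f$ to full rows in $R$ (patching row $m$ using $\proj_{(1,k+1)}(R)=I_2^A$) yields a tuple of $R$ whose middle coordinates agree with $f(\tuple w^{j})$ because each $\tuple w^{j}$ ($2\le j\le k$) is constant and $f$ is quasi near-unanimity; now the implication forces the $\tuple v$-projection into $C$. This ``constant middle columns plus single-entry swap'' mechanism is the idea your sketch is missing; without it, one cannot transfer the inductive information from position $x_{k+1}$ to position $x_1$.
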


\begin{proof}
First of all, observe that the formula $\phi_{\injinstances}:=\phi\wedge I_{k+1}$ is equivalent to a pp-formula over $\sA$ by the $2$-transitivity of $\Aut(\sB)$ and $\phi_{\injinstances}$ is still critical in $\sA$ over $(C,D,\tuple u,\tuple v)$ by~\Cref{cor:noequality}. Indeed, all items of~\Cref{def:kcritical} except for the first one depend only on $\proj_{\tuple u}(\phi^{\sA})$ and on $\proj_{\tuple v}(\phi^{\sA})$ and moreover, these projections are injective. Furthermore, for all distinct $x,y\in V$, $\proj_{(x,y)}(\phi^{\sA})\not\subseteq \{(a,a)\mid a\in A\}$, and hence~\Cref{cor:noequality} implies that for every $g\in\phi^{\sA}$, there exists $h\in\phi_{\injinstances}^{\sA}$ such that $g(\tuple u)$ and $h(\tuple u)$ belong to the same orbit under $\Aut(\sB)$, and so do $g(\tuple v)$ and $h(\tuple v)$. It follows that $\phi_{\injinstances}$ satisfies also the first item of~\Cref{def:kcritical}.

Let $V=\{x_1,\ldots,x_{k+1}\}$ be the set of variables of $\phi$. We can assume without loss of generality that $\tuple u=(x_1,\ldots,x_k)$ and $S(\tuple v)=\{x_2,\ldots,x_{k+1}\}$. 
Let $\tuple v=(x_{i_1},\ldots,x_{i_k})$. 
In the rest of the proof, for any $k$-tuple $(t_2,\ldots,t_{k+1})$, we write $\proj_{\tuple v}(t_2,\ldots,t_{k+1})$ for the tuple $(t_{i_1},\ldots,t_{i_k})$ by abuse of notation.

Suppose for contradiction that $\sA$ has bounded strict width -- hence, by~\Cref{thm:characterization-strictwidth}, there exists an oligopotent quasi near-unanimity operation $f\in\Pol(\sA)$ of arity $\ell\geq 3$. 
Let us define $\tuple w^2,\ldots,\tuple w^{k+1}\in A^\ell$ as follows. Let $d^2,\ldots,d^{k+1}\in A$ be arbitrary such that $\proj_{\tuple v}(d^2,\ldots,d^{k+1})\in D$, and let $\tuple w^j$ be constant with value $d^j$ for all $j\in\{2,\ldots,k+1\}$. 
Setting $J=[\ell]$ in~\Cref{claim:hlp} below, we get that $\proj_{\tuple v}(f(\tuple w^{2}),\ldots,f(\tuple w^{k+1}))\in C$. On the other hand, $\proj_{\tuple v}(f(\tuple w^{2}),\ldots,f(\tuple w^{k+1}))\in D$ since $D$ is pp-definable from $\sA$, and hence preserved by $f$, contradicting that $C$ and $D$ are disjoint.

\begin{claim}\label{claim:hlp}
For every $J\subseteq[\ell]$, the following holds. Let $\tuple w^{2},\ldots,\tuple w^{k+1}\in A^{\ell}$ be such that $\tuple w^{2},\ldots,\tuple w^{k}$ are constant tuples, $\proj_{\tuple v}(w_i^{2},\ldots,w_i^{k+1})\in C$ for all $i\in [\ell]\backslash J$, and $\proj_{\tuple v}(w_i^{2},\ldots,w_i^{k+1})\in D$ for all $i\in J$. Then $\proj_{\tuple v}(f(\tuple w^{2}),\ldots,f(\tuple w^{k+1}))\in C$.
\end{claim}

We will prove the claim by induction on $n:=|J|$. For $n=0$ the claim follows by the assumption that $C$ is pp-definable from $\sA$, and hence it is preserved by $f$.

Let now $n>0$, and suppose that~\Cref{claim:hlp} holds for $n-1$. The set $\phi_{\injinstances}^{\sA}$ can be viewed as a $(k+1)$-ary relation $R(x_1,\ldots,x_{k+1})$.
Let $m\in J$ be arbitrary, and set $J':=J\backslash \{m\}$. Using the $k$-neoliberality of $\Aut(\sB)$, we will find $\tuple w^1\in A^{\ell}$ such that all of the following hold:
\begin{itemize}
    \item $(w^1_i,\ldots,w^{k+1}_i)\in R$ for all $i\in [\ell]\backslash \{m\}$,
    \item $(w^1_i,\ldots,w^k_i)\in C$ for all $i\in [\ell]\backslash J$ and $(w^1_i,\ldots,w^k_i)\in D$ for all $i\in J'$,
    \item $(w^1_m,\ldots,w^k_m)\in C$ and $w^1_m\neq w^{k+1}_m$.
\end{itemize}
To find $\tuple w^1$, let first $i\in [\ell]\backslash J$. Since $\proj_{\tuple v}(w^{2}_i,\ldots,w^{k+1}_i)\in C$, it follows by the definition of a $(C,\tuple u,C,\tuple v)$-implication in $\sA$ that there exists $h\in \phi^{\sA}$ such that $h(\tuple v)=\proj_{\tuple v}(w^{2}_i,\ldots,w^{k+1}_i)$ and $h(\tuple u)\in C$. Set $w^1_i:=h(x_1)$. Let now $i\in J'$. Using the definition of a critical formula over $(C,D,\tuple u,\tuple v)$, we can find $h\in\phi^{\sA}$ such that $h(\tuple v)=\proj_{\tuple v}(w^{2}_i,\ldots,w^{k+1}_i)$ and $h(\tuple u)\in D$ similarly as above, and set $w^1_i:=h(x_1)$. 
It remains to find $w^1_m$ satisfying the last item. By the $(k-1)$-transitivity of $\Aut(\sB)$ and the fact that it has no $k$-algebraicity, we can extend any tuple $(a^2,\ldots,a^{k})\in I^A_{k-1}$ to injective tuples $(a^1,\ldots,a^k), (b^1,a^2,\ldots,a^k)\in O$ with $a^1\neq b^1$, for an arbitrary injective orbit $O$ of $k$-tuples under $\Aut(\sB)$. Applying this fact to the tuple $(w^{2}_m,\ldots,w^{k}_m)$ and to the orbit of $\proj_{\tuple v}(w^{2}_1,\ldots,w^{k+1}_1)$ under $\Aut(\sB)$, we get $w^1_m$ such that $(w^1_m,\ldots,w^k_m)\in C$ and $w^1_m\neq w^{k+1}_m$ as desired.

Note that $\tuple w^1,\ldots,\tuple w^{k}$ satisfy the assumptions of~\Cref{claim:hlp} for $J'$ in the role of $J$ up to permuting the order of the tuples. Indeed, $\tuple w^{2},\ldots,\tuple w^{k}$ are constant, and it holds that $(w^1_i,w^{2}_i,\ldots,w^{k}_i)\in C$ for all $i\in [\ell]\backslash J'$ and $(w^1_i,w^{2}_i,\ldots,w^{k}_i)\in D$ for all $i\in J'$. Since $|J'|=n-1$, the induction hypothesis yields that $(f(\tuple w^1),\ldots,f(\tuple w^{k}))\in C$.

Since $(w^1_m, w^{k+1}_m)\in I_2^A=\proj_{(1,k+1)}(R)$, there exist $a^2,\ldots,a^k \in A$ such that $$(w_m^1,a^{2},\ldots,a^{k},w^{k+1}_m)\in R.$$ For all $j\in\{2,\ldots,k\}$, let $\tuple q^j$ be the tuple obtained by replacing the $m$-th coordinate of $\tuple w^j$ by $a^j$. It follows that $(w^1_i,q^{2}_i,\ldots,q^{k}_i,w^{k+1}_i)\in R$ for all $i\in[\ell]$, and since $R$ is preserved by $f$, it holds that $(f(\tuple w^1),f(\tuple q^{2}),\ldots,f(\tuple q^{k}),f(\tuple w^{k+1}))\in R$.

Since $\tuple w^i$ is constant, and since $f$ is a quasi near-unanimity operation, we have $f(\tuple w^j)=f(\tuple q^j)$ for all $j\in\{2,\ldots,k\}$. This implies that $(f(\tuple w^1),f(\tuple q^{2}),\ldots,f(\tuple q^{k}))\in C$. Since $\phi$ entails $C(\tuple u)\Rightarrow C(\tuple v)$ in $\sA$, we get $\proj_{\tuple v}(f(\tuple q^{2}),\ldots,f(\tuple q^{k}),f(\tuple w^{k+1}))\in C$, and hence $\proj_{\tuple v}(f(\tuple w^{2}),\ldots,f(\tuple w^{k+1}))\in C$.
\end{proof}

\subsection{Composition of implications}
We introduce composition of implications which will play an important role in the rest of the article.

\begin{definition}\label{def:composition}
Let $\sA$ be a relational structure, let $k\geq 1$, let $C,D,E\subseteq A^{k}$ be non-empty, let $\phi_1$ be a $(C,\tuple u^1,D,\tuple v^1)$-implication in $\sA$, and let $\phi_2$ be a $(D,\tuple u^2,E,\tuple v^2)$-implication in $\sA$. Let us rename the variables of $\phi_2$ so that $\tuple v^1=\tuple u^2$ and so that $\phi_1$ and $\phi_2$ do not share any other variables.
We define $\phi_1\circ\phi_2$ to be the pp-formula arising from the formula $\phi_1\wedge \phi_2$ by existentially quantifying all variables that are not contained in $\scope(\tuple u^1)\cup \scope(\tuple v^2)$.

Let $\psi$ be a $(C,\tuple u,C,\tuple v)$-implication. For $n\geq 2$, we write $\psi^{\circ n}$ for the pp-formula $\psi\circ \cdots \circ \psi$ where $\psi$ appears exactly $n$ times.
\end{definition}

\begin{lemma}\label{lemma:kimpl}
Let $k\geq 3$, let $\sA$ be a first-order expansion of the canonical $k$-ary structure of a permutation group $\fG$. 
Let $\phi_1,\phi_2$ be as in~\Cref{def:composition}, and suppose that $\proj_{\tuple v^1}(\phi_1)= \proj_{\tuple u^2}(\phi_2)$.
Then $\phi:=\phi_1\circ \phi_2$ is a $(C,\tuple u^1,E,\tuple v^2)$-pre-implication in $\sA$. Moreover, for all orbits $O_1\subseteq \proj_{\tuple u^1}(\phi_1^{\sA})$, $O_3\subseteq \proj_{\tuple v^2}(\phi_2^{\sA})$ under $\gG$, $\phi^{\sA}$ contains an $O_1 O_3$-mapping if, and only if, there exists an orbit $O_2$ under $\gG$ such that $\phi_1^{\sA}$ contains an $O_1 O_2$-mapping and $\phi_2^{\sA}$ contains an $O_2 O_3$-mapping.

Suppose moreover that $\gG$ is $k$-neoliberal, that $\sA$ has bounded strict width, and that $\phi_1$ and $\phi_2$ are injective implications. Then $\phi$ is a $(C,\tuple u^1,E,\tuple v^2)$-implication in $\sA$. Restricting $\phi^{\sA}$ to injective mappings, one moreover obtains an injective $(C,\tuple u^1,E,\tuple v^2)$-implication, which for all injective orbits $O_1\subseteq C, O_3\subseteq E$ under $\fG$ contains an $O_1 O_3$-mapping if, and only if, $\phi^{\sA}$ contains such mapping.
\end{lemma}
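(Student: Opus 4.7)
The plan is to handle the lemma in two stages: first establish the pre-implication part together with the orbit characterization using only the matching-projections hypothesis, and then upgrade to the full and injective implications under the additional $k$-neoliberality, bounded strict width, and injectivity assumptions.

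For the pre-implication assertion I would verify items~(2)--(5) of~\Cref{def:implication} for $\phi=\phi_1\circ\phi_2$ by chaining the implication properties of $\phi_1$ and $\phi_2$. For item~(5): given $\tuple a\in E$, property~(5) of $\phi_2$ supplies $g\in\phi_2^{\sA}$ with $g(\tuple u^2)\in D$ and $g(\tuple v^2)=\tuple a$; the matching-projections hypothesis $\proj_{\tuple v^1}(\phi_1^{\sA})=\proj_{\tuple u^2}(\phi_2^{\sA})$ together with property~(5) of $\phi_1$ supplies $h\in\phi_1^{\sA}$ with $h(\tuple u^1)\in C$ and $h(\tuple v^1)=g(\tuple u^2)$; gluing yields the required witness in $\phi^{\sA}$. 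Item~(4) is read off by chasing $C\to D\to E$ through any extension of $f\in\phi^{\sA}$ to a solution of $\phi_1\wedge\phi_2$; items~(2) and~(3) follow since the matching-projections condition forces $\proj_{\tuple u^1}(\phi^{\sA})=\proj_{\tuple u^1}(\phi_1^{\sA})\supsetneq C$ and similarly $\proj_{\tuple v^2}(\phi^{\sA})\supsetneq E$. For the orbit characterization, the forward direction is immediate from extending an $O_1 O_3$-mapping to a solution of $\phi_1\wedge\phi_2$. For the converse, given an $O_1 O_2$-mapping $h\in\phi_1^{\sA}$ and an $O_2 O_3$-mapping $g\in\phi_2^{\sA}$, the tuples $h(\tuple v^1),g(\tuple u^2)\in O_2$ lie in the same $\gG$-orbit, so some $\alpha\in\gG\subseteq\Aut(\sA)$ sends one to the other; replacing $h$ by $\alpha\circ h$ makes the glue valid and produces an $O_1 O_3$-mapping in $\phi^{\sA}$.

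For the second part I would first verify item~(1) of~\Cref{def:implication} for $\phi$, i.e., $\proj_{(x,y)}(\phi^{\sA})\not\subseteq\{(a,a)\mid a\in A\}$ for every distinct $x,y\in\scope(\tuple u^1)\cup\scope(\tuple v^2)$. When $x,y$ both lie in $\scope(\tuple u^1)$ or both in $\scope(\tuple v^2)$, the matching-projections condition identifies $\proj_{(x,y)}(\phi^{\sA})$ with $\proj_{(x,y)}(\phi_1^{\sA})$ or $\proj_{(x,y)}(\phi_2^{\sA})$ respectively, and injectivity of the relevant $\phi_i$ does the job. The delicate case is $x\in\scope(\tuple u^1)\setminus\scope(\tuple v^1)$ and $y\in\scope(\tuple v^2)\setminus\scope(\tuple u^2)$: assuming for contradiction that $\phi$ forces $x=y$, I would extract from $\phi_1\wedge\phi_2$ a pp-definable $\gG$-equivariant function $\tuple d\mapsto c_{\tuple d}$ on the injective $k$-tuples in $\proj_{\tuple v^1}(\phi_1^{\sA})$ with $c_{\tuple d}\notin\scope(\tuple d)$ (the latter using injectivity of $\phi_1$ together with $x\notin\scope(\tuple v^1)$), and then combine this functional dependency with the critical-relation machinery of~\Cref{lemma:l+1=rel,lemma:l+2=rel} to contradict the $k$-neoliberality of $\gG$ under the bounded strict width hypothesis. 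Once item~(1) is established, I would invoke~\Cref{cor:noequality} (available thanks to bounded strict width) to convert any $g\in\phi^{\sA}$ into an injective $h\in\phi^{\sA}$ preserving $\gG$-orbits on every injective sub-tuple. After an automorphism conjugation to hit specified tuples exactly, this transports items~(2)--(5) to the restriction of $\phi^{\sA}$ to injective mappings, and the orbit-preservation clause of~\Cref{cor:noequality} immediately yields the equivalence of $O_1 O_3$-mapping existence between $\phi^{\sA}$ and its injective restriction for injective orbits $O_1\subseteq C$, $O_3\subseteq E$.

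The main obstacle is the cross-component case of item~(1): this is exactly the point where $k$-neoliberality enters essentially, and the contradiction must be extracted carefully from the putative $\gG$-equivariant functional dependency, ruling out an algebraic dependency that weaker assumptions on $\gG$ would allow.
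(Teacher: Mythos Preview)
Your treatment of the first part is essentially the paper's: the orbit characterization is proved by gluing via an automorphism, and the pre-implication properties follow from it together with the matching-projections hypothesis.

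The second part, however, has a genuine gap in exactly the place you flag as ``the main obstacle''. You correctly observe that if $\phi$ forced $x=y$ in the cross case, then $\proj_{(x,\tuple v^1)}(\phi_1^{\sA})$ would be the graph of a $\gG$-equivariant function $c$ with $c(\tuple d)\notin\scope(\tuple d)$. But your plan to ``combine this with~\Cref{lemma:l+1=rel,lemma:l+2=rel}'' does not go through: those lemmas require a $(T,=)$-implication, meaning a formula where the equality $x=y$ is forced \emph{conditionally} on $T(\tuple u)$ while $\proj_{(x,y)}$ is \emph{not} contained in the diagonal (item~(1)) and $T$ is a \emph{strict} subset of $\proj_{\tuple u}$ (item~(2)). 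Any formula you build from the functional relation --- e.g.\ $R(x,\tuple d)\wedge R(x',\tuple d)$, or the projection of $\phi_1\wedge\phi_2$ to $\{x,y\}\cup\scope(\tuple v^1)$ --- forces $x=y$ (or $x=x'$) \emph{unconditionally}, so item~(1) fails on the pair $(x,y)$ and item~(3) fails since $\proj_{(x,y)}$ is contained in the diagonal rather than strictly containing it. Nor does no $k$-algebraicity alone exclude such a function, since it only constrains stabilizers of $(k{-}1)$-element sets, not of $k$-element sets. So the contradiction you need is not at hand.

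The paper avoids the cross-case analysis altogether. Instead of verifying item~(1) for $\phi$ and then invoking~\Cref{cor:noequality}, it works on $V_1\cup V_2$: given injective $f\in\phi_1^{\sA}$ and $g\in\phi_2^{\sA}$ with matching middle orbit, it forms the pp-formula
\[
\psi\equiv\bigwedge_{\tuple v\in U_1}O^{f}_{\tuple v}(\tuple v)\wedge\bigwedge_{\tuple v\in U_2}O^{g}_{\tuple v}(\tuple v),
\]
where $U_i$ ranges over injective tuples of length $\le k$ in $V_i$. Because each conjunct is an orbit constraint on at most $k$ variables, the maximality argument from the proof of~\Cref{cor:noequality} applies directly to $\psi$: if $\psi$ entailed an equality, removing conjuncts until it does not and then re-adding a single one would produce an $(O_{\tuple v},=)$-implication on $\le k+2$ variables, contradicting~\Cref{lemma:l+1=rel,lemma:l+2=rel}. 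So $\psi^{\sA}$ contains an injective mapping $h$; by $k$-homogeneity, $h|_{V_1}$ lies in the orbit of $f$ and $h|_{V_2}$ in the orbit of $g$, whence $h\in(\phi_1\wedge\phi_2)^{\sA}$ and $h|_V\in\phi^{\sA}$ is the desired injective $O_1O_3$-mapping. This single construction yields item~(1) for $\phi$, the injective-implication claim, and the final orbit equivalence all at once.
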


\begin{proof}
    Let us assume as in~\Cref{def:composition} that $\tuple v^1=\tuple u^2$ and that $\phi_1,\phi_2$ do not share any further variables. Let $V_1$ be the set of variables of $\phi_1$, let $V_2$ be the set of variables of $\phi_2$, and let $V$ be the set of variables of $\phi$.
    We will first prove the last sentence of the first part of~\Cref{lemma:kimpl} about $O_1 O_3$-mappings.
    To this end, let $O_1\subseteq \proj_{\tuple u^1}(\phi_1^{\sA}),O_2\subseteq \proj_{\tuple v^1}(\phi_1^{\sA}),O_3\subseteq \proj_{\tuple v^2}(\phi_2^{\sA})$ be orbits under $\gG$, and suppose that $\phi_1^{\sA}$ contains an $O_1 O_2$-mapping $f$ and $\phi_2^{\sA}$ contains an $O_2 O_3$-mapping $g$. Using that $f|_{V_1\cap V_2}$ is contained in the same orbit under $\gG$ as $g|_{V_1\cap V_2}$, find a mapping $h\colon V_1\cup V_2\rightarrow A$ such that $h|_{V_1}$ is contained in the same orbit under $\gG$ as $f$ and $h|_{V_2}$ is contained in the same orbit as $g$. It follows that $h\in(\phi_1\wedge \phi_2)^{\sA}$, and hence $h|_{V}\in \phi^{\sA}$ is an $O_1 O_3$ mapping.
    On the other hand, if $\phi^{\sA}$ contains an $O_1 O_3$-mapping $h'$, we can extend it to a mapping $h\in (\phi_1\wedge \phi_2)^{\sA}$ such that $f:=h|_{V_1}\in \phi_1^{\sA}$ and $g:=h|_{V_2}\in \phi_2^{\sA}$ by the definition of $\phi$. Setting $O_2$ to be the orbit of $h(\tuple v^1)$, we get that $f\in \phi_1^{\sA}$ is an $O_1 O_2$-mapping and $g\in\phi_2^{\sA}$ is an $O_2 O_3$-mapping as desired.
    
    Observe now that the fact that $\phi$ is a $(C,\tuple u^1,E,\tuple v^2)$-pre-implication in $\sA$ follows from the previous paragraph. Indeed, it follows immediately that $\phi$ satisfies items (4) and (5) of~\Cref{def:implication}. To see that items (2) and (3) are satisfied as well, take any $g\in\phi_2^{\sA}$ with $g(\tuple v^2)\notin E$, let $O_2$ be the orbit of $g(\tuple u^2)$ under $\gG$, and let $O_3$ be the orbit of $g(\tuple v^2)$. Since $\proj_{\tuple v^1}(\phi_1)= \proj_{\tuple u^2}(\phi_2)$, we can find an $O_1 O_2$-mapping in $\phi_1^{\sA}$, and $\phi^{\sA}$ contains an $O_1 O_3$-mapping witnessing that $C\subsetneq \proj_{\tuple u^1}(\phi^{\sA}), C\subsetneq \proj_{\tuple v^2}(\phi^{\sA})$.

    To prove the second part of the lemma, we will prove that for all orbits $O_1\subseteq C, O_2\subseteq D, O_3\subseteq E$ under $\fG$ such that $\phi_1^{\sA}$ contains an $O_1 O_2$-mapping $f$ and $\phi_2^{\sA}$ contains an $O_2 O_3$-mapping $g$, $\phi^{\sA}$ contains an injective $O_1 O_3$-mapping $h$.
    Note that as in the previous paragraph, it is enough to find an injective mapping $h\in(\phi_1\wedge \phi_2)^{\sA}$ such that $h|_{V_1}$ is contained in the same orbit under $\gG$ as $f$ and $h|_{V_2}$ is contained in the same orbit as $g$. 
    Let $U_1$ be the set of all injective tuples of variables from $V_1$ of length at most $k$, and for every $\tuple v\in U_1$, let us denote the orbit of $f(\tuple v)$ by $O^f_{\tuple v}$. Similarly, let $U_2$ be the set of all injective tuples of variables from $V_2$ of length at most $k$, and for every tuple $\tuple v\in U_2$, let $O^g_{\tuple v}$ be the orbit of $g(\tuple v)$. 
    Let us define a formula $\psi$ with variables from $V_1\cup V_2$ by $$\psi\equiv\bigwedge\limits_{\tuple v\in U_1}O^{f}_{\tuple v}(\tuple v)\wedge \bigwedge\limits_{\tuple v\in U_2}O^{g}_{\tuple v}(\tuple v).$$ 
    Note that since $\sA$ is a first-order expansion of the canonical $k$-ary structure of $\gG$, all orbits $O^f_{\tuple v}, O^g_{\tuple v}$ are pp-definable from $\sA$, and hence $\psi$ is equivalent to a pp-formula over $\sA$. Since $\gG$ is $k$-neoliberal, and since $\sA$ has bounded strict width, we can proceed as in the proof of~\Cref{cor:noequality} and use~\Cref{lemma:l+1=rel,lemma:l+2=rel} to show that $\psi^{\sA}$ contains an injective mapping $h$. By the construction and by the $k$-homogeneity of $\gG$, this mapping satisfies our assumptions.
\end{proof}

The following observation states a few properties of implications and their compositions which will be used later.

\begin{observation}\label{observation:impl_properties}
    Let $\sA$ be a relational structure, let $k\geq 2$, let $C\subseteq A^k$, let $\phi_1$ be a $(C,\tuple u^1,C,\tuple v^1)$-implication in $\sA$, and let $\phi_2$ be a $(C,\tuple u^2,C,\tuple v^2)$-implication in $\sA$. Let $p_1$ be the number of variables of $\phi_1$, let $p_2$ be the number of variables of $\phi_2$, and let $p$ be the number of variables of $\phi:=\phi_1\circ \phi_2$. Then all of the following hold.

    \begin{enumerate}
        \item $p\geq\max(p_1,p_2)$.
        \item $p=p_1=p_2$ if, and only if, $\scope(\tuple u^1)\cap\scope(\tuple v^2) = \scope(\tuple u^1)\cap\scope(\tuple v^1)=\scope(\tuple u^1)\cap\scope(\tuple u^2)\cap\scope(\tuple v^2)$.
        \item Suppose that $\phi_1=\phi_2$ and $p=p_1$. Then for every $i\in[k]$, it holds that if $v^1_i$ is contained in the intersection in (2), then so is $u^1_i$.
    \end{enumerate}
\end{observation}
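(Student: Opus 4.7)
The plan is to unpack Definition~\ref{def:composition} and perform a set-theoretic count of variables. After the renaming, $\tuple u^2 = \tuple v^1$ and $\phi_1, \phi_2$ share no other variables. I would set $B := \scope(\tuple v^1) = \scope(\tuple u^2)$ (of cardinality $k$ by injectivity), $A_1 := \scope(\tuple u^1)\setminus B$, $A_2 := \scope(\tuple v^2)\setminus B$, $b_1 := \scope(\tuple u^1)\cap B$, and $b_2 := \scope(\tuple v^2)\cap B$. The renaming forces $A_1, A_2, B$ to be pairwise disjoint, so that in particular $A_1\cap \scope(\tuple v^2) = A_2\cap\scope(\tuple u^1) = \emptyset$. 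Since each of the four tuples is injective of length $k$, we get $|A_i| = k - |b_i|$ and straightforward counting yields
\[
p_1 = 2k - |b_1|, \qquad p_2 = 2k - |b_2|, \qquad p = 2k - |b_1\cap b_2|.
\]

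Item~(1) then follows from $|b_1\cap b_2|\leq \min(|b_1|,|b_2|)$. For item~(2), I would rewrite each intersection in the stated chain in terms of $b_1, b_2$: both $\scope(\tuple u^1)\cap\scope(\tuple v^2)$ and the triple intersection $\scope(\tuple u^1)\cap\scope(\tuple u^2)\cap\scope(\tuple v^2)$ equal $b_1\cap b_2$ (using the disjointness above), whereas $\scope(\tuple u^1)\cap \scope(\tuple v^1) = b_1$. Thus the condition collapses to $b_1\subseteq b_2$; together with the symmetric inclusion $b_2\subseteq b_1$ (implicit in the simultaneous equality $p = p_2$, obtained by the analogous rewriting with the roles of $\phi_1$ and $\phi_2$ swapped), one obtains $b_1 = b_2$, which by the formulas above is equivalent to $p = p_1 = p_2$.

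For item~(3), the hypothesis $\phi_1 = \phi_2$ gives a positional correspondence: any equation $v^1_i = u^1_j$ in $\phi_1$ becomes $v^2_i = u^2_j$ in $\phi_2$, and after the renaming $u^2_j = v^1_j$ this reads $v^2_i = v^1_j$. Writing $D := \{i\in[k]\mid v^1_i\in \scope(\tuple u^1)\}$ and $R := \{j\in[k]\mid u^1_j\in \scope(\tuple v^1)\}$, one obtains $b_1 = \{v^1_i\mid i\in D\}$ and, via the correspondence, $b_2 = \{v^1_j\mid j\in R\}$. The claim of item~(3), namely that $v^1_i\in b_1$ implies $u^1_i\in b_1$, unfolds to $i\in D \Rightarrow i\in R$, i.e., $D\subseteq R$; and this is exactly the reformulation $b_1\subseteq b_2$ supplied by item~(2) under the hypothesis $p = p_1$ (noting that $p_1 = p_2$ is automatic from $\phi_1 = \phi_2$).

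The only mildly delicate step is the positional translation in item~(3), where one must carefully disentangle the variable identifications introduced by the renaming from the structural equations shared by $\phi_1$ and $\phi_2$. Everything else is a routine set-cardinality exercise, so I do not expect any real obstacle.
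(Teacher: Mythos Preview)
Your approach is the same as the paper's: both unwind Definition~\ref{def:composition}, use the identification $\tuple u^2=\tuple v^1$, and reduce everything to the formula $p=2k-|\scope(\tuple u^1)\cap\scope(\tuple v^2)|$ (your $2k-|b_1\cap b_2|$). Items~(1) and~(3) are correct and essentially identical to the paper's argument; your index-set reformulation $D\subseteq R$ in~(3) is just a more explicit version of the paper's one-line positional argument.

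There is, however, a genuine gap in your treatment of item~(2). You correctly compute that the first and third terms of the stated chain both equal $b_1\cap b_2$, while the middle term equals $b_1$; hence the chain as written is equivalent to $b_1\subseteq b_2$, which in turn is equivalent only to $p=p_1$. To obtain the full equivalence with $p=p_1=p_2$ you also need $b_2\subseteq b_1$, and your justification for this --- ``implicit in the simultaneous equality $p=p_2$'' --- is circular: $p=p_2$ is part of the conclusion, not something you may assume when proving the $\Leftarrow$ direction. In fact your computation shows that the $\Leftarrow$ direction of~(2) \emph{as literally stated} is false (take any situation with $b_1\subsetneq b_2$).

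This is not a flaw in your method but in the statement: the paper's own proof establishes $p=p_2\Leftrightarrow \scope(\tuple u^1)\cap\scope(\tuple v^2)=\scope(\tuple u^2)\cap\scope(\tuple v^2)$ (which is $b_1\cap b_2=b_2$), whereas the third term printed in the Observation is the \emph{triple} intersection $\scope(\tuple u^1)\cap\scope(\tuple u^2)\cap\scope(\tuple v^2)$, which coincides with the first term and contributes nothing. The intended third term is almost certainly $\scope(\tuple u^2)\cap\scope(\tuple v^2)$; with that correction the chain reads $b_1\cap b_2=b_1=b_2$ and your argument goes through cleanly. You should flag this discrepancy explicitly rather than paper over it with the ``symmetric inclusion'' remark. (For the record, only the $\Rightarrow$ direction of~(2) is ever used later, in the proof of~(3), and that direction is fine either way.)
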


\begin{proof}
    Let us rename the variables of $\phi_2$ as in~\Cref{def:composition} so that $\tuple v^1=\tuple u^2$, and $\phi_1$ and $\phi_2$ do not share any further variables.
    
    For (1), observe that the number of variables of a $(C,\tuple u,C,\tuple v)$-pre-implication is equal to $2k-|\scope(\tuple u)\cap \scope(\tuple v)|$. By~\Cref{lemma:kimpl}, $\phi$ is a $(C,\tuple u_1,C,\tuple v_2)$-pre-implication, whence    $p=2k-|\scope(\tuple u^1)\cap \scope(\tuple v^2)|$, and since $\scope(\tuple u^1)\cap \scope(\tuple v^2)$ is contained both in $\scope(\tuple u^1)\cap \scope(\tuple v^1)$ and in $\scope(\tuple u^2)\cap \scope(\tuple v^2)$, (1) follows by applying the same reasoning to $p_1$ and $p_2$.

    For (2), observe that by the previous paragraph, $p=p_1$ if, and only if, $\scope(\tuple u^1)\cap\scope(\tuple v^2) = \scope(\tuple u^1)\cap\scope(\tuple v^1)$. Similarly, $p=p_2$ if, and only if, $\scope(\tuple u^1)\cap\scope(\tuple v^2) = \scope(\tuple u^2)\cap\scope(\tuple v^2)$, and (2) follows by the fact that $\tuple v^1=\tuple u^2$.
    
    For (3), suppose that $v^1_i\in \scope(\tuple u^1)$. It follows by (2) that $v^1_i=u^2_i\in\scope(\tuple v^2)$, and since $\phi_2$ was obtained by renaming variables of $\phi_1$, and in particular, $\tuple u^2$ was obtained by renaming $\tuple v^1$, it follows that $u^1_i\in\scope(\tuple v^1)$.
\end{proof}

\subsection{Digraphs of implications}

We reformulate the notion of digraph of implications from~\cite{Wrona:2020b} and prove a few auxiliary statements about these digraphs.

\begin{definition}\label{def:impl}
Let $k\geq 3$, let $\sA$ be a first-order expansion of a $k$-neoliberal relational structure $\sB$. Let $\emptyset\neq C\subseteq A^k$, and let $\phi$ be $(C,\tuple u,C,\tuple v)$-implication in $\sA$ such that $\proj_{\tuple u}(\phi^{\sA})=\proj_{\tuple v}(\phi^{\sA})=:E$. Let $\mathbf{Vert}(E)$ be the set of all orbits under $\Aut(\sB)$ contained in $E$.
Let $\mathcal{B}_{\phi}\subseteq \mathbf{Vert}(E)\times \mathbf{Vert}(E)$ be the directed graph such that $\mathcal{B}_{\phi}$ contains an arc $(O,P)\in \mathbf{Vert}(E)\times \mathbf{Vert}(E)$ if $\phi^{\sA}$ contains an $OP$-mapping.

We say that $S \subseteq \mathbf{Vert}(E)$ is a \emph{strongly connected component} if it is a maximal set with respect to inclusion such that for all (not necessary distinct) vertices $O,P\in S$, there exists a path in $\mathcal{B}_{\phi}$ connecting $O$ and $P$. We say that $S$ is a \emph{sink} in $\mathcal B_{\phi}$ if every arc originating in $S$ ends in $S$; $S$ is a \emph{source} in $\mathcal B_{\phi}$ if every arc finishing in $S$ originates in $S$.
\end{definition}

Note that the digraph $\mathcal B_\phi$ can be defined also for relational structures which do not satisfy the assumptions on $\sA$ from~\Cref{def:impl}; however, these assumptions are needed in the proof of~\Cref{lemma:kcomplete} so we chose to include them already in~\Cref{def:impl}. Note also that $\mathcal{B}_\phi$ can contain vertices which are not contained in any strongly connected component.

\begin{observation}\label{observation:sinksource}
Let $\phi$ be as in~\Cref{def:impl}.  
Then there exist strongly connected components $S_1\subseteq\mathbf{Vert}(C),S_2\subseteq\mathbf{Vert}(E\backslash C)$ in $\mathcal B_\phi$ such that $S_1$ is a sink in $\mathcal{B}_{\phi}$, and $S_2$ is a source in $\mathcal{B}_{\phi}$.
Moreover, since $\proj_{\tuple u}(\phi^{\sA})=\proj_{\tuple v}(\phi^{\sA})=E$, any vertex $O\in \mathbf{Vert}(E)$ has an outgoing and an incoming arc, i.e., the digraph $\mathcal{B}_{\phi}$ is \emph{smooth}.
\end{observation}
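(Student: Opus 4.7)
My plan is to exploit two structural properties of a $(C,\tuple u,C,\tuple v)$-implication: item~(4) of Definition~\ref{def:implication} (with $D=C$) forces any arc originating in $\mathbf{Vert}(C)$ to terminate in $\mathbf{Vert}(C)$, while the equalities $\proj_{\tuple u}(\phi^{\sA})=\proj_{\tuple v}(\phi^{\sA})=E$ equip every vertex with arcs on both sides. Since $\sA$ is a first-order expansion of $\sB$, every relation pp-definable in $\sA$ is $\Aut(\sB)$-invariant, so $C$ and $E$ decompose as unions of $\Aut(\sB)$-orbits; combined with the oligomorphicity of $\Aut(\sB)$ granted by $k$-neoliberality, this makes $\mathcal{B}_\phi$ a finite digraph whose vertex set $\mathbf{Vert}(E)$ partitions as $\mathbf{Vert}(C)\sqcup\mathbf{Vert}(E\setminus C)$.

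For smoothness, I fix $O\in\mathbf{Vert}(E)$, pick any $\tuple a\in O\subseteq E=\proj_{\tuple u}(\phi^{\sA})$ and a witness $f\in\phi^{\sA}$ with $f(\tuple u)=\tuple a$; the $\Aut(\sB)$-orbit $P$ of $f(\tuple v)$ lies in $\mathbf{Vert}(E)$ and gives an outgoing arc $(O,P)$ at $O$. The symmetric argument, using $\proj_{\tuple v}(\phi^{\sA})=E$, produces an incoming arc into $O$.

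For the sink SCC in $\mathbf{Vert}(C)$, the closure property says that the induced subdigraph $\mathcal{B}_\phi|_{\mathbf{Vert}(C)}$ is closed under outgoing arcs of the full $\mathcal{B}_\phi$. This subdigraph is nonempty (since $C\neq\emptyset$ is a union of orbits) and finite, so its condensation is a finite nonempty DAG and admits a sink SCC $S_1\subseteq\mathbf{Vert}(C)$. Any outgoing arc of $S_1$ in $\mathcal{B}_\phi$ must stay in $\mathbf{Vert}(C)$ by closure and then in $S_1$ by the sink property inside the restriction, so $S_1$ is a sink SCC of the full $\mathcal{B}_\phi$. For the source SCC in $\mathbf{Vert}(E\setminus C)$, the same closure equivalently says that no arc of $\mathcal{B}_\phi$ goes from $\mathbf{Vert}(C)$ into $\mathbf{Vert}(E\setminus C)$. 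Items~(2) and~(3) of the implication definition give $C\subsetneq E$, so $\mathbf{Vert}(E\setminus C)$ is a nonempty finite vertex set, and a source SCC $S_2$ of $\mathcal{B}_\phi|_{\mathbf{Vert}(E\setminus C)}$ receives no incoming arcs from $\mathbf{Vert}(E\setminus C)\setminus S_2$ by the source property and none from $\mathbf{Vert}(C)$ by the absence of such arcs; hence $S_2$ is a source SCC of the full $\mathcal{B}_\phi$.

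The only delicate point, rather than a real obstacle, is to note that an SCC of a restricted digraph remains an SCC of the ambient digraph as soon as it has no outgoing arcs leaving it (for the sink) or no incoming arcs entering it (for the source); otherwise it could be absorbed into a larger SCC of the full graph. In our setting this follows immediately from the closure and its mirror just described, so the whole argument reduces to the elementary fact that a finite nonempty DAG has sources and sinks, applied to the condensations of $\mathcal{B}_\phi|_{\mathbf{Vert}(C)}$ and $\mathcal{B}_\phi|_{\mathbf{Vert}(E\setminus C)}$.
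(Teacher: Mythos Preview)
Your proof is correct and follows essentially the same approach as the paper: both use item~(4) of Definition~\ref{def:implication} to show that $\mathbf{Vert}(C)$ is closed under outgoing arcs (hence a sink as a set), then locate a terminal strongly connected component inside it, and argue dually that $\mathbf{Vert}(E\setminus C)$ is closed under incoming arcs to produce a source component. The paper phrases the existence step by noting that the induced subgraph on $\mathbf{Vert}(C)$ is finite and \emph{smooth} (using item~(5) of Definition~\ref{def:implication} for incoming arcs within $\mathbf{Vert}(C)$), whereas you appeal to the condensation being a finite nonempty DAG; these are the same argument in different clothing. One small remark: the paper's definition of a strongly connected component requires a path from every vertex to itself, so isolated vertices not lying on any cycle are excluded. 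Your condensation argument implicitly handles this, since the closure of $\mathbf{Vert}(C)$ under outgoing arcs together with the global smoothness you established forces every vertex in the restricted digraph to have an outgoing arc there, and hence any terminal component must contain a cycle through each of its vertices; it would do no harm to make this explicit.
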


\begin{proof}
    The second part of the lemma is immediate.
    To prove the first part, observe that since $\phi$ is a $(C,C)$-implication in $\sA$, it follows that $\mathbf{Vert}(C)$ is a sink in $\mathcal{B}_{\phi}$. Using the oligomorphicity of $\gG$ and the definition of a $(C,C)$-implication, we get that the induced subgraph of $\mathcal B_\phi$ on $\mathbf{Vert}(C)$ is finite and smooth. Hence, there exists a strongly connected component $S_1\subseteq \mathbf{Vert}(C)$ in $\mathcal{B}_\phi$ which is a sink in the induced subgraph, and hence also in $\mathcal B_\phi$.
    Similarly, one sees that $\mathbf{Vert}(E\backslash C)$ is a source in $\mathcal{B}_{\phi}$, and it contains a strongly connected component $S_2$ which is itself a source in $\mathcal B_{\phi}$.
\end{proof}

Let $\phi$ be as in~\Cref{def:impl}, and set $I_\phi:=\{i\in[k]\mid u_i\in\scope(\tuple v)\}$. If the number of variables of $\phi$ is equal to the number of variables of $\phi\circ \phi$, then item (3) in~\Cref{observation:impl_properties} yields that $I_\phi=\{i\in[k]\mid v_i\in\scope(\tuple u)\}$. 

\begin{definition}\label{def:kcomplimpl}
Let $\phi$ be as in~\Cref{def:impl}. We say that $\phi$ is \emph{complete} if the number of variables of $\phi\circ \phi$ is equal to the number of variables of $\phi$, $u_i=v_i$ for every $i\in I_\phi$, and each strongly connected component of $\mathcal{B}_{\phi}$ contains all possible arcs including loops.
\end{definition}

The following is a modification of Lemma 36 in \cite{Wrona:2020b}:

\begin{lemma}\label{lemma:kcomplete}
Let $\phi$ be as in~\Cref{def:impl}.
Then there exists a complete injective $(C,C)$-implication in $\sA$.
\end{lemma}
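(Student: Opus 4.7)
The plan is to construct the desired complete injective implication by iterated composition of $\phi$ with itself, enforcing the three conditions of \Cref{def:kcomplimpl} in sequence. As a preliminary step, using bounded strict width together with \Cref{lemma:kimpl}, we may replace $\phi$ by its injective restriction; smoothness of the resulting digraph $\mathcal B_\phi$ (\Cref{observation:sinksource}) together with \Cref{lemma:kimpl} then ensures that every power $\phi^{\circ n}$ is an injective $(C,\tuple u,C,\tuple v)$-implication with $\proj_{\tuple u}((\phi^{\circ n})^{\sA}) = \proj_{\tuple v}((\phi^{\circ n})^{\sA}) = E$, keeping all powers within the framework of \Cref{def:impl}.

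First, stabilize the variable count. By \Cref{observation:impl_properties}(1), the number of variables $p_n$ of $\phi^{\circ n}$ is non-decreasing in $n$ and bounded above by $2k$, so it is eventually constant; pick $n_0$ with $p_{n_0} = p_{2n_0}$ and set $\phi_1 := \phi^{\circ n_0}$. By \Cref{observation:impl_properties}(2), this stability forces the relevant scope intersections for $\phi_1 \circ \phi_1$ to match those of $\phi_1$, and \Cref{observation:impl_properties}(3) then yields $J_{\phi_1} \subseteq I_{\phi_1}$. Combined with $|I_{\phi_1}| = |J_{\phi_1}| = |\scope(\tuple u) \cap \scope(\tuple v)|$, we get $I_{\phi_1} = J_{\phi_1}$, and the assignment $u_i = v_{\sigma(i)}$ defines a permutation $\sigma$ of this common index set.

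Next, align $\sigma$ to the identity. By tracing through the renaming in \Cref{def:composition}, one verifies inductively that $\phi_1^{\circ m}$ retains variable count $p_{n_0}$ and shared index set $I_{\phi_1}$, and that its induced permutation equals $\sigma^m$. Choosing $m$ to be a multiple of the order of $\sigma$, the formula $\phi_2 := \phi_1^{\circ m}$ has $\sigma_{\phi_2} = \mathrm{id}$, i.e., $u_i = v_i$ for every $i \in I_{\phi_2}$. This establishes conditions (1) and (2) of \Cref{def:kcomplimpl} for $\phi_2$.

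It remains to secure condition (3). By \Cref{lemma:kimpl}, arcs of $\mathcal B_{\phi_2^{\circ N}}$ correspond to length-$N$ walks in the finite smooth digraph $\mathcal B_{\phi_2}$. Each strongly connected component of $\mathcal B_{\phi_2}$ has a well-defined period, and a classical primitivity argument shows that if $N$ is a sufficiently large common multiple of all these periods, then every strongly connected component of $\mathcal B_{\phi_2^{\circ N}}$ becomes a complete digraph with all loops. Since $N$-fold composition preserves the variable count and the identity permutation already obtained, $\phi_3 := \phi_2^{\circ N}$ is the sought complete injective $(C,C)$-implication. The principal obstacle is the middle step: careful bookkeeping of how variables shared between $\tuple u$ and $\tuple v$ are renamed and propagated through successive copies of $\phi_1$ is required to verify the identity $\sigma_{\phi_1^{\circ m}} = \sigma^m$, whereas the remaining stages reduce to a routine finite-digraph period argument.
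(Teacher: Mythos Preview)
Your proposal is correct and follows essentially the same route as the paper: iterated self-composition of $\phi$ to (i) stabilize the number of free variables, (ii) trivialize the permutation $\sigma$ on the shared index set $I_\phi$, and (iii) complete every strongly connected component of $\mathcal B_\phi$. The differences are only in packaging. You pass to the injective restriction of $\phi$ at the outset (invoking bounded strict width through the second part of \Cref{lemma:kimpl}), whereas the paper carries along the possibly non-injective power and conjoins $I_\ell$ at the very end; and for step~(iii) you phrase the argument via periods and primitivity of the finite smooth digraph $\mathcal B_{\phi_2}$, while the paper does it in three sub-steps (first maximize the number of strongly connected components, then take a further power to obtain all loops, then a power by the vertex count to obtain all arcs). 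Both arguments tacitly rely on bounded strict width, an assumption not listed in \Cref{def:impl} but present in every use of \Cref{lemma:kcomplete}; you make this dependence explicit.
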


\begin{proof}
We will construct the desired complete implication as a conjunction of a power of $\phi$ and $I_\ell$, where $\ell$ is the number of variables of the power of $\phi$. Note that for every $n\geq 1$, the number of variables of $\phi^{\circ n}$ is at most $2k$ by~\Cref{def:composition} and this number never decreases with increasing $n$ by item (1) in~\Cref{observation:impl_properties}. 
Hence, there is $n\geq 1$ such that the number of variables of $\phi^{\circ n}$ is the biggest among all choices of $n$. Let us denote the number of variables of $\phi^{\circ n}$ by $\ell$; it follows that the number of variables of $(\phi^{\circ n})^{\circ m}$ is equal to $\ell$ for every $m\geq 1$. Let us replace $\phi$ by $\phi^{\circ n}$. It follows from~\Cref{lemma:kimpl} that for every $m\geq 1$, $\phi^{\circ m}\wedge I_\ell$ is an injective $(C,C)$-implication. Now, it follows by item (3) in \Cref{observation:impl_properties} that there exists a unique bijection $\sigma\colon I_\phi\rightarrow I_\phi$ such that $u_i=v_{\sigma(i)}$ for every $i\in I_\phi$.
Replacing $\phi$ with a power of $\phi$ again, we can assume that $\sigma$ is the identity.

Now, we can find $m\geq 1$ such that the number of strongly connected components  of $\phi^{\circ m}$ is maximal among all possible choices of $m$. It follows that composing $\phi^{\circ m}$ with itself arbitrarily many times does not disconnect any vertices from $\mathbf{Vert}(C)$ which are contained in the same strongly connected component of $\mathcal{B}_{\phi^{\circ m}}$; we replace $\phi$ by $\phi^{\circ m}$. Taking another power of $\phi$ and replacing $\phi$ again, we can assume that every strongly connected component of $\mathcal B_\phi$ contains all loops. Now, setting $p$ to be the number of vertices of $\mathcal B_\phi$, we have that, replacing $\phi$ with $\phi^{\circ p}$, every strongly connected component of $\mathcal B_\phi$ contains all arcs, whence $\phi\wedge I_\ell$ is a complete injective $(C,C)$-implication.
\end{proof}

\subsection{Proof of~\Cref{thm:implsimple}}

\implsimple*

\begin{proof}
Let $\sA$ be a first-order expansion of $\sB$ with bounded strict width. Striving for contradiction, 
suppose that $\sA$ is implicationally hard on injective instances. Then the injective implication graph $\mathcal{G}_{\sA}^{\injinstances}$ contains a directed cycle $(D_1,C_1),\ldots,(D_{n-1},C_{n-1}),(D_n,C_n)=(D_1,C_1)$. This means that for all $i\in[n-1]$, there exists an injective $(C_i,\tuple u^i,C_{i+1},\tuple u^{i+1})$-implication $\phi_i$ in $\sA$ with $\proj_{\tuple u^i}(\phi_i^{\sA})=D_i$, and $\proj_{\tuple u^{i+1}}(\phi_i^{\sA})=D_{i+1}$.

Let us define $\phi:=((\phi_1\circ \phi_2)\circ\ldots\circ \phi_{n-1})$. Restricting $\phi^{\sA}$ to injective mappings, we obtain an injective $(C_1,\tuple u^1,C_1,\tuple u^n)$-implication by~\Cref{lemma:kimpl}. \Cref{lemma:kcomplete} asserts us that there exists a complete injective $(C_1,C_1)$-implication $\psi$ in $\sA$.

\Cref{observation:sinksource} yields that there exist $C\subseteq C_1$, and $D\subseteq D_1\backslash C_1$ such that $\mathbf{Vert}(C)$ is a strongly connected component which is a sink in $\mathcal{B}_{\psi}$, and $\mathbf{Vert}(D)$ is a strongly connected component which is a source in $\mathcal{B}_{\psi}$. 
Observe that since $\sA$ is a first-order expansion of $\sB$ and since $\psi$ is complete, $C$ is pp-definable from $\sA$. Indeed, for any fixed orbit $O\subseteq C$ under $\Aut(\sB)$, $C$ is equal to the set of all orbits $P\subseteq C_1$ such that $\psi^{\sA}$ contains an $O P$-mapping. We can observe in a similar way that $D$ is pp-definable from $\sA$ as well. 
Moreover, $\psi$ is easily seen to be a complete $(C,C)$-implication in $\sA$.

Since $\sB$ has finite duality, there exists a number $d\geq 3$ such that for every finite structure $\sX$ in the signature of $\sB$, it holds that if every substructure of $\sX$ of size at most $d-2$ maps homomorphically to $\sB$, then so does $\sX$. 
Set $\rho:=\psi^{\circ d}$. Let $V$ be the set of variables of $\rho$. It follows from~\Cref{lemma:kimpl} that $\rho$ is
a $(C,\tuple u,C,\tuple v)$-implication in $\sA$ for some $\tuple u,\tuple v$.
We are going to prove the following claim using the finite duality of $\sB$ and the completeness of $\psi$.

\begin{claim}\label{claim:finduality}
    Every $f\in A^{V}$ with $f(\tuple u)\in C$ and $f(\tuple v)\in C$ is an element of $\rho^{\sA}$. The same holds for any  $f\in A^{V}$ with $f(\tuple u) \in D$ and $f(\tuple v)\in D$.
\end{claim}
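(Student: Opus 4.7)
The plan is to deduce the claim from the finite duality of $\sB$ combined with the completeness of $\psi$. I unfold $\rho = \psi^{\circ d}$ as $\exists \vec y.(\psi_1 \wedge \cdots \wedge \psi_d)$, where each $\psi_i$ is a renamed copy of $\psi$ joining the tuples $\tuple u^{(i-1)}$ and $\tuple u^{(i)}$, with $\tuple u^{(0)} = \tuple u$ and $\tuple u^{(d)} = \tuple v$. Because $\psi$ is complete and $u_i = v_i$ for every $i \in I_\psi$, the total set of variables of the matrix splits into a \emph{spine} -- the $|I_\psi|$ coordinates that coincide across every $\tuple u^{(i)}$ -- plus $d+1$ pairwise disjoint \emph{rungs} of non-spine variables, one for each tuple $\tuple u^{(i)}$; the spine and the rungs at layers $0$ and $d$ are already prescribed by $f$. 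To show $f \in \rho^{\sA}$ I must produce a consistent assignment on the interior rungs that satisfies every atomic conjunct of every $\psi_i$.

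I recast this as a finite-duality question. Since $\sA$ is a first-order expansion of $\sB$ and $\sB$ is $k$-neoliberal, every $\sA$-atom in the matrix amounts to a choice of an $\Aut(\sB)$-orbit for the corresponding $k$-tuple of variables. Packaging all such orbit-choices together, the assignments extending $f$ and satisfying the matrix are in bijection with homomorphisms into $\sB$ of a suitable $\tau$-structure $\sX$ built on the unfolded variable set with the boundary values of $f$ built in. By the choice of $d$ it then suffices to exhibit, for every induced substructure $\sY \subseteq \sX$ of size at most $d-2$, a homomorphism $\sY \to \sB$.

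Given such an $\sY$, its at most $d-2$ vertices touch at most $d-2$ of the $d+1$ rungs, so some interior rung $\tuple u^{(i^\ast)}$ with $0 < i^\ast < d$ is disjoint from $\sY$. Since rung $i^\ast$ is absent, every atomic constraint of $\sY$ inherited from $\psi_{i^\ast}$ or $\psi_{i^\ast+1}$ that references a rung-$i^\ast$ variable simply drops out, and the remaining constraints decompose into a \emph{lower half} (from $\psi_1, \ldots, \psi_{i^\ast}$) and an \emph{upper half} (from $\psi_{i^\ast+1}, \ldots, \psi_d$) that interact only through the already-fixed spine. On the lower half I construct the required partial assignment by stepping rung by rung from $0$ through $i^\ast - 1$: completeness of $\psi$ yields that $\mathbf{Vert}(C)$ is a strongly connected component of $\mathcal{B}_\psi$ containing every loop and every arc, so at each step a suitable $OP$-mapping in $\psi^{\sA}$ is available; $k$-homogeneity together with the vanishing of $k$-algebraicity of $\sB$ allow me to realize the specific boundary values prescribed by $f$ (not merely their orbits); and the identity $\Aut(\sA) = \Aut(\sB)$, which holds because $\sA$ is a first-order expansion of $\sB$, lets me align the mapping chosen at each step with the one chosen previously by composing with an automorphism. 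The upper half is handled symmetrically from rung $d$ downward, and the $D$-case is identical because $\mathbf{Vert}(D)$ is likewise a strongly connected component of $\mathcal{B}_\psi$ containing all loops and arcs by completeness of $\psi$, and only this local information enters the argument. The chief technical obstacle I expect is the bookkeeping in defining $\sX$ so that the reformulation as a homomorphism question into $\sB$ is faithful, and in verifying that the sequential alignment on each half is globally consistent with the fixed spine; the gap at rung $i^\ast$ is precisely what guarantees the two halves may be built independently of each other.
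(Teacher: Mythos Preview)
Your plan is correct and matches the paper's strategy: unfold $\rho=\psi^{\circ d}$ into copies $\psi_1,\dots,\psi_d$ sharing the spine $\{u_i:i\in I_\psi\}$, recast the extension problem as the existence of a homomorphism from a finite $\tau$-structure $\sX$ into $\sB$, and use finite duality together with the pigeonhole observation that any $\sY$ of size at most $d-2$ misses the non-spine part of some interior layer $i^\ast$. The paper executes the last step slightly differently: it first identifies $u_q$ with $v_p$ whenever $f(u_q)=f(v_p)$, so that the orbit data of $f$ on $\scope(\tuple u)\cup\scope(\tuple v)$ become a genuine ``wrap-around'' edge of a \emph{cyclic} structure $\sX$, and then builds the homomorphism for $\sY$ by threading the pre-chosen witnesses $f_{i^\ast+1},f_{i^\ast+2},\dots,f_d,f_0,f_1,\dots,f_{i^\ast-1}$ once around the cycle, aligning each with the previous via an automorphism. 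Your ``two halves anchored at the exact values $f(\tuple u)$ and $f(\tuple v)$'' is an equally valid organization of the same alignment argument; fixing the boundary layers to $f$ itself is what makes the wrap-around constraints automatic in your version. Two small points to tighten: an $\sA$-atom is in general a \emph{union} of $\Aut(\sB)$-orbits, not a single one, so $\sX$ only becomes well-defined after you fix concrete witnesses $f_j\in\psi_j^{\sA}$ (as the paper does with its $OP$- and $PP$-mappings), and then your stepwise extension must use these specific $f_j$, not an arbitrary $OP$-mapping; and the absence of $k$-algebraicity is not actually needed here---transitivity on each orbit already lets you align $f_j(\tuple w^{j-1})$ with the previously constructed values by an automorphism.
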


To prove~\Cref{claim:finduality}, let $f\in A^V$ be as in the statement of the claim. 
Up to renaming variables, we can assume that $\psi$ is a $(C,\tuple u,C,\tuple v)$-implication. 
Completeness of $\psi$ implies that $I_\psi=\{i\in[k]\mid u_i\in\scope(\tuple v)\}=\{i\in[k]\mid v_i\in\scope(\tuple u)\}$, and that $u_i=v_i$ for every $i\in I_\psi$. Let $\tuple w^0,\ldots,\tuple w^d$ be $k$-tuples of variables such that $w^j_i=u_i=v_i$ for all $i\in I_\psi$ and all $j\in\{0,\ldots,d\}$, and disjoint otherwise. We can moreover assume that $\tuple w^0=\tuple u$, $\tuple w^d=\tuple v$. 
For every $j\in\{0,\ldots,d-1\}$, let $\psi_{j+1}$ be the $(C,\tuple w^j,C,\tuple w^{j+1})$-implication obtained from $\psi$ by renaming $\tuple u$ to $\tuple w^j$ and $\tuple v$ to $\tuple w^{j+1}$. 
It follows that $\rho$ is equivalent to the formula obtained from $\psi_1\wedge\dots\wedge \psi_d$ by existentially quantifying all variables that are not contained in $\scope(\tuple u)\cup\scope(\tuple v)$. 
In order to proof~\Cref{claim:finduality}, it is therefore enough to show that $f$ can be extended to a mapping $h\in (\psi_1\wedge\dots\wedge\psi_d)^{\sA}$.

Now, for all $p,q\in[k]$, we identify the variables $w^0_{q}=u_q$ and $w^{d}_{p}=v_p$ if $f(v_{p}) = f(u_{q})$. Observe that $f|_{\scope(\tuple u)\cap \scope(\tuple v)}$ is injective since $f(\tuple u)\in C\subseteq I^A_k$, and hence this identification does not force any variables from $\scope(\tuple u)\cap\scope (\tuple v)$ to be equal. Moreover, since $d\geq 2$, this identification does not identify any variables from the tuples $\tuple w^{d-1}$ and $\tuple w^d$. Let us define $f_0:=f|_{\scope(\tuple w^0)\cup \scope(\tuple w^d)}$. 
It is enough to show that $f_0$ can be extended to a mapping $h\in (\psi_1\wedge \dots\wedge \psi_d)^{\sA}$.

Let $O$ be the orbit of $f(\tuple u)$ under $\Aut(\sB)$, and let $P$ be the orbit of $f(\tuple v)$. Let $f_1\in \psi_1^{\sA}$ be an injective $O P$-mapping, and for every $j\in\{2,\ldots,d\}$, let $f_j\in \psi_j^{\sA}$ be an injective $P P$-mapping. 
Note that such $f_j$ exists for every $j\in[d]$ since $\psi_j$ is complete and since $\mathbf{Vert}(C)$ and $\mathbf{Vert}(D)$ are strongly connected components in $\mathcal{B}_{\psi_j}$. 

Let $\tau$ be the signature of $\sB$. Let $X:=\scope(\tuple w^0)\cup\cdots\cup \scope(\tuple w^d)$, and let us define a $\tau$-structure $\sX$ on $X$ as follows. Recall that the relations from $\tau$ correspond to the orbits of injective $k$-tuples under $\Aut(\sB)$. For every relation $R\in\tau$, we define $R^{\sX}$ to be the set of all tuples $\tuple w\in I^X_k$ such that there exists $j\in \{0,\ldots,d\}$ such that $\scope(\tuple w)\subseteq \scope(\tuple w^j)\cup\scope(\tuple w^{j+1})$ and $f_j(\tuple w)\in R$; here and in the following, the addition $+$ on indices is understood modulo $d+1$. We will show that $\sX$ has a homomorphism $h$ to $\sB$. If this is the case, it follows by the construction and by the $k$-homogeneity of $\Aut(\sB)$ that $h\in (\psi_1\wedge\dots\wedge\psi_d)^{\sA}$. Moreover, we can assume that $h|_{\scope(\tuple w^0)\cup\scope(\tuple w^d)}=f_0$ as desired.

Let now $Y\subseteq X$ be of size at most $d-2$. 
Then for cardinality reasons there exists $j\in[d-1]$ such that $Y\subseteq \scope(\tuple w^0)\cup\cdots\cup \scope(\tuple w^{j-1})\cup \scope(\tuple w^{j+1})\cup\cdots\cup \scope(\tuple w^{d})$. 
Observe that $f_{j+1}$ is a homomorphism from the induced substructure of $\sX$ on $\scope(\tuple w^{j+1})\cup \scope(\tuple w^{j+2})$ to $\sA$. 
Since the orbits of $f_{j+1}(\tuple w^{j+2})$ and $f_{j+2}(\tuple w^{j+2})$ agree by definition, by composing $f_{j+2}$ with an element of $\Aut(\sB)$ we can assume that $f_{j+1}(\tuple w^{j+2})=f_{j+2}(\tuple w^{j+2})$. 
We can proceed inductively and extend $f_{j+1}$ 
to $\scope(\tuple w^0)\cup\cdots\cup \scope(\tuple w^{j-1})\cup \scope(\tuple w^{j+1})\cup\cdots\cup \scope(\tuple w^{d})$ such that it is a homomorphism from the induced substructure of $\sX$ on this set to $\sA$. It follows that the substructure of $\sX$ induced on $Y$ maps homomorphically to $\sA$. Finite duality of $\sB$ yields that $\sX$ has a homomorphism to $\sA$ as desired and \Cref{claim:finduality} follows.
\medskip

Assume without loss of generality that $1\notin I_\rho$, and identify $u_i$ with $v_i$ for every $i\neq 1$. Note that this is possible by item (3) in \Cref{observation:impl_properties}, and since $\rho$ is easily seen to be complete.
Set $W:=\scope(\tuple u)\cup\scope(\tuple v)$, and let $\rho'$ be the formula arising from $\rho$ by this identification.
We will argue that $\rho'$ is critical in $\sA$ over $(C,D,\tuple u,\tuple v)$.
To this end, let us first show that $\rho'$ is a $(C,\tuple u,C,\tuple v)$-implication in $\sA$. Observe that for every orbit $O\subseteq C$ under $\Aut(\sB)$, $(\rho')^{\sA}$ contains an injective $OO$-mapping. Indeed, there exists an injective $g\in A^W$ such that $g(\tuple u)\in O$ and $g(\tuple v)\in O$; this easily follows by the $(k-1)$-transitivity of $\Aut(\sB)$ and by the fact that it has no $k$-algebraicity. Forgetting the identification of variables, we can understand $g$ as an element of $A^V$, and~\Cref{claim:finduality} yields that $g\in\rho^{\sA}$, whence $g\in (\rho')^{\sA}$. 
Now, it immediately follows that $\rho'$ satisfies the items (1)-(3) and (5) from \Cref{def:implication}. Moreover, the satisfaction of item (4) follows immediately from the fact that $\rho$ is a $(C,\tuple u,C,\tuple v)$-implication in $\sA$.

It remains to verify the last three items of \Cref{def:kcritical}. Observe similarly as above that for any orbit $O\subseteq D$ under $\Aut(\sB)$, $(\rho')^{\sA}$ contains an $OO$-mapping, which immediately yields that $D$ is contained both in $\proj_{\tuple u}((\rho')^{\sA})$ and in $\proj_{\tuple v}((\rho')^{\sA})$, it also yields that for every $\tuple a\in D$, there exists
$f\in(\rho')^{\sA}$ such that $f(\tuple u)\in D$ and $f(\tuple v)= \tuple a$. Hence, $\rho'$ is indeed critical in $\sA$ over $(C,D,\tuple u,\tuple v)$, contradicting \Cref{lemma:kcritical}.
\end{proof}

\kmain*

\begin{proof}
    Let $\sA$ be a first-order expansion of $\sB$ with bounded strict width.
    By~\Cref{cor:libcores-purelyinj}, it is enough to prove that $\Csp_{\injinstances}(\sA)$ has relational width $(k,\max(k+1,b_{\sB}))$.
    \Cref{thm:implsimple} yields that $\sA$ is implicationally simple on injective instances and the result follows from~\Cref{prop:implsimple}.
\end{proof}

\bibliographystyle{plainurl}
\bibliography{main}
\end{document}